\documentclass[11pt,letterpaper]{article}
\usepackage[margin=1in]{geometry}
\usepackage{amsmath,amsthm,amssymb,amsfonts,amscd,amstext}
\usepackage{graphicx} 
\usepackage[table]{xcolor} 
\usepackage{mathtools}
\usepackage{url} 
\usepackage{authblk} 
\usepackage{algorithm} 
\usepackage{algpseudocode}
\usepackage{natbib} 
\usepackage{multirow}
\usepackage[font=scriptsize]{caption}
\usepackage[font=scriptsize]{subcaption}
\usepackage{enumitem}
\usepackage[htt]{hyphenat} 
\usepackage{longtable}
\usepackage{thmtools,thm-restate}
\usepackage{setspace}
\usepackage{textcomp}
\usepackage[colorinlistoftodos]{todonotes}
\usepackage{booktabs}
\usepackage{tabularx}
\usepackage{array}

\newcounter{savealgorithm}

\DeclareMathOperator*{\argmax}{arg\,max}

\makeatletter
\newcommand{\algmargin}{\the\ALG@thistlm}
\makeatother

\algrenewcommand\algorithmicrequire{\textbf{Input:}}
\algrenewcommand\algorithmicensure{\textbf{Output:}}

\algnewcommand{\parState}[1]{\State%
    \parbox[t]{\dimexpr\linewidth-\algmargin}{\strut\hangindent=\algorithmicindent \hangafter=1 #1\strut}}

\theoremstyle{plain}
\newtheorem{Theorem}{Theorem}

\newtheorem{Condition}{Condition}

\newtheorem{Definition}{Definition}
\newtheorem{Proposition}{Proposition}
\newtheorem{Lemma}{Lemma}

\theoremstyle{remark}

\newcommand{\sign}{\operatorname{sign}}

\begin{document}

\title{Response-guided knockoffs for directional FDR control in linear models}

\author[1*]{Jack Freestone}
\author[2]{Garth Tarr}
\author[1]{Samuel Muller}
\author[2]{Uri Keich}
\affil[1]{School of Mathematical and Physical Sciences, Macquarie University}
\affil[2]{School of Mathematics and Statistics, University of Sydney}

\date{}

\maketitle
 

\thispagestyle{empty}

\begin{abstract}
We consider the problem of feature selection in linear models with finite-sample control of the false discovery rate (FDR). 
While existing knockoff-based methods control the directional FDR, which penalises incorrect sign estimates, they do not target discoveries in a pre-specified direction, and their knockoff constructions are entirely response-agnostic. 
We introduce the \emph{response-guided knockoff filter}, which leverages a noise-perturbed version of the response to guide knockoff construction toward features likely to have the target sign, while provably controlling the directional FDR. 
The method operates under a weaker sample-size requirement $n > p + 2$, compared to $n \geq 2p$ required by existing fixed-X generators. 
Simulations and HIV drug resistance experiments demonstrate power gains over existing methods.
\end{abstract}

\clearpage
\pagenumbering{arabic}
\section{Introduction}
\label{section:introduction}
In many scientific applications, the goal of feature selection is inherently one-sided.
For example, when studying HIV drug resistance, investigators may seek genetic mutations that \emph{increase} resistance to a particular drug; mutations that decrease resistance or have no effect are not of primary interest.
Similarly, in genomics one may wish to identify genes that are upregulated under a specific treatment.
The goal in such settings is not simply to identify which features are relevant to a response, but to select those whose effects act in a pre-specified direction $d \in \{\pm 1\}$, while controlling the rate of false discoveries among these directional claims.

We work in the classical linear model, where we observe $n$ data points from
\begin{equation}
    \label{eq:linear_model}
    Y = \sum_{j = 1}^p X_j \beta_j + \varepsilon,
\end{equation}
with unknown coefficients $\boldsymbol{\beta} \equiv (\beta_1, \dots, \beta_p)^\top \in \mathbb{R}^p$ and $\varepsilon \sim N(0, \sigma^2)$.
A feature $j$ is \emph{relevant} when $\beta_j \neq 0$, i.e., $j \in \mathcal H_1 \equiv \{j \in [p] : \beta_j \neq 0\}$ where $[p] \equiv \{1, \ldots, p\}$.
A standard goal is to control the \emph{false discovery rate} (FDR)~\citep{benjamini:controlling} at a user-specified level $\alpha \in \left(0, 1\right)$:
\begin{align}
    \label{eq:fdr}
    \text{FDR} \equiv \mathbb{E}\left( \text{FDP} \right) \leq \alpha, \quad \text{FDP} \equiv \frac{ \lvert \mathcal H_0 \cap \mathcal {\hat S} \rvert }{\lvert \mathcal {\hat S} \rvert \vee 1},
\end{align}
where $\mathcal {\hat S} \subseteq [p]$ is the set of selected features, $\mathcal H_0 \equiv \{ j \in [p] : \beta_j = 0 \}$ is the set of irrelevant features, and $x \vee y \equiv \max\{ x, y\}$.

The knockoff filter of \cite{barber:controlling} controls the FDR in finite samples by constructing artificial negative controls, called \emph{knockoffs}, that mimic the correlation structure of the original features.
In their follow up work, \cite{barber2019knockoff} further showed that the knockoff filter controls a stronger error metric called \emph{directional FDR}, which penalises both irrelevant discoveries and incorrect sign estimates:
\begin{align}
   \label{eq:fdr_sign}
    \text{FDR}_{\text{dir}} \equiv \mathbb{E}\left( \text{FDP}_{\text{dir}} \right) \leq \alpha, \quad \text{FDP}_{\text{dir}} \equiv \frac{ \lvert \{ j \in \mathcal {\hat S} : \widehat{\sign(\beta_{j})} \neq \sign\left( \beta_j \right) \} \rvert }{\lvert \mathcal {\hat S} \rvert \vee 1},
\end{align}
where $\widehat{\sign(\beta_{j})}$ is an estimate of the sign of $\beta_j$.
However, this guarantee treats positive and negative effects symmetrically---it does not allow the analyst to target a specific direction of interest.

To address this, we restrict attention to a pre-specified target direction $d \in \{\pm 1\}$ and select only features whose estimated sign matches $d$.
In this setting, every feature is assigned sign $d$, so the directional FDR in~\eqref{eq:fdr_sign} reduces to
\begin{align}
    \label{eq:fdr_dir_d}
    \mathbb{E} \left( \frac{\lvert \mathcal H_0^d  \cap \mathcal{\hat S} \rvert}{\lvert \mathcal{\hat S} \rvert \vee 1} \right), \quad \mathcal H_0^d \equiv \{ j \in [p] : \sign(\beta_j) \neq d \},
\end{align}
which penalises both irrelevant features and features whose true sign does not match $d$.
Specifically, we first develop a \emph{signed-knockoff filter} that selects features in a target direction $d$ with provable control of~\eqref{eq:fdr_dir_d}.
Like existing knockoff constructions, however, this filter is response-agnostic: it allocates resources uniformly across all features regardless of each feature's likely effect direction. This raises a natural question of whether power can be improved by incorporating response information into the knockoff construction itself when the scientific question is one-sided.
To this end, we introduce the \textit{response-guided knockoff filter}, which uses a noise-perturbed response to focus knockoff construction on features whose effects are likely to match the target sign.
Because the response informs both the construction and the inference, the knockoff construction needs to be adjusted. 
Specifically, we impose an additional constraint ensuring that null features in $\mathcal H_0^d$ and their knockoffs remain indistinguishable with respect to the perturbed response and intercept term, so
that the initial peek at the data does not give original features an unfair advantage in the inference step.
We prove that the new procedure guarantees finite-sample FDR control under a milder sample-size requirement $n > p + 2$, compared to $n \geq 2p$ required by existing fixed-$X$ generators. 
The result is substantially higher power for discoveries in the direction of scientific interest in our simulations and real-data experiments.

\subsection{Review of Fixed-X Knockoff filter}
\label{section:overview_FX}
Suppose we have $n$ observations $\{(X_{i1}, \dots, X_{ip}, Y_i): i \in [n] \}$ from the linear model in (\ref{eq:linear_model}), with independent error terms.
Denote $\mathbf y \equiv (Y_1, \dots, Y_n)^\top \in \mathbb R^n$ as the response vector and $\mathbf X \equiv (X_{ij}) \in \mathbb{R}^{n \times p}$ as the fixed design matrix.
Throughout we assume that the columns of $\mathbf X$, denoted as $\mathbf X_j$, are normalised so that $\|\mathbf X_j\|_2 = 1$
for all $j \in [p]$.

The \textit{fixed-X} knockoff filter begins by constructing a knockoff design matrix $\tilde{\mathbf X}$
that satisfies the key requirement that each column $\mathbf X_j$, and its corresponding knockoff, denoted as $\tilde{\mathbf X}_j$,
share the same correlation to all other columns: originals and knockoffs:
\begin{align}\label{eq:corr_structure}
    \mathbf X_i^{\!\top} \mathbf X_j = \tilde{\mathbf X}_i^{\!\top} \mathbf X_j = \tilde{\mathbf X}_i^{\!\top} \tilde{\mathbf X}_j \quad \forall i \neq j \quad i, j \in \left[ p \right].
\end{align}
This construction guarantees that for $j$ such that $\beta_j = 0$, $\mathbf X_j$ and $\tilde{\mathbf X}_j$ will, in probability,
do equally well at explaining the variance in $\mathbf y$, since both features share the same correlations with all other columns.

One can trivially construct such knockoffs by taking $\tilde{\mathbf X} = \mathbf X$. However, this would be counterproductive
because the procedure is predicated on the premise that for $j$ with $\beta_j \neq 0$, $\mathbf X_j$ directly contributes
to $\mathbf y$, and therefore we expect this column to do better at explaining the variance in $\mathbf y$ than $\tilde{\mathbf X}_j$.
How much better is determined by how distinguishable $\tilde{\mathbf X}_j$ is from $\mathbf X_j$, which is governed by the
\textit{separation parameter} $s_j\in[0,1]$ in the inner product
\begin{align}\label{eq:corr_away}
    \mathbf X_j^{\!\top} \tilde{\mathbf X}_j = 1 - s_j \geq 0 \quad j \in \left[ p \right].
\end{align}
\cite{barber:controlling} showed that we can construct knockoffs that satisfy \eqref{eq:corr_structure} and \eqref{eq:corr_away} if and only if
the following condition holds.
\begin{Condition}[Feasibility]
\label{ass:gram}
There exist separation parameters $\mathbf s = (s_j)_{j=1}^p \in [0,1]^p$ such that with $\Sigma \equiv {\mathbf X}^\top \mathbf X$
\begin{equation}
	\label{def:G}
	\mathbf G  \equiv \begin{bmatrix}
        \mathbf \Sigma & \mathbf\Sigma - \operatorname{diag}\{\mathbf s\} \\
        \mathbf\Sigma - \operatorname{diag}\{\mathbf s\} & \mathbf\Sigma 
\end{bmatrix} \succeq 0 .
\end{equation}
\end{Condition}
In practice, we typically first determine $\mathbf s\in [0,1]^p$ so that $\mathbf G\succeq0$ (which is equivalent to
$2 \mathbf \Sigma - \operatorname{diag}\{\mathbf s\} \succeq 0$), and then construct $\tilde{\mathbf X}$ so that with
$\hat{\mathbf X} \equiv [\mathbf X, \tilde{\mathbf X}]$, $\mathbf G  = \hat{\mathbf X}^\top \hat{\mathbf X}$ \citep{barber:controlling}.

The separation parameters are chosen by solving one of several constrained optimisation problems,
each trying to maximize the separation in some sense, while ensuring that Condition~\ref{ass:gram} is
satisfied~\citep{barber:controlling,gimenez:improving,spector2022powerful}.
The simplest example is the equi-correlated knockoff construction, which sets $s_j \equiv s$ for all $j$ and maximises $s$ subject to Condition~\ref{ass:gram}.


Once the knockoffs have been constructed, the features are assigned statistics $\mathbf W \equiv \mathbf W([\mathbf X, \tilde{\mathbf X} ], \mathbf y) \in \mathbb{R}^p$ where large positive statistics are used as evidence that the corresponding feature is relevant to the model. 
$\mathbf W$ needs to satisfy the following properties.
\begin{Condition}[Sufficiency]
    \label{ass:suff}
$\mathbf W$ depends on $\mathbf X, \tilde{\mathbf X}, \mathbf y$ only through $\hat{\mathbf X}^{\!\top}\mathbf y$ and $\hat{\mathbf X}^{\!\top}\hat{\mathbf X}$.
\end{Condition}

\begin{Condition}[Anti-symmetry]
    \label{ass:sym}
Swapping a subset of the original features for their knockoffs only changes the sign of the score:
	\begin{equation}\label{eq:anti-symmetry}
		W_{\Pi(j)}(\hat{\mathbf X} \circ \Pi, \mathbf y) = W_j(\hat{\mathbf X}, \mathbf y) \cdot \begin{cases}
            +1  \quad \Pi(j) = j\\
            -1 \quad \Pi(j) = j + p \mod 2p
         \end{cases} \\
	\end{equation}
where $\Pi$ is a permutation on the columns of the extended design matrix $\hat{\mathbf X}$ that swaps a subset of the original features for their corresponding knockoffs, i.e., $\forall j$, $\Pi(j) \in \{j, j + p \mod 2p\}$.
\end{Condition}


Under Conditions~\ref{ass:suff} and \ref{ass:sym}, it can be shown that, conditioned on the magnitude of the scores, $\left| \mathbf{W} \right|$, and on the signs of $\{W_i : i \in \mathcal H_1\}$, the (non-vanishing) signs of the scores corresponding to $\mathcal H_0$ are \textit{i.i.d} uniform $\pm 1$ random variables.
Hence, using this fact, the number of features that score lower than $-t<0$ can be used to conservatively estimate the number of irrelevant features that score higher than $t>0$, and therefore estimate an upper bound for the false discovery proportion as follows
\begin{align}
    \label{eq:fdr_estimate}
    \widehat{\text{FDR}}(t) \equiv \frac{\#\{W_j \leq -t \} + 1}{ \#\{W_j \geq t \} \vee 1 } &> \frac{\#\{W_j \leq -t : j \in \mathcal H_0 \}}{ \#\{W_j \geq t \} \vee 1 }\\
    & \approx  \frac{\#\{W_j \geq t : j \in \mathcal H_0 \}}{ \#\{W_j \geq t \} \vee 1 } = \text{FDP}(t),
\end{align}
where the ``$+1$'' included in the numerator of $\widehat{\text{FDR}}(t)$ is sufficient for theoretical finite-sample FDR control~\citep{barber:controlling} (and in general, necessary \citep{rajchert:controlling}).
Given a prespecified threshold $\alpha$, the knockoff filter reports the following subset of features as relevant:
\begin{align}
    \label{discovery}
    \mathcal{\hat{S}}_{\text{KF}} \equiv \mathcal{\hat{S}}_{\text{KF}}(\alpha) \equiv \{j \in [p] : W_j \geq \tau(\alpha) \}, \quad \tau \equiv \tau (\alpha) \equiv \inf \{t > 0 : \widehat{\text{FDR}}(t) \leq \alpha \}.
\end{align}
This last step, where the statistics are converted to a discovery list, i.e., $\mathbf W \mapsto \mathcal{\hat{S}}_{\text{KF}}$, is the FDR controlling procedure called Selective SeqStep+, abbreviated as SSS+.
\cite{barber:controlling} presented SSS+ (Algorithm~\ref{alg:SeqStep}) as a more general procedure for multiple hypothesis testing which is applied as part of the knockoff filter with $c=1/2$, where $c \in (0,1)$ is any constant such that $\mathbb{P}(W_j > 0) \leq c$ for all $j \in \mathcal H_0$.
It was shown that under the linear model assumption, reporting the features in $\mathcal{\hat{S}}_{\text{KF}}$ under Conditions~\ref{ass:gram}--\ref{ass:sym} controls the FDR at level $\alpha$.

\begin{algorithm}[H]
	\caption{ {\bf Selective SeqStep+ (SSS+)} \citep{barber:controlling}}
	\label{alg:SeqStep}
	\begin{algorithmic}[1]
		\Require  \begin{tabular}[t]{p{0.8\textwidth}}
		$\mathbf W \equiv (W_i)_{i = 1}^m$ the list of statistics; $c \in (0, 1)$ such that $\mathbb{P}(W_j > 0 \mid j \in \mathcal H_0) \leq c$; $\alpha \in (0, 1)$ the FDR threshold
		\end{tabular}
		\Ensure A discovery list $\mathcal{\hat S}_{\text{KF}}$
			\State $A_t \gets \#\{i \in [m] : W_i \leq -t \}$ 
			\State $R_t \gets \#\{i \in [m] : W_i \geq t \}$ 
    		\State $\tau \gets \inf \{t > 0:  \frac{A_t + 1} {R_t \vee 1} \cdot \frac{c}{1 - c} \leq \alpha \}$	
		\State \Return $\mathcal{\hat S}_{\text{KF}} \gets \{j \in [p] : W_j \geq \tau \}$
		\end{algorithmic}
\end{algorithm}

The remainder of this paper is organized as follows.
Section~\ref{section:methods} shows why naive adaptations of the knockoff filter fail to control the directional FDR when targeting discoveries in a specific direction, and then develops two principled methods: (i) the \emph{signed-knockoff filter}, a modification of the knockoff filter that selects features in a target direction with directional FDR control, and (ii) the \emph{response-guided knockoff filter}, which is designed to increase power by incorporating response information into the knockoff construction. 
Section~\ref{section:results} reports simulation and real-data experiments.
Finally, Section~\ref{section:discussion} interprets the findings, discusses limitations of our methods, and highlights promising extensions for further study.

\section{Methods}
\label{section:methods}

\subsection{Naive approaches to FDR control in a targeted direction}
\label{section:naive-approaches}
\citet{barber2019knockoff} showed that the knockoff filter controls the directional FDR~\eqref{eq:fdr_sign} when each discovery~$j$ is assigned its own sign estimate
\begin{equation}
	\label{eq:bc_sign_est}
\widehat{\sign(\beta_j)} \equiv \sign\bigl((\mathbf X_j - \widetilde{\mathbf X}_j)^{\!\top}\mathbf y\bigr).
\end{equation}
That result, however, does not address the setting where only discoveries with a pre-specified sign $d\in\{\pm 1\}$ are of interest.
We show in Figure~\ref{fig:naive_approaches} that, the unmodified knockoff filter substantially violates the pre-specified directional FDR control in~\eqref{eq:fdr_dir_d}. This is expected: this filter is agnostic to the sign of the estimated coefficients.
To further motivate our new procedures, we first show that two natural adaptations of standard knockoff-based strategies, that do take the estimated
sign into account still fail to control the directional FDR. 

\subsubsection{Post-hoc sign screening}
A natural refinement of \citet{barber2019knockoff}'s directional-FDR approach is to start with the same list of discoveries $\mathcal{\hat S}_{\text{KF}}$ that the standard knockoff filter produces and then discard every discovery whose estimated sign \eqref{eq:bc_sign_est} differs from $d$, i.e., report
\[
\mathcal{\hat S}_{\text{post}} \equiv \bigl\{j \in \mathcal{\hat S}_{\text{KF}} : \widehat{\sign(\beta_j)} = \sign\bigl((\mathbf X_j - \widetilde{\mathbf X}_j)^{\!\top}\mathbf y\bigr) = d\bigr\}.
\]
While this removes some wrong-sign discoveries, the threshold $\tau$ in (\ref{discovery}) is computed using the original, unscreened statistics.
The post-hoc removal reduces the denominator $\lvert \mathcal{\hat S} \rvert$ without a corresponding correction to $\widehat{\text{FDR}}(t)$, so the effective false discovery rate can still be inflated (see Figure~\ref{fig:naive_approaches}).

\subsubsection{Pre-screening by estimated sign}
Another strategy restricts the input to SSS+ to only those features with the desired estimated sign.
Specifically, define $\mathcal J_{\text{pre}} \equiv \{j \in [p] : \widehat{\sign(\beta_j)} = d \}$
and apply SSS+ (Algorithm~\ref{alg:SeqStep}) to the screened scores $(W_j)_{j \in \mathcal J_{\text{pre}}}$ with $c = 1/2$.
However, SSS+ with $c = 1/2$ requires $\mathbb P(W_j > 0 \mid j \in \mathcal H_0^d) \leq 1/2$, and pre-screening violates this condition.
The reason is that $\widehat{\sign(\beta_j)}$ and $\sign(W_j)$ are positively associated: a feature with $\widehat{\sign(\beta_j)} = +1$ implies $\mathbf X_j^\top \mathbf y > \widetilde{\mathbf X}_j^\top \mathbf y$, and $W_j$ is more likely to be positive even for true null features.
Therefore, restricting to the set $\mathcal J_{\text{pre}}$ with $d = +1$ skews the distribution of $W_j$ positively, so the condition is violated.
Figure~\ref{fig:naive_approaches} confirms FDR control can be compromised with this approach.

\subsection{Signed-knockoff filter for directional FDR control}
\label{section:signed-knockoff-filter}

Our first procedure for controlling the directional FDR when targeting discoveries with sign $d$, the \textit{signed-knockoff filter}, consists of pre-processing the scores according to a sign estimate different from~\eqref{eq:bc_sign_est} and running the knockoff filter.
Using the sign estimate $\widehat{\sign(\beta_{j})}$ from \eqref{eq:bc_sign_est}, we define
\[
S_j = \widehat{\sign(\beta_{j})} \cdot \sign(W_j) = \sign\bigl((\mathbf X_{j}-\widetilde{\mathbf X}_{j})^{\!\top}\mathbf y\bigr) \cdot \sign\left(W_j \right).
\]
The quantity $S_j$ was originally introduced by \cite{barber2019knockoff} as a device in proving directional FDR control of the standard knockoff filter; here we repurpose it as a criterion for filtering discoveries by their estimated effect direction and apply SSS+ to the list of features with $\mathcal J \equiv \{ j \in [p] : S_j = d \}$ (Algorithm~\ref{alg:signed-kf}).

\begin{algorithm}[H]
	\caption{ {\bf Signed-knockoff filter}}
	\label{alg:signed-kf}
	\begin{algorithmic}[1]
		\Require  \begin{tabular}[t]{p{0.8\textwidth}}
		$\mathbf y \in \mathbb R^n$ the response vector; $\mathbf X \in \mathbb R^{n \times p}$ the design matrix with $n \geq 2p$;  $d \in \{\pm 1\}$ the target sign of interest; $\alpha \in (0, 1)$ the FDR threshold\\
		\end{tabular}
		\Ensure A discovery list $\mathcal{\hat S}_{\text{signed-KF}}$
        \State $\mathbf s \gets s(\mathbf X) \in [0,1]^p$ \Comment{$\mathbf s$ satisfies Condition~\ref{ass:gram}}
        \State $\tilde{\mathbf X} \gets g(\mathbf X, \mathbf s)$ \Comment{$\tilde{\mathbf X}$
			satisfies $\mathbf G  = {[\mathbf X, \tilde{\mathbf X}]}^\top [\mathbf X, \tilde{\mathbf X}]$, $g$ a knockoff generator}
        \State $\hat{\mathbf X} \gets [\mathbf X,  \tilde{\mathbf X}]$
        \State $\mathbf W \gets \mathbf W(\hat{\mathbf X}, \mathbf y)$ \Comment{$\mathbf W$ satisfies Conditions~\ref{ass:suff} and \ref{ass:sym}}
        \State $S_j \gets \sign\bigl((\mathbf X_{j}-\widetilde{\mathbf X}_{j})^{\!\top}\mathbf y\bigr) \cdot \sign\left(W_j \right)$ for $j = 1, \ldots, p$
        \State $\mathcal J \gets \{ j \in [p] : S_j = d \}$
        \State \Return $\mathcal{\hat S}_{\text{signed-KF}} \gets \text{SSS+}((W_j)_{j \in \mathcal J}, 1/2, \alpha)$ \Comment{Algorithm~\ref{alg:SeqStep}}
		\end{algorithmic}
\end{algorithm}

Why is this valid?
Recall that SSS+ applied with $c = 1/2$ requires $\mathbb{P}(W_j > 0 \mid j \in \mathcal H_0^d,\, S_j = d) \leq 1/2$.
Because $S_j$ is the product of the two sign factors,
$\widehat{\sign(\beta_{j})} = \sign\bigl((\mathbf X_{j}-\widetilde{\mathbf X}_{j})^{\!\top}\mathbf y\bigr)$
and $\sign(W_j)$, the event $\{S_j = d\}$ forces these two factors to agree when $d = +1$ and to disagree when $d = -1$.
In either case $\{S_j = d\}$ consists of exactly two sign configurations of $\bigl(\widehat{\sign(\beta_{j})},\, \sign(W_j)\bigr)$,
\[
\{S_j = +1\}:\ (+1,+1)\ \text{or}\ (-1,-1),
\qquad
\{S_j = -1\}:\ (+1,-1)\ \text{or}\ (-1,+1),
\]
and within each pair one configuration has $W_j > 0$ and the other has $W_j < 0$.

Consider $j \in \mathcal H_0^d$. The swap of $\mathbf X_j$ and $\widetilde{\mathbf X}_j$ always flips both $\sign(W_j)$ and $\widehat{\sign(\beta_{j})}$, leaving $S_j$ unchanged while interchanging the two configurations within each pair.
If $\beta_j = 0$, $\mathbf X_j$ and its knockoff $\widetilde{\mathbf X}_j$ are exchangeable, so this swap also preserves the joint distribution of $\bigl(\widehat{\sign(\beta_{j})},\, \sign(W_j)\bigr)$; the two configurations making up $\{S_j = d\}$ are then equally likely and $\mathbb{P}(W_j > 0 \mid j \in \mathcal H_0^d,\, S_j = d) = 1/2$ for either value of $d$.
Now suppose $\beta_j \neq 0$ and without loss of generality let $d = +1$, so that $\sign(\beta_j) \neq d$ gives $\beta_j < 0$. 
Then $\{S_j = d\}$ consists of the configuration $(+1,+1)$, in which $W_j > 0$ (a false positive that passes the filter), and $(-1,-1)$, in which $W_j < 0$ (correctly excluded). 
Because $\beta_j < 0$, the sign estimate $\widehat{\sign(\beta_{j})}$ is more likely to be $-1$ than $+1$, so the second configuration is strictly more probable. 
Consequently $\mathbb{P}(W_j > 0 \mid j \in \mathcal H_0^d,\, S_j = d) < 1/2$ and the condition required by SSS+ is satisfied.
This argument is made precise in the proof of Proposition~\ref{lem:signed-kf} below.
The proof, an adaptation of the proof of Theorem~1 in \citet{barber2019knockoff}, is deferred to Appendix~\ref{appendix:proofs}.

\begin{Proposition}
    \label{lem:signed-kf}
    Assume $\mathbf y \sim \mathcal{N}(\mathbf X\boldsymbol \beta, \sigma^2 \mathbf I_n)$ and $n \geq 2p$.
    Then the signed-knockoff filter $\mathcal{\hat S}_{\text{signed-KF}}$ controls the directional FDR in (\ref{eq:fdr_dir_d}), i.e., $\text{FDR}_{\text{dir}} \leq \alpha$.
\end{Proposition}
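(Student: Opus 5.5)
I will adapt the argument of Theorem~1 in \citet{barber2019knockoff}. The core is a conditional sign-flip statement for the null-direction features. Using Conditions~\ref{ass:suff} and \ref{ass:sym}, $\mathbf W$ is a function of $\hat{\mathbf X}^\top\hat{\mathbf X}=\mathbf G$ and $\hat{\mathbf X}^\top\mathbf y$. Write $\hat{\mathbf X}^\top \mathbf y$ through the pairs $\mathbf U_j \equiv (\mathbf X_j+\tilde{\mathbf X}_j)^\top\mathbf y$ and $V_j\equiv(\mathbf X_j-\tilde{\mathbf X}_j)^\top\mathbf y$. By \eqref{eq:corr_structure}--\eqref{eq:corr_away}, $(\mathbf X_j-\tilde{\mathbf X}_j)^\top(\mathbf X_k\pm\tilde{\mathbf X}_k)=0$ for $k\neq j$, and $(\mathbf X_j-\tilde{\mathbf X}_j)^\top(\mathbf X_j+\tilde{\mathbf X}_j)=0$. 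Since $\mathbf y$ is Gaussian, the vector $\mathbf V=(V_j)$ has independent coordinates $V_j\sim N(2s_j\beta_j,\,2s_j\sigma^2)$, independent of $\mathbf U$. The $2s_j\beta_j$ factor follows from $(\mathbf X_j-\tilde{\mathbf X}_j)^\top\mathbf X\boldsymbol\beta = s_j\beta_j$ times the normalisation. This is where $n\ge 2p$ enters, via existence of $\tilde{\mathbf X}$ satisfying Condition~\ref{ass:gram}.

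Swapping $\mathbf X_j\leftrightarrow\tilde{\mathbf X}_j$ leaves $\mathbf U$ and $|V_j|$ fixed and sends $V_j\mapsto -V_j$. By anti-symmetry it flips $W_j$ and leaves the other $W_k$ unchanged. Hence $W_j=\sign(V_j)\,f_j(\mathbf U,|\mathbf V|,\sign(\mathbf V_{-j}))$, so $|\mathbf W|$ and all $W_k$, $k\ne j$, are measurable with respect to $\mathcal F_j\equiv\sigma(\mathbf U,|\mathbf V|,\sign(V_k):k\neq j)$. Consequently
\[
S_j=\sign(V_j)\sign(W_j)=\sign(f_j),
\]
which is $\mathcal F_j$-measurable. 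Thus the set $\mathcal J$ is determined without $\sign(V_j)$ for each $j$. Also, on $\{j\in\mathcal J\}$, $\sign(W_j)=d\cdot\sign(V_j)$.

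Conditional on $\mathcal F=\sigma(\mathbf U,|\mathbf V|)$, the $\sign(V_j)$ are independent with $\mathbb P(\sign V_j=+1\mid |V_j|)=e^{a_j}/(e^{a_j}+e^{-a_j})$, where $a_j=\beta_j|V_j|/\sigma^2$. For $j\in\mathcal H_0^d$ we have $d\beta_j\le 0$. Hence $\mathbb P(d\,\sign V_j=+1\mid\mathcal F)\le 1/2$, i.e.\ $\mathbb P(W_j>0\mid \mathcal F_{j}, j\in\mathcal J)\le 1/2$. The subtlety is that $\mathcal J$ depends on the other signs. So rather than quote SSS+ as a black box, I will rerun its supermartingale proof. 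Fix $\mathcal F$ and the signs of $\{V_k:k\notin\mathcal H_0^d\}$. Then the null-direction indicators $B_j=\mathbf 1\{d\,\sign V_j=-1\}$ are independent with $\mathbb P(B_j=1)\ge 1/2$.

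This last step is the main obstacle, because $\mathcal J$ and the ordering by $|W|$ may themselves depend on these $B_j$. The plan is to note that $|W_j|$ and the $f_j$ are functions of all signs, so membership $S_j=d$ could shift when other null signs flip. I expect to handle this as in \citet{barber2019knockoff}, Lemma~1 generalisation. Define the process over thresholds $t$ ordered by $|W|$, with
\[
M(t)=\frac{\#\{j\in\mathcal H_0^d\cap\mathcal J: |W_j|\ge t, W_j>0\}}{1+\#\{j\in\mathcal H_0^d\cap\mathcal J:|W_j|\ge t,W_j<0\}}.
\]
I will show $\mathbb E[M(\tau)]\le 1$ via the optional-stopping/“leave-one-out” argument. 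For each null $j$, condition on $\mathcal F_j$, which fixes $S_j$, $|W_j|$ and the other statistics. Then use $\mathbb P(W_j>0\mid\mathcal F_j)\le 1/2$, together with the fact that replacing $\tau$ by the threshold computed with $W_j$ set negative changes nothing when $W_j<0$. This yields $\mathbb E[\mathbf 1\{W_j>0, j\in\mathcal J, |W_j|\ge\tau\}/(1+\cdots)]\le$ the matched negative term. Summing over $j$ gives $\mathbb E[M(\tau)]\le1$. Finally, $\text{FDP}_{\text{dir}}\le \frac{A_\tau+1}{R_\tau\vee1}\,M(\tau)\le\alpha M(\tau)$, with $A_\tau, R_\tau$ computed over $\mathcal J$. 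Taking expectations completes the bound $\text{FDR}_{\text{dir}}\le\alpha$.
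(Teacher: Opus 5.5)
Your ingredients are the right ones, but the step you flag as the main obstacle is left unresolved, and your closing leave-one-out argument depends on it. You assert that $f_j$ and $|W_j|$ are ``functions of all signs'' and that membership in $\mathcal J$ ``could shift when other null signs flip''. If that were true, the threshold $\tau$ (computed on $\mathcal J$) would depend on $\sign(V_j)$ through $\mathcal J$ and not only through $W_j$. The factor multiplying $\mathbf 1\{W_j>0\}$ would then not be $\mathcal F_j$-measurable, and the comparison with the matched negative term would fail.

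The missing observation comes straight from your own swap argument. Swapping $\mathbf X_k\leftrightarrow\tilde{\mathbf X}_k$ with $k\neq j$ leaves both $W_j$ and $\sign(V_j)$ unchanged, while swapping $j$ flips both. So $f_j=\sign(V_j)W_j$, and likewise $|\mathbf W|$, is invariant under flipping any subset of the coordinates of $\mathbf V$. Hence $|\mathbf W|$ and the whole vector $\mathbf S$ (so $\mathcal J$) are functions of $(\mathbf U,|\mathbf V|)$ alone. Your own remark that $\mathcal J$ is determined without $\sign(V_j)$ for each $j$ already forces this, since invariance under each single flip implies invariance under all flips. This is Lemma~1 of \citet{barber2019knockoff}, and it is exactly what the paper's proof uses.

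With this in place the argument is short and needs no leave-one-out. Conditional on $\mathcal V=\sigma(\mathbf U,|\mathbf V|)$, the set $\mathcal J$ and $|\mathbf W|$ are fixed, and the signs $\sign(W_j)=d\,\sign(V_j)$, $j\in\mathcal J$, are independent. Moreover $\mathbb P(W_j>0\mid\mathcal V)\le 1/2$ for $j\in\mathcal J\cap\mathcal H_0^d$. So the standard SSS+ supermartingale bound (Corollary~1 of \citet{barber2019knockoff}) applies directly to $(W_j)_{j\in\mathcal J}$.

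Your leave-one-out route also works after this fix; it is essentially the per-coordinate route the paper uses for Theorem~\ref{theorem:fdr_control} via almost conditional independence. Two details need correcting, though:
\begin{itemize}
\item The threshold must be recomputed with $W_j$ replaced by $+|W_j|$, not with $W_j$ set negative. On $\{W_j>0\}$ the positive replacement leaves $\tau$ unchanged, and that is what makes the ratio multiplying $\mathbf 1\{W_j>0\}$ $\mathcal F_j$-measurable.
\item To bound the sum of the matched negative terms by $1$, you need the lemma that all $j$ with $W_j\le-\tau_j$ share a common $\tau_j$.
\end{itemize}
Finally, $\mathbb E V_j=s_j\beta_j$, not $2s_j\beta_j$, so your $a_j$ is off by a factor of $2$. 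Only its sign matters, so this is harmless.
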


\subsection{Response-guided knockoff filter}
\label{section:response-guided_kf}

Existing knockoffs are constructed solely from the design matrix $\mathbf X$, without incorporating any information from the response $\mathbf y$.
Instead, we propose to use part of the information in $\mathbf y$ to guide the construction of the knockoffs, so we can focus on the more promising features. 
Our method, which we call the \emph{response-guided knockoff filter} (Algorithm~\ref{alg:response-kf}), begins by adding noise to the response $\mathbf y$ to create a perturbed response, denoted as $\tilde{\mathbf y}$.
Using the perturbed response $\tilde{\mathbf y}$ and $\mathbf X$ we can estimate which features are both relevant to the response and have the appropriate sign.
This in turn will allow us to optimise the separation of those features from their knockoffs.

Concretely, our approach (developed in Section~\ref{sec:choose-s}) is to assign
nonzero separation to a set of features $\hat{\mathcal{S}}_0$ by regressing $\tilde{\mathbf y}$ on $\mathbf X$. 
Features outside $\hat{\mathcal{S}}_0$ are judged to be irrelevant or of
the wrong sign and are therefore assigned $s_j = 0$, so that $\tilde{\mathbf X}_j$ coincides
with $\mathbf X_j$, effectively screening them out from the procedure. Restricting
nonzero separation to $\hat{\mathcal{S}}_0$ allows the construction to choose larger
values of $s_j$ for the remaining, more promising features, sharpening the
contrast between those features and their knockoffs.

To account for peeking at the response through $\tilde{\mathbf y}$, our knockoff construction must satisfy the following constraint, which extends Condition~\ref{ass:gram}.
\begin{Condition}[Perturbed response feasibility]
    \label{ass:correct}
    Let $\mathbf X^{(e)} \equiv [\,\mathbf X \ \frac{1}{\|\tilde{\mathbf y}\|_2}\tilde{\mathbf y} \ \frac{1}{\sqrt{n}}\mathbf 1 \,]$ denote the ``augmented design matrix'' where $\mathbf 1$ denotes a vector of ones, and let
	$\mathbf \Sigma^{(e)} \equiv {\mathbf X^{(e)}}^{\!\top}\mathbf X^{(e)}$. 
    There exist separation parameters $\mathbf s\in [0,1]^p$ such that 
    \[  \mathbf G^{(e)}  \equiv \begin{bmatrix}
        \mathbf \Sigma^{(e)} & \mathbf\Sigma^{(e)} - \operatorname{diag}\{\mathbf s, 0, 0\} \\
        \mathbf\Sigma^{(e)} - \operatorname{diag}\{\mathbf s, 0, 0\} & \mathbf\Sigma^{(e)}
    \end{bmatrix} \succeq 0. \]
\end{Condition}
\noindent As in the case of Condition~\ref{ass:gram}, $\mathbf G^{(e)}\succeq 0$ if and only if
$2\mathbf \Sigma^{(e)} - \operatorname{diag}\{\mathbf s, 0, 0\} \succeq 0$.
Similarly, if $\mathbf G^{(e)}\succeq 0$ then we can construct a knockoff matrix $\tilde{\mathbf X}$, where with
$\hat{\mathbf X}^{(e)} \equiv [\mathbf X^{(e)} \ \tilde{\mathbf X}\ \frac{1}{\|\tilde{\mathbf y}\|_2}\tilde{\mathbf y} \ \frac{1}{\sqrt{n}}\mathbf 1]$,
${\hat{\mathbf X}}^{(e)\top} \hat{\mathbf X}^{(e)} = \mathbf G^{(e)}$.
Equivalently, this ensures that the knockoff matrix $\tilde{\mathbf X}$ satisfies the construction according to Condition~\ref{ass:gram} along with two additional constraints,
$\mathbf X^\top \tilde{\mathbf y} = \tilde{\mathbf X}^\top \tilde{\mathbf y}$ and $\mathbf X^\top \mathbf 1 = \tilde{\mathbf X}^\top \mathbf 1$.

With the knockoffs constructed as above, we continue by applying the remaining steps of the signed-knockoff filter (Algorithm~\ref{alg:signed-kf}).
Our approach clearly violates the canonical knockoff construction where the knockoffs are constructed \emph{without}
looking at the response.
The knockoff construction obeying Condition~\ref{ass:correct} is precisely what corrects for this violation: it requires that each null feature and its knockoff have
identical inner products with $\tilde{\mathbf{y}}$ and the intercept $\mathbf{1}$.
The intercept $\mathbf{1}$ enters because the perturbation variance defining $\tilde{\mathbf{y}}$ is the sample variance of $\mathbf{y}$, which is a function of the sample mean $\mathbf 1^{\!\top}\mathbf{y}/n$. 
Thus for true null features,
the constructed knockoffs share the same affinity to $\tilde{\mathbf{y}}$ and $\mathbf 1$ as the original features have, and hence to $\mathbf{y}$ as well.
This negates any advantage the original design matrix $\mathbf X$ could have otherwise gained.

Although Condition~\ref{ass:correct} amounts to Condition~\ref{ass:gram} together with the two additional constraints $\mathbf X^{\!\top}\tilde{\mathbf y} = \tilde{\mathbf X}^{\!\top}\tilde{\mathbf y}$ and $\mathbf X^{\!\top}\mathbf 1 = \tilde{\mathbf X}^{\!\top}\mathbf 1$, and so might be expected to require a larger sample size, a non-trivial $\mathbf s$ can still be obtained as soon as $n > p + 2$. The reason is that, unlike the standard knockoff filter, we may take the screened set $\hat{\mathcal S}_0$ to be non-empty yet small enough that $n \geq p + 2 + \lvert \hat{\mathcal S}_0 \rvert$, setting $s_j = 0$ for $j \notin \hat{\mathcal S}_0$; here the right-hand side accounts for the $p$ features, the $\lvert \hat{\mathcal S}_0 \rvert$ nontrivial knockoffs, and the two extra dimensions contributed by the constraints on $\tilde{\mathbf y}$ and $\mathbf 1$. The standard knockoff filter, by contrast, requires $n \geq 2p$, and has no means of systematically reducing the knockoff construction when $n < 2p$.

\begin{algorithm}[H]
	\caption{ {\bf Response-guided knockoff filter}}
	\label{alg:response-kf}
	\begin{algorithmic}[1]
		\Require  \begin{tabular}[t]{p{0.8\textwidth}}
		$\mathbf y \in \mathbb R^n$ the response vector; $\mathbf X \in \mathbb R^{n \times p}$ the design matrix with $n > p + 2$;  $d \in \{\pm 1\}$ the target sign of interest; $\alpha \in (0, 1)$ the FDR threshold \\
		\end{tabular}
		\Ensure A discovery list $\mathcal{\hat S}_{\text{RGKF}}$
        \State $\tilde{\mathbf y} \gets \mathbf y + \tilde{\boldsymbol{\varepsilon}}$ where $\tilde{\boldsymbol{\varepsilon}} \mid \mathbf y \sim \mathcal N(\mathbf 0, s_y^2 \mathbf I_n)$ \Comment{$s_y^2$ is the sample variance of $\mathbf y$}
        \State $\mathbf s \gets s(\mathbf X, \tilde{\mathbf y}, d) \in [0,1]^p$ \Comment{$\mathbf s$ satisfies Condition~\ref{ass:correct}}
        \State $\tilde{\mathbf X} \gets g(\mathbf X, \tilde{\mathbf y}, \mathbf s)$ 
			\Comment{$\tilde{\mathbf X}$ satisfies $\mathbf G^{(e)} = {\hat{\mathbf X}}^{(e)\top} \hat{\mathbf X}^{(e)}$}
        \State $\hat{\mathbf X} \gets [\mathbf X,  \tilde{\mathbf X}]$
        \State $\mathbf W \gets \mathbf W(\hat{\mathbf X}, \mathbf y)$ \Comment{$\mathbf W$ satisfies Conditions~\ref{ass:suff} and \ref{ass:sym}}
        \State $S_j \gets \sign\bigl((\mathbf X_{j}-\widetilde{\mathbf X}_{j})^{\!\top}\mathbf y\bigr) \cdot \sign\left(W_j \right)$ for $j = 1, \ldots, p$
        \State $\mathcal J_d \gets \{ j \in [p] : S_j = d  \}$
        \State \Return $\mathcal{\hat S}_{\text{RGKF}} \gets \text{SSS+}((W_j)_{j \in \mathcal J_d}, 1/2, \alpha)$ \Comment{Algorithm~\ref{alg:SeqStep}}
		\end{algorithmic}
\end{algorithm}
The response-guided knockoff filter inherently makes a trade-off: part of $\mathbf y$ is budgeted (in the form of $\tilde{\mathbf y}$) to inform the separation parameters $\mathbf s$ while satisfying Condition~\ref{ass:correct}, and the remaining information in $\mathbf y$ is used to determine the scores and signs, $\mathbf W$ and $\mathbf S$, respectively.
The empirical question is whether this trade-off leads to an increase in power which we evaluate in our Results section.
We establish that the response-guided knockoff filter controls the directional FDR.
\begin{Theorem}
    \label{theorem:fdr_control}
    Assume $\mathbf y \sim \mathcal{N}(\mathbf X \boldsymbol \beta, \sigma^2 \mathbf I_n)$ with $\sigma^2>0$ and $n > p + 2$.
    Then the response-guided knockoff filter $\mathcal{\hat S}_{\text{RGKF}}$ controls the directional FDR in (\ref{eq:fdr_dir_d}), i.e., $\text{FDR}_{\text{dir}} \leq \alpha$.
\end{Theorem}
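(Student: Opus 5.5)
Our plan is to reduce Theorem~\ref{theorem:fdr_control} to Proposition~\ref{lem:signed-kf} by conditioning on everything that was used to build the knockoffs. The main obstacle is that $\tilde{\mathbf y}=\mathbf y+\tilde{\boldsymbol\varepsilon}$ depends on $\mathbf y$, and the noise scale $s_y$ is itself random. So we cannot simply condition on $\tilde{\mathbf y}$ and keep $\mathbf y$ Gaussian with mean $\mathbf X\boldsymbol\beta$.

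\textbf{Step 1: the conditioning $\sigma$-field.} Let $\mathbf P$ be the projection onto $\operatorname{span}(\mathbf X,\mathbf 1)$ and $\mathbf P^\perp = \mathbf I-\mathbf P$. The tuple $(\mathbf s,\tilde{\mathbf X})$ is a function of $\mathbf X$, $\tilde{\mathbf y}$ and (through $s_y$) the mean $\bar y$. We therefore condition on
\[
\mathcal F \equiv \sigma\bigl(\tilde{\mathbf y},\ \mathbf 1^\top \mathbf y,\ \mathbf P^\perp \mathbf y\bigr),
\]
enlarged if needed so that $s_y^2$ is $\mathcal F$-measurable. The point of including $\mathbf P^\perp\mathbf y$ is that it fixes the residual sum of squares. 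Then $s_y^2 = \bigl(\|\mathbf P^\perp\mathbf y\|^2 + \|(\mathbf P - \mathbf 1\mathbf 1^\top/n)\mathbf y\|^2\bigr)/(n-1)$ still depends on the $\mathbf X$-part.

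This leads to a cleaner alternative, which we would try first: write the perturbation scale as a function only of statistics that are invariant under the knockoff swap. The key observation is that, by Condition~\ref{ass:correct}, $\tilde{\mathbf X}_j^\top\tilde{\mathbf y} = \mathbf X_j^\top\tilde{\mathbf y}$ and $\tilde{\mathbf X}_j^\top\mathbf 1=\mathbf X_j^\top\mathbf 1$. Hence the whole construction (which sees $\mathbf y$ only through $\tilde{\mathbf y}$, $\mathbf 1^\top\mathbf y$ and $\|\mathbf y\|$) is unchanged if the data are replaced by data generated with $\mathbf X_j$ and $\tilde{\mathbf X}_j$ swapped.

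\textbf{Step 2: conditional exchangeability.} Fix $\mathcal F$ and hence $\tilde{\mathbf X}$. We want to show that for any swap set $T\subseteq\mathcal H_0^d$ (first for $T\subseteq \mathcal H_0$), the vector $\hat{\mathbf X}^\top\mathbf y$ satisfies
\[
\hat{\mathbf X}^\top \mathbf y \overset{d}{=} (\hat{\mathbf X}\circ\Pi_T)^\top\mathbf y \quad\text{conditionally on } \mathcal F .
\]
Jointly, $(\mathbf y,\tilde{\mathbf y})$ is Gaussian given $s_y$. Conditioning on $\tilde{\mathbf y}$ leaves $\mathbf y$ Gaussian with mean $\lambda\mathbf X\boldsymbol\beta+(1-\lambda)\tilde{\mathbf y}$ and covariance $\propto\mathbf I$, where $\lambda = s_y^2/(\sigma^2+s_y^2)$. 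The means of $\hat{\mathbf X}^\top\mathbf y$ then agree under the swap:
\begin{itemize}
\item the $\tilde{\mathbf y}$ term, by $\mathbf X^\top\tilde{\mathbf y}=\tilde{\mathbf X}^\top\tilde{\mathbf y}$;
\item the $\mathbf X\boldsymbol\beta$ term, by \eqref{eq:corr_structure} for $\beta_j=0$, exactly as in \citet{barber:controlling};
\item the $\mathbf 1$ component, by the intercept constraint.
\end{itemize}
The covariances agree because the Gram matrix is swap-invariant. To handle $s_y$ rigorously, we would condition additionally on $\|\mathbf y\|^2$ and $\mathbf 1^\top\mathbf y$. Given these and $\tilde{\mathbf y}$, $\mathbf y$ is a Gaussian restricted to a sphere-slice, and the swap acts as an orthogonal map of $\mathbb R^n$. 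That map fixes $\tilde{\mathbf y}$, $\mathbf 1$ and the columns outside $T$, and exchanges $\mathbf X_j$ with $\tilde{\mathbf X}_j$; it exists because $\mathbf G^{(e)}$ is swap-invariant. Such a map preserves the density of the conditional law. This step is the crux, and verifying that this reflection leaves the tilted Gaussian density invariant for null $j$ is where we expect the most work.

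\textbf{Step 3: wrong-sign nonnulls, then conclude.} For $j$ with $\operatorname{sign}(\beta_j)=-d$, follow the argument of \citet{barber2019knockoff} and Proposition~\ref{lem:signed-kf}. The swap is no longer distribution-preserving, but it multiplies the conditional likelihood by $\exp\bigl(c\,\beta_j(\tilde{\mathbf X}_j-\mathbf X_j)^\top\mathbf y\bigr)$ with $c>0$. This factor favors the configuration in which $W_j$ lands on the excluded side. So, conditionally on $\mathcal F$, on $|\mathbf W|$ and on the other signs, we get $\mathbb P(W_j>0\mid S_j=d)\le 1/2$, independently across $j\in\mathcal H_0^d$. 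Applying the supermartingale argument behind SSS+ with $c=1/2$ conditionally on $\mathcal F$ then bounds the conditional directional FDR by $\alpha$, and the tower property finishes the proof. The requirement $n>p+2$ enters only to guarantee that Condition~\ref{ass:correct} is feasible with some $\mathbf s$ (possibly $0$ off $\hat{\mathcal S}_0$), so that $\tilde{\mathbf X}$ exists. Coordinates with $s_j=0$ give $W_j=0$ and never enter the selection.
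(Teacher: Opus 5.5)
Your core mechanism is right, and it is the one the paper uses. The Householder reflection $\mathbf H_j = \mathbf I_n - 2\mathbf a_j\mathbf a_j^\top/\|\mathbf a_j\|^2$, with $\mathbf a_j = \mathbf X_j - \tilde{\mathbf X}_j$, swaps $\mathbf X_j$ and $\tilde{\mathbf X}_j$. By Condition~\ref{ass:correct} it fixes $\tilde{\mathbf y}$ and $\mathbf 1$. Being orthogonal, it therefore preserves $\|\tilde{\mathbf y}-\mathbf y\|$ and $s_{\mathbf y}^2$. So the perturbation factor $f_{\tilde{\boldsymbol\varepsilon}\mid\mathbf y}(\tilde{\mathbf y}-\mathbf y\mid\mathbf y)$ is unchanged, and only the Gaussian factor moves, by $\exp(-\beta_j\mathbf a_j^\top\mathbf y/\sigma^2)$. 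Your ``crux'' is a two-line check, not the hard part.

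The genuine gap is the $\sigma$-field $\mathcal F$ you carry through Steps~2 and~3. It contains $\mathbf P^\perp\mathbf y$, and the swap does not preserve it. Using $\|\mathbf a_j\|^2 = 2s_j$, we get $\mathbf P^\perp\mathbf H_j\mathbf y = \mathbf P^\perp\mathbf y - (\mathbf a_j^\top\mathbf y/s_j)\,\mathbf P^\perp\mathbf a_j$, and $\mathbf P^\perp\mathbf a_j\neq\mathbf 0$ unless $\mathbf a_j$ happens to lie in $\operatorname{span}(\mathbf X,\mathbf 1)$. Concretely, $(\mathbf X_j-\tilde{\mathbf X}_j)^\top\mathbf y = s_j\hat\beta_j + \mathbf a_j^\top\mathbf P^\perp\mathbf y$, where $\hat\beta_j$ is the OLS coefficient from regressing $\mathbf y$ on $(\mathbf X,\mathbf 1)$. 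Once $\mathbf P^\perp\mathbf y$ is frozen, the second term is a fixed, generally nonzero shift. Even for $\beta_j=0$, the two sign configurations are then no longer equally likely. For example, without the perturbation, the law of $\mathbf a_j^\top\mathbf y$ given $\mathbf P^\perp\mathbf y$ is symmetric about $\mathbf a_j^\top\mathbf P^\perp\mathbf y$, not about $0$. So the exchangeability in Step~2, and the bound $\mathbb P(W_j>0\mid\cdot)\le 1/2$ in Step~3, are false conditionally on $\mathcal F$.

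Conditioning on $\mathbf P^\perp\mathbf y$ was never needed. The pair $(\mathbf s,\tilde{\mathbf X})$ is a function of $(\mathbf X,\tilde{\mathbf y})$ only, not of $\bar y$. The scale $s_y$ enters only through the law of $\mathbf y$ given $\tilde{\mathbf y}$. Because $s_y$ depends on $\mathbf y$, that law is not the Gaussian with mean $\lambda\mathbf X\boldsymbol\beta+(1-\lambda)\tilde{\mathbf y}$ you wrote.

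The repair is to condition only on swap-invariant quantities. Let $\mathbf y_\perp$ be the projection of $\mathbf y$ onto $\{\mathbf a_j : j\in\mathcal A\}^\perp$, and condition on $\sigma\bigl(\tilde{\mathbf y}, (|\mathbf a_j^\top\mathbf y|)_{j\in\mathcal A}, \mathbf y_\perp\bigr)$.
\begin{itemize}
\item This field contains $(\mathbf X+\tilde{\mathbf X})^\top\mathbf y$ and $|(\mathbf X-\tilde{\mathbf X})^\top\mathbf y|$. By Lemma~1 of \citet{barber2019knockoff}, it therefore contains $|\mathbf W|$ and $\mathbf S$.
\item It also contains $\|\mathbf y\|$, $\mathbf 1^\top\mathbf y$ and $\tilde{\mathbf y}^\top\mathbf y$.
\item In the coordinates $\bigl((\mathbf a_j^\top\mathbf y)_{j\in\mathcal A},\mathbf y_\perp\bigr)$, the conditional density of $\mathbf y$ given $\tilde{\mathbf y}$ is a function of $\bigl((|\mathbf a_j^\top\mathbf y|)_{j},\mathbf y_\perp\bigr)$ times $\prod_{j\in\mathcal A}\exp\bigl(\beta_j\mathbf a_j^\top\mathbf y/(2\sigma^2)\bigr)$.
\end{itemize}
This factorization justifies your claim that the signs are independent across $j$ and tilted toward $W_j<0$ for wrong-sign features. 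SSS+ applied conditionally, followed by the tower property, then finishes.

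With that change your route is valid. It differs from the paper's mainly in packaging. The paper works one coordinate at a time in rotated coordinates $\mathbf z^{(j)}=\mathbf A^{(j)}\mathbf y$. It turns the density ratio into $\mathbb P(\sign(z_j)=S_j\mid\mathcal H_j)\le 1/2$ via a sign-odds identity. It pushes this down to $\sigma(|\mathbf W|,\sign(\mathbf W_{-j}),\tilde{\mathbf y},\mathbf S)$ by the tower property, and then invokes the almost-conditional-independence result of \citet{li2022searching}, so it never needs a joint-independence claim. Your version needs the joint factorization, but in exchange can use the standard SSS+ guarantee directly.
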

The proof machinery relies on several key Lemmas leveraging the framework of \cite{barber2019knockoff} and \cite{li2022searching}; hence we defer the full proof to Appendix~\ref{appendix:proofs}.
The proof can also be generalized to show that the response-guided knockoff filter controls the directional FDR when the target signs $d_j \in \{\pm 1\}$ vary across features. 
In this case, $\mathcal J_d$ in Step~6 of Algorithm~\ref{alg:response-kf} is replaced with $\{ j \in [p] : S_j = d_j  \}$ (although, we expect this generalization to have less practical use).


\subsection{Related literature}
Our response-guided knockoff filter works by partitioning the information in $\mathbf y$ into two parts: (a) a perturbed response $\tilde{\mathbf y}$ that is used to guide the knockoff construction, and (b) the remaining information in $\mathbf y$ that is used to perform the FDR analysis.
This approach is closely aligned to the work of \cite{barber2019knockoff} which similarly partitions the data.
However, their strategy implements a data-splitting approach which subsequently makes their procedure impractical for our purpose (as we explain in Appendix~\ref{supp:barber}).
Moreover, their FDR analysis in part (b) does not achieve finite-sample FDR control unless we model $\mathbf X$ as multivariate Gaussian, or if we condition on the event that all relevant features in $\mathbf X$ have been selected in part (a).

Other partition approaches for false discovery rate control have been explored previously \citep{xing2023controlling,dai:false2,dai2023scale}.
However, like the approach above, these methods employ a hard data-splitting strategy, and use it to bypass knockoff construction entirely rather than to inform it.
Furthermore, unlike the response-guided knockoff filter, these methods achieve asymptotic FDR control instead of finite-sample control.

Our approach to perturbing the response $\mathbf y$ is an instance of \emph{data fission}~\citep{leiner2025data}, a general framework in which a perturbation to the data is used for model selection, followed by a second inference step which uses the remaining information in the data conditioned on the perturbed data.
While Chapter 2 of \cite{leiner2025data} provides techniques for accomplishing data fission in some contexts, and in the linear regression setting, it is unclear how to accomplish this in the regression setting for the task of knockoff construction.
Moreover, our procedure introduces a data-dependent perturbation, $\tilde{\boldsymbol{\varepsilon}} \mid \mathbf y$, to the response $\mathbf y$, which is not covered by \cite{leiner2025data} in the finite-sample setting.
Finally, we prove that Condition~\ref{ass:correct}, in addition to the other conditions,
is sufficient for achieving the second inference step, i.e., to perform feature selection with directional FDR control.

The response-guided knockoff filter is essentially a two-stage knockoff generator. 
The literature has proposed several complementary approaches to knockoff generation~\citep{barber:controlling,candes:panning,spector2022powerful,gimenez:improving}, which we benchmark against in the Results section. 
However, a key distinction is that existing generators are entirely response-agnostic: they do not use $\mathbf y$ when constructing knockoffs and therefore cannot explicitly prioritise features that are more likely to be relevant to the response or have the correct signs.

\subsection{Extension to FDX and k-FWER control}

Beyond FDR control, the response-guided knockoff filter extends naturally to other error criteria.
One such criterion is the \emph{false discovery exceedance} (FDX), which controls the probability that the FDP exceeds a target level~\citep{genovese:stochastic,genovese:exceedance,korn2004controlling}.
Specifically, a procedure controls the FDX at $(\alpha,\gamma)$ if
\[
\mathbb{P}\left( \operatorname{FDP} > \alpha \right) \leq \gamma.
\]
Another widely studied criterion is the \emph{$k$-family-wise error rate} ($k$-FWER)~\citep{lehmann:generalizations,vanderLaan:augmentation}, which controls the probability of making at least $k$ false discoveries, i.e.,
\[
\mathbb{P}\left( \lvert \mathcal H_0 \cap \hat{\mathcal S} \rvert \geq k \right) \leq \alpha.
\]

The directional FDX and $k$-FWER criteria are defined analogously by replacing $\mathcal H_0$ with $\mathcal H_0^{d}$ as per \eqref{eq:fdr_dir_d}.
The response-guided knockoff filter extends directly to these alternative directional error criteria.
To obtain directional FDX or $k$-FWER control, it suffices to replace the SSS+ procedure in Step~8 of Algorithm~\ref{alg:response-kf} with the appropriate procedure designed to control the desired error rate.
One may use the FDX-controlling procedure of \citet{luo:competition} or the $k$-FWER-controlling procedure of \citet{janson:familywise} (see Appendix~\ref{appendix:algorithms} for the algorithms).

\begin{Theorem}
    \label{theorem:fdx_fwer_control}
    Assume $\mathbf y \sim \mathcal{N}(\mathbf X \boldsymbol \beta, \sigma^2 \mathbf I_n)$.
    Then Algorithms~\ref{alg:response-kf-fdx} and \ref{alg:response-kf-kfwer} control the directional FDX at level $(\alpha, \gamma)$ and the directional $k$-FWER at level $\alpha$, respectively.
\end{Theorem}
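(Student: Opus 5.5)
The plan is to reduce both claims to the same conditional sign property that underlies Theorem~\ref{theorem:fdr_control}, and then invoke the known guarantees of the FDX procedure of \citet{luo:competition} and the $k$-FWER procedure of \citet{janson:familywise}. Both of those guarantees, like SSS+, rest on one structural fact about the statistics fed into them. Conditionally on everything except the signs of the null statistics, those signs must be independent and each positive with probability at most $c=1/2$. Equivalently, the signs stochastically dominate (in the ``negative'' direction) i.i.d.\ fair coin flips, with the non-null signs and all magnitudes held fixed.

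So the first step is to isolate, from the proof of Theorem~\ref{theorem:fdr_control}, the lemma that is actually established there. Define $\mathcal F$ as the $\sigma$-field generated by:
\begin{itemize}[label={}]
\item $\tilde{\mathbf y}$ and $\mathbf 1^{\!\top}\mathbf y$ (so that $\mathbf s$ and $\tilde{\mathbf X}$ are fixed);
\item $|\mathbf W|$;
\item the unordered pairs $\{\mathbf X_j^{\!\top}\mathbf y,\tilde{\mathbf X}_j^{\!\top}\mathbf y\}$;
\item the signs $\sign(W_j)$ for $j\notin\mathcal H_0^d$.
\end{itemize}
The lemma states that, given $\mathcal F$ and the set $\mathcal J_d$, the indicators $\{\mathbb 1(W_j>0): j\in\mathcal H_0^d\cap\mathcal J_d\}$ are mutually independent Bernoulli variables with success probability at most $1/2$. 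It is proved as in the main theorem. Condition~\ref{ass:correct} makes $(\mathbf X_j-\tilde{\mathbf X}_j)$ orthogonal to $\tilde{\mathbf y}$, $\mathbf 1$ and all other columns. The Gaussian fission step then shows that, conditionally on $\tilde{\mathbf y}$ and $\mathbf 1^{\!\top}\mathbf y$, the quantities $(\mathbf X_j-\tilde{\mathbf X}_j)^{\!\top}\mathbf y$ are independent Gaussians with means proportional to $\beta_j s_j$. Anti-symmetry (Condition~\ref{ass:sym}) then turns each swap into a joint flip of $\sign(W_j)$ and $\widehat{\sign(\beta_j)}$. For $\beta_j=0$ this gives probability exactly $1/2$. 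For $\sign(\beta_j)=-d$, the Gaussian likelihood ratio shows that the configuration with $W_j>0$ inside $\{S_j=d\}$ is the less likely one.

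The second step is to check the input hypotheses of each downstream procedure. The FDX procedure of \citet{luo:competition} is proved using only this conditional Bernoulli$(\le 1/2)$ independence of null signs within a list whose ordering is fixed by $|\mathbf W|$. It bounds $\mathbb P(\text{FDP}>\alpha)$ by a negative-binomial/binomial tail computed for fair coins. The $k$-FWER procedure of \citet{janson:familywise} similarly bounds the probability of $k$ or more null positives before a stopping index by a negative binomial tail under fair coins. In both cases the relevant error event is monotone increasing in the null indicators. A coupling argument therefore applies: replace each Bernoulli$(p_j)$, $p_j\le 1/2$, by a Bernoulli$(1/2)$ variable that dominates it. This shows that the conditional error probability is at most the fair-coin bound. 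Taking expectations over $\mathcal F$ then yields the stated levels, since the directional FDP, and hence the error events, count exactly the indices in $\mathcal H_0^d\cap\hat{\mathcal S}$.

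The main obstacle is this monotonicity/coupling step. The FDP exceedance event is not obviously monotone in the null signs, because flipping a null sign from negative to positive increases both the numerator and the rejection set. I would handle it as in those papers. With the order fixed, each procedure's stopping rule treats a positive null as a possible false discovery and a negative null as counted only in the ``decoy'' tally. Increasing the null success probabilities therefore only makes the sequential error process, measured as the count of null positives before each decoy count, stochastically larger. Verifying this pointwise on the realized order is the main check. A subtler point is that the conditioning on $\mathcal J_d$ must not break independence. This holds because $S_j$ is invariant under the swap, as already used for Theorem~\ref{theorem:fdr_control}, so $\mathcal J_d$ is $\mathcal F$-measurable after including the $|S_j|$ information.
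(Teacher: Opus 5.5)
Your overall architecture matches the paper's. You extract a conditional sign property for the nulls in $\mathcal J_d$ from the machinery of Theorem~\ref{theorem:fdr_control}, then rerun \citet{luo:competition} and \citet{janson:familywise} with fair coins replaced by coins at most as likely to land positive. Your coupling has the same content as the paper's replacement of equalities by stochastic-dominance inequalities (e.g.\ $N_j \overset{d}{\geq} \mathrm{Bin}(j-A_j,1/2)$). You are also right that for FDX the dominance must be applied to the binomial quantities inside their proof, not to the non-monotone event $\{\mathrm{FDP}>\alpha\}$.

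The gap is in how you justify the sign lemma. It is false that, conditionally on $\tilde{\mathbf y}$ and $\mathbf 1^{\!\top}\mathbf y$, the contrasts $z_j=\mathbf a_j^{\!\top}\mathbf y$, with $\mathbf a_j=\mathbf X_j-\tilde{\mathbf X}_j$, are independent Gaussians. Gaussian fission gives a Gaussian conditional law only when the added noise has fixed variance. Here the variance is $s_{\mathbf y}^2$, which is a function of $\mathbf y$. Given $\tilde{\mathbf y}$, the density of $\mathbf y$ at $\mathbf x$ is proportional to $f_{\mathbf y}(\mathbf x)\,(s_{\mathbf x}^2)^{-n/2}\exp\{-\|\tilde{\mathbf y}-\mathbf x\|^2/(2s_{\mathbf x}^2)\}$. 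Fixing $\mathbf 1^{\!\top}\mathbf x$ does not fix $s_{\mathbf x}^2$: it still depends on $\|\mathbf x\|^2$, which contains $\sum_{j\in\mathcal A}(\mathbf a_j^{\!\top}\mathbf x)^2/\|\mathbf a_j\|^2$. The contrasts are therefore non-Gaussian and coupled through this radial term, so the step as written does not go through.

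The missing idea is the actual role of $\mathbf 1$ in Condition~\ref{ass:correct}. Each $\mathbf a_j$ is orthogonal to $\tilde{\mathbf y}$, to $\mathbf 1$, and to every other contrast. Hence the perturbation factor depends on $z_j$ only through $z_j^2$. Equivalently, it is invariant under the reflection $\mathbf H_j$, which fixes $\tilde{\mathbf y}$ and $\mathbf 1$ and so preserves both $\|\tilde{\mathbf y}-\mathbf x\|$ and $s_{\mathbf x}^2$. The factor therefore cancels from the sign odds (Lemma~\ref{lemma:rgkf-density-ratio}). What remains are odds $\exp\{2s_j\beta_j|z_j|/(\sigma^2\|\mathbf a_j\|^2)\}$ for $\sign(z_j)=+1$. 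These odds depend only on $|z_j|$, so with this argument in place your stronger claim of mutual conditional independence given $\mathcal F$ does hold. The paper itself proves, and uses, only the almost conditional independence of Definition~\ref{def:cond_almost_independence}. With that lemma established, your coupling step finishes both the FDX and the $k$-FWER claims.
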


The validity of Theorem~\ref{theorem:fdx_fwer_control} follows from the proof of Theorem~\ref{theorem:fdr_control} and a short verification of the arguments used to establish FDX and k-FWER control~\citep{luo:competition, janson:familywise}, respectively. 
We defer the proof to Appendix~\ref{appendix:proofs}.

\subsection{A concrete implementation of the response-guided knockoff filter}
\label{sec:choose-s}

As described so far, the response-guided knockoff filter is a general framework:
its validity does not depend on a particular rule for choosing the separation
parameters $\mathbf s$, provided that Condition~\ref{ass:correct} holds.
We now describe the specific two-stage implementation used in our experiments.

\subsubsection{Stage 1: Screening}
We first identify a subset $\hat{\mathcal S}_0 \subseteq [p]$ of features that is likely to contain the
target set of interest, $\mathcal H_1^{d} \equiv [p]\setminus \mathcal H_0^{d}$, by regressing $\tilde{\mathbf y}$ onto $\mathbf X$.
We consider the following two approaches for constructing $\hat{\mathcal S}_0$.
\begin{itemize}
    \item The \textit{signed} approach defines the screened features as $\hat{\mathcal S}_0 \equiv \{j \in [p] : \boldsymbol{\hat{\beta}}_{\text{OLS}, j} \cdot d > 0 \}$, where $\boldsymbol{\hat{\beta}}_{\text{OLS}, j}$ denotes the ordinary least squares estimate of the regression coefficients.
    \item The \textit{regularize-signed} approach defines the screened features as $\hat{\mathcal S}_0 \equiv \{j \in [p] : \boldsymbol{\hat{\beta}}_{\text{enet}, j} \cdot d > 0 \}$, where $\boldsymbol{\hat{\beta}}_{\text{enet}, j}$ denotes the elastic-net estimate of the regression coefficients~\citep{zou:regularization}. 
    For the elastic-net fit, we tune both the mixing parameter
	$\alpha_{\text{enet}}$ and the penalty parameter $\lambda_{\text{enet}}$ by
	cross-validation. The mixing parameter $\alpha_{\text{enet}}$ controls the
	relative weight of the $\ell_1$ and $\ell_2$ penalties, while
	$\lambda_{\text{enet}}$ controls the overall amount of regularization.
	We run 10-fold cross-validation using \texttt{cv.glmnet}, with no intercept and no
	standardization, jointly over the default \texttt{glmnet} grid of
	$\lambda_{\text{enet}}$ values and $\alpha_{\text{enet}} \in \{ 0.2, 0.4, 0.6, 0.8 \}$.  We select the value of
	$\alpha_{\text{enet}}$ and $\lambda_{\text{enet}}$ that attains the smallest mean
	cross-validated error. In the simulation experiments, if the resulting
	regularize-signed screen is empty, we use the signed OLS screen for that
	replicate as a fallback; the replicate remains included under the
	\texttt{RG(reg-signed,ME)} label in the aggregated results.
\end{itemize}
An important note is that in the subsequent stage, nontrivial\footnote{Knockoffs $\tilde{\mathbf{X}}_j$ for which $\mathbf X_j \neq \tilde{\mathbf{X}}_j$.} knockoffs are constructed only for the screened features in $\hat{\mathcal S}_0$.
We assume that the augmented design $\mathbf X^{(e)}$ has full column rank, or equivalently that $\mathbf\Sigma^{(e)}\succ0$.
Crucially, the number of observations needs to satisfy $n \geq p + 2 + \lvert \hat{\mathcal S}_0 \rvert$ so that the ambient space $\mathbb R^n$ is large enough to realize the nondegenerate reduced Gram matrix used below while satisfying Condition~\ref{ass:correct}, as discussed in Section~\ref{section:response-guided_kf}.
If $\lvert \hat{\mathcal S}_0 \rvert$ is otherwise too large for this step, then $\hat{\mathcal S}_0$ is adjusted to contain only the top $ n - p - 2 $ features ranked according to $\boldsymbol{\hat{\beta}}_{\text{OLS}} \cdot d$ or $\boldsymbol{\hat{\beta}}_{\text{enet}} \cdot d$, respectively.

\subsubsection{Stage 2: Knockoff construction on the screened set}
For the retained features $j\in \hat{\mathcal S}_0$, we choose $ (s_j)_{j\in\hat{\mathcal S}_0}$ using a modification of the maximum-entropy (ME) construction of \citep{gimenez:improving,spector2022powerful}, adapted to the fact that only a subset of coordinates are given nonzero separation parameters.
For any candidate $(s_j)_{j\in\hat{\mathcal S}_0}$, extend $\mathbf s$ by setting $s_j=0$ for $j\in[p]\setminus\hat{\mathcal S}_0$, and let
$\mathbf D^{(e)}(\mathbf s)\equiv\operatorname{diag}\{\mathbf s,0,0\}\in\mathbb R^{(p+2)\times(p+2)}$.
Define the reduced augmented Gram matrix
\[
\mathbf G_{\hat{\mathcal S}_0}^{(e)}(\mathbf s)
\equiv
\begin{bmatrix}
\mathbf\Sigma^{(e)}
&
\bigl(\mathbf\Sigma^{(e)}-\mathbf D^{(e)}(\mathbf s)\bigr)_{\cdot,\hat{\mathcal S}_0}
\\
\bigl(\mathbf\Sigma^{(e)}-\mathbf D^{(e)}(\mathbf s)\bigr)_{\hat{\mathcal S}_0,\cdot}
&
\mathbf\Sigma^{(e)}_{\hat{\mathcal S}_0,\hat{\mathcal S}_0}
\end{bmatrix},
\]
where $\cdot$ in a matrix subscript denotes all $p+2$ augmented indices.
This is the target Gram matrix for $[\,\mathbf X^{(e)}\ \widetilde{\mathbf X}_{\hat{\mathcal S}_0}\,]$, where $\widetilde{\mathbf X}_{\hat{\mathcal S}_0}$ contains the knockoff columns indexed by $\hat{\mathcal S}_0$; it retains every column of the augmented design and only the knockoff columns corresponding to the screened features.
We then solve
\[
\mathbf s^{\star}_{\hat{\mathcal S}_0}
\in \argmax_{(s_j)_{j\in\hat{\mathcal S}_0}\in[0,1]^{|\hat{\mathcal S}_0|}}
\left\{
\det\!\bigl(\mathbf G_{\hat{\mathcal S}_0}^{(e)}(\mathbf s)\bigr)
:
\mathbf G_{\hat{\mathcal S}_0}^{(e)}(\mathbf s)\succeq0
\right\}.
\]
The full extended design associated with \(\mathbf G^{(e)}\) differs from \([\,\mathbf X^{(e)}\ \widetilde{\mathbf X}_{\hat{\mathcal S}_0}\,]\) only by duplicate columns, so \(\mathbf G^{(e)}\succeq0\) if and only if \(\mathbf G_{\hat{\mathcal S}_0}^{(e)}(\mathbf s)\succeq0\), so the constraint above is precisely Condition~\ref{ass:correct}.
We maximize the determinant of the reduced Gram matrix because the duplicate columns make $\det(\mathbf G^{(e)})=0$.
After obtaining $\mathbf s^{\star}_{\hat{\mathcal S}_0}$, extend it by zeros outside $\hat{\mathcal S}_0$ to obtain $\mathbf s^\star$.
We construct the active knockoffs so that the Gram matrix of $[\,\mathbf X^{(e)}\ \widetilde{\mathbf X}_{\hat{\mathcal S}_0}\,]$ equals $\mathbf G_{\hat{\mathcal S}_0}^{(e)}(\mathbf s^\star)$.

The maximum-entropy objective above can be replaced with any standard fixed-$X$ knockoff objective, with the same modification that nonzero separation is permitted only on the screened set $\hat{\mathcal S}_0$. 
For example, one may use SDP knockoffs \citep{barber:controlling} or equi-correlated knockoffs (Section~\ref{section:overview_FX}), with the latter admitting a closed-form solution (see Appendix~\ref{appendix:proofs}). 

\section{Results}
\label{section:results}

We consider variants of the response-guided knockoff filter, corresponding to different choices of the screening step used to construct $\hat{\mathcal S}_0$.
To keep the exposition concise, we denote each variant by \texttt{RG(x,y)}, where \texttt{x} specifies the screening approach (Stage~1) and \texttt{y} specifies the knockoff
construction applied on $\hat{\mathcal S}_0$ (Stage~2).
For example, \texttt{RG(reg-signed,ME)} uses the regularized-signed screening approach for constructing $\hat{\mathcal S}_0$ and the maximum-entropy construction on $\hat{\mathcal S}_0$.
We similarly denote the signed-knockoff filter of Section~\ref{section:signed-knockoff-filter} as \texttt{S(y)}, where \texttt{y} specifies the knockoff construction used.
Table~\ref{table:naming_conventions} summarizes these naming conventions, together with the screening rules \texttt{x}, knockoff constructions \texttt{y}, and feature statistic used throughout this section.

\begin{table}[h!]
    \centering
    \small
    \setlength{\tabcolsep}{6pt}
    \renewcommand{\arraystretch}{1.25}
    \begin{tabularx}{\linewidth}{l l >{\raggedright\arraybackslash}X}
        \toprule
        & Abbreviation & Meaning \\
        \midrule
        \multirow{2}{*}{Filter}
          & \texttt{RG(x,y)} & Response-guided knockoff filter (Section~\ref{section:response-guided_kf}), with screening rule \texttt{x} (Stage~1) and knockoff construction \texttt{y} (Stage~2) \\
          & \texttt{S(y)}    & Signed-knockoff filter (Section~\ref{section:signed-knockoff-filter}), with knockoff construction \texttt{y} \\
        \midrule
        \multirow{2}{*}{Screening \texttt{x}}
          & \texttt{signed}     & Signed screening using the ordinary least squares (OLS) estimate: $\hat{\mathcal S}_0 = \{j \in [p] : \boldsymbol{\hat{\beta}}_{\text{OLS}} \cdot d > 0\}$ \\
          & \texttt{reg-signed} & Regularize-signed screening using the elastic-net (enet) estimate~\citep{zou:regularization}: $\hat{\mathcal S}_0 = \{j \in [p] : \boldsymbol{\hat{\beta}}_{\text{enet}} \cdot d > 0\}$ \\
        \midrule
        \multirow{4}{*}{Construction \texttt{y}}
          & \texttt{equi} & Equi-correlated knockoffs (Section~\ref{section:overview_FX}) \\
          & \texttt{SDP}  & Semidefinite-programming knockoffs~\citep{barber:controlling} \\
          & \texttt{ME}   & Maximum-entropy knockoffs~\citep{gimenez:improving,spector2022powerful} \\
          & \texttt{MVR}  & Minimum variance-based reconstructability knockoffs~\citep{spector2022powerful} \\
        \midrule
        Statistic
          & \texttt{LSM} & Lasso signed-max feature statistic~\citep{barber:controlling} \\
        \bottomrule
    \end{tabularx}
    \caption{\textbf{Naming conventions for the methods compared in Section~\ref{section:results}.}
    Each variant is written as \texttt{RG(x,y)} or \texttt{S(y)}, where \texttt{x} is the screening rule and \texttt{y} is the knockoff construction (e.g., \texttt{RG(reg-signed,ME)} or \texttt{S(MVR)}).}
    \label{table:naming_conventions}
\end{table}

Even under the fixed-X framework, knockoff features are not uniquely defined: the construction typically involves choices (an orthonormal basis) that can introduce arbitrariness. 
To avoid results depending on a particular deterministic choice, we use a \emph{randomized} fixed-X knockoff construction.
In the analyses of Sections~\ref{section:simulation_1}, \ref{section:simulation_2}, and~\ref{section:hiv_analysis}, all methods were implemented in R.
Our implementations of the ME and MVR constructions follow \citet{spector2022powerful}, with the additional capability of randomizing the fixed-X knockoff construction through the choice of orthogonal completion.
Equi-correlated and SDP knockoffs were constructed using the \texttt{knockoff} R package with randomization enabled.

We used the lasso signed max (LSM) score~\citep{barber:controlling} to compute the knockoff statistics $\mathbf W$ in all simulations.
Specifically, let $Z_j$ and $\tilde{Z}_j$ denote the value of the regularization parameter $\lambda$ at which feature $j$ and its knockoff first enter the lasso path, respectively.
Then the LSM statistic is defined as 
\[W_j \equiv \max(Z_j, \tilde{Z}_j) \cdot \sign(Z_j - \tilde{Z}_j).\]
For the response-guided knockoff filter, the lasso path is fit on all $p$ original features together with the knockoffs of the screened features $\hat{\mathcal S}_0$. 
The LSM statistic $W_j$ is subsequently computed only for $j \in \hat{\mathcal S}_0$, with $W_j = 0$ for all $j \in[p]\setminus\hat{\mathcal S}_0$, so that the features in $[p]\setminus\hat{\mathcal S}_0$ act as covariates in the lasso fit but are never selected.

Throughout the simulations below, it suffices to consider the case where $d = +1$, i.e., when the target sign of interest is positive. 
Clearly, the case when $d = -1$ can be inferred by inverting the signs of $\boldsymbol{\beta}$.

\subsection{Simulation 1: \cite{spector2022powerful}}
\label{section:simulation_1}

To illustrate the effectiveness of the response-guided knockoff filter, we simulated a diverse range of linear models according to the fixed-X simulation setup of \cite{spector2022powerful}. 
We generate simulated data according to six different types of linear models which vary according to the correlation structure of the features and the true coefficients $\boldsymbol \beta$. 
The six types are denoted as \texttt{AR1}, \texttt{AR1 (Corr)}, \texttt{Block Equi}, \texttt{ER (Cov)}, \texttt{ER (Prec)}, \texttt{Equi}, with description of each type provided in Appendix~\ref{appendix:sim_details}.
For each type, we generate $p = 500$ features and generate 100 replicates for each combination of $n$ and $k$ where
\begin{itemize}
    \item $n \in \{1005, 1250, 1500, 1750, 2000\}$ is the number of observations;
    \item $k \in \{50, 100, 150\}$ is the number of nonzero coefficients,
\end{itemize}
and where the design $\mathbf X$, the coefficients $\boldsymbol \beta$, and the response $\mathbf y$, together with the covariance $\boldsymbol\Sigma$ generating $\mathbf X$ for the six model types in which $\mathbf \Sigma$ is random, are redrawn for each replicate. 
We define the empirical power and empirical FDR as
\begin{align}
    \label{eq:empirical_power_fdr}
    \widehat{\mathrm{Power}} \equiv \frac{1}{100} \sum_{r=1}^{100} \frac{|\mathcal{\hat S}^{(r)} \cap \mathcal H_1^{+1}|}{|\mathcal H_1^{+1}|}, \quad
    \widehat{\mathrm{FDR}} \equiv \frac{1}{100} \sum_{r=1}^{100} \frac{|\mathcal{\hat S}^{(r)}   \cap \mathcal H_0^{+1}|}{|\mathcal{\hat S}^{(r)}| \vee 1},
\end{align}
where $\mathcal{\hat S}^{(r)}$ denotes the selected features in replicate $r$ and $\mathcal{H}_1^d \equiv [p] \setminus \mathcal{H}_0^d$.
The theory controls the directional FDR conditional on a fixed design $\mathbf X$, however, we redraw deliberately so that the reported power and empirical FDR describe performance across a broader range of linear models rather than a single design, giving a view of power and of the empirical false discovery rate that is not dependent to any particular $\mathbf X$.
For each of the six linear model types, we compute the average empirical power and empirical FDR over all combinations of $n$ and $k$, at thresholds $\alpha \in \{0.05, 0.1, 0.2, 0.3\}$.
We compare the response-guided knockoff filter variants (Section~\ref{section:response-guided_kf}) against the signed-knockoff filter (Section~\ref{section:signed-knockoff-filter}) using the equi-correlated, SDP, ME, and minimum variance-based reconstructability (MVR) knockoff constructions.
To place the methods on a common scale, we report a \emph{relative power} score: for each target FDR level $\alpha$, $n$, $k$, and each of the six linear-model types, we divide $\widehat{\text{Power}}$ given by each method by the largest $\widehat{\text{Power}}$ made by any method in that instance, so that the best method in each instance scores $1$. 
This yields $4 \times 5 \times 3 \times 6 = 360$ scores per method.
The first column of Figure~\ref{fig:tpr_summary} summarizes these scores: the top panel shows the distribution of a method's $360$ scores as a boxplot, and the bottom panel shows their average.
The boxplots of the two response-guided knockoff filter variants are the more powerful methods and are essentially identical.
The signed-knockoff filter variants using \texttt{ME} and \texttt{MVR} are more powerful than those using \texttt{equi} and \texttt{SDP}, which is consistent with the results in \cite{spector2022powerful}.
The per-type power plots are shown in Figure~\ref{fig:spector_simulation_power} (Appendix~\ref{appendix:power_plots}), and the complementary empirical FDR plot (Figure~\ref{fig:spector_simulation_fdr}) confirms that the empirical directional FDR of all methods is at or below the nominal level.

\begin{figure}[h!]
    \centering
    \includegraphics[width=6in]{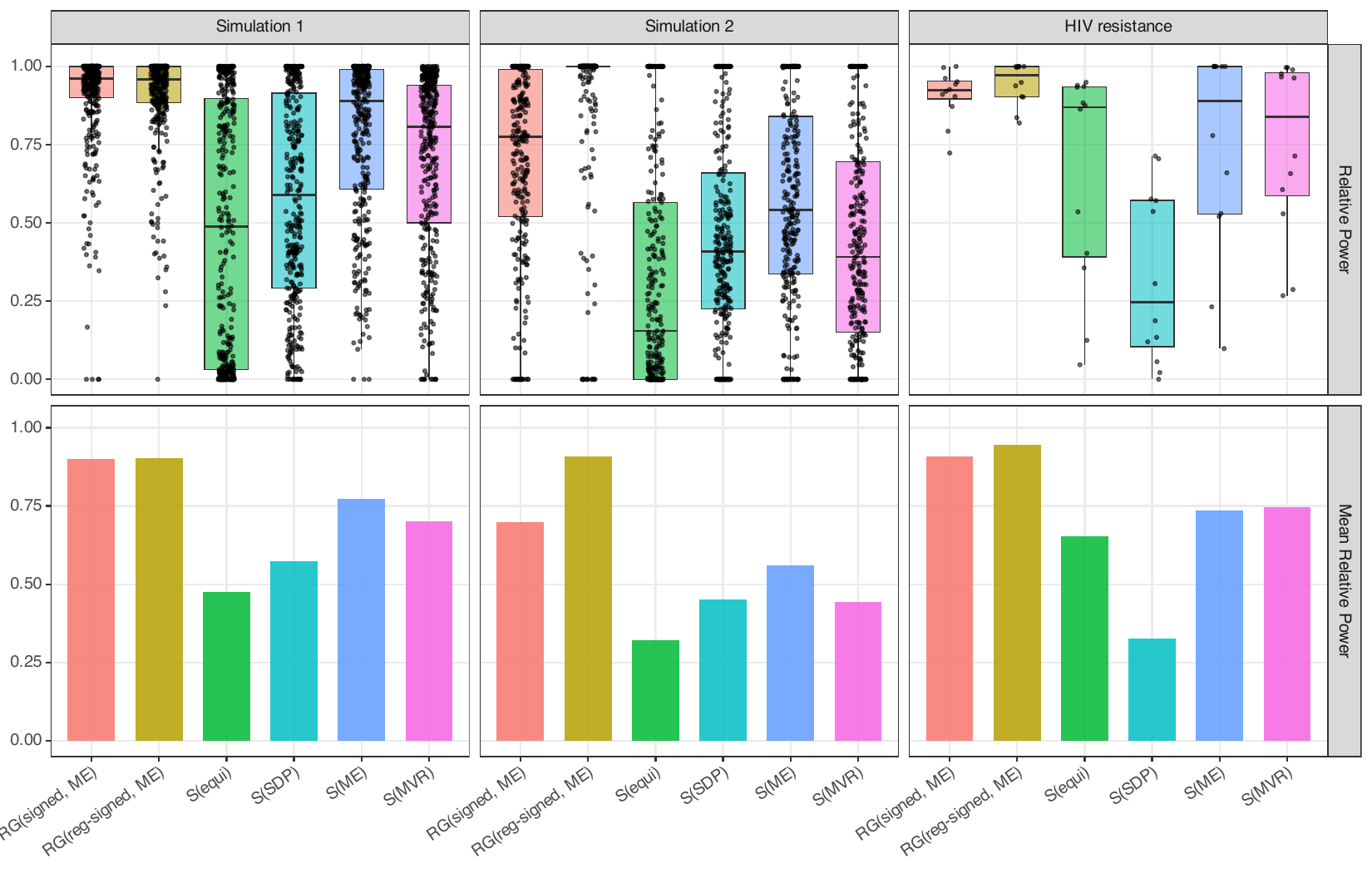}
    \caption{\textbf{Relative power of the response-guided and signed-knockoff filters across the three settings.}
    For each target FDR level $\alpha \in \{0.05, 0.1, 0.2, 0.3\}$, sample size, sparsity level, and linear-model type (Simulations~1 and~2), or for each target FDR level and PI drug (HIV data), each method's $\widehat{\text{Power}}$ is divided by the largest $\widehat{\text{Power}}$ made by any method in that instance.
    The top row shows the distribution of these relative-power scores for each method as a boxplot ($360$, $288$, and $12$ scores per method for Simulation~1, Simulation~2, and the HIV data, respectively), and the bottom row shows their average.
    The columns correspond to Simulation~1 ($n \ge 2p$), Simulation~2 ($p + 2 < n < 2p$), and the HIV drug-resistance data.}
    \label{fig:tpr_summary}
\end{figure}

\subsection{Simulation 2: $p + 2 < n < 2p$}
\label{section:simulation_2}

The standard knockoff filter and the signed-knockoff filter requires that $n \geq 2p$ to construct valid knockoffs~\citep{barber:controlling}.
When $p < n < 2p$, existing fixed-X knockoff generators rely on heuristics that no longer guarantee finite-sample FDR control.
Specifically, they work by obtaining the OLS estimates, $ \boldsymbol{\hat{\beta}}_{\text{OLS}}$ and $\hat{\sigma}^2_{\text{OLS}}$,
and subsequently generating additional response observations from the estimated model to increase the sample size to $n' = 2p$, evaluating the estimated response observations $2p - n$ many times when $X_{ij} = 0$ for $i = n + 1, \dots, 2p$ and $j = 1, \dots, p$ (\citep{barber:controlling,spector2022powerful}).
In contrast, the response-guided knockoff filter can be implemented when $p + 2 < n < 2p$ (see Section~\ref{sec:choose-s}).

To evaluate the performance of the response-guided knockoff filter in this setting, we consider a similar simulation setup to Simulation~1, but with $n \in \{600, 700, 800, 900\}$.
We compare our results to the signed-knockoff filter using the above heuristic of generating additional response observations.
We summarize performance with the same relative-power score as in Simulation~1, again yielding $4 \times 4 \times 3 \times 6 = 288$ scores per method (four FDR levels, four sample sizes, three number or nonzero coefficients, and six linear-model types).
As summarized in the second column of Figure~\ref{fig:tpr_summary}, the response-guided knockoff filter variants typically outperform the signed-knockoff filter, often substantially so, with the \texttt{RG(reg-signed, ME)} variant clearly performing best.
A plausible explanation is that because the elastic-net (\texttt{reg-signed}) screening prunes more aggressively than the OLS-based (\texttt{signed}) screening, it concentrates this limited budget on fewer, more promising features and increases the separation parameters $s_j$ on the retained features.
The flip side is that it screens out more features that are in $\mathcal{H}_1^{+1}$; however, such marginal signals are difficult to detect at all when $n$ is close to $p$, so the gain from a smaller $\hat{\mathcal S}_0$ outweighs this loss.
The per-type power curves are shown in Figure~\ref{fig:spector_simulation_2_power} (Appendix~\ref{appendix:power_plots}), and the complementary empirical FDR plot (Figure~\ref{fig:spector_simulation_2_fdr}) demonstrates that the empirical directional FDR of all methods is at or below the nominal level.

\subsection{HIV drug resistance data}
\label{section:hiv_analysis}

We evaluate the response-guided knockoff filter on a Human Immunodeficiency Virus (HIV) drug resistance study \citep{rhee:genotypic} following a similar setup to \cite{ren:derandomised}. 
Specifically, we downloaded the HIV drug resistance data from the Stanford HIV Drug Resistance Database.
We considered the three main protease inhibitor (PI) drugs currently in use: Atazanavir (ATV), Lopinavir (LPV) and Darunavir (DRV).
A ground truth was approximated by downloading the list of mutations from \url{https://hivdb.stanford.edu/dr-summary/comments/PI/}.
Since this list contains both mutations that increase and decrease drug resistance, we defined an approximate ground truth as those mutations that did not clearly indicate a decrease in resistance for that drug.
Further details are given in Appendix~\ref{appendix:hiv_details}.

For each method we record the average number of approximate ground-truth discoveries over 50 replications, where in each replicate we randomly drew a fixed-X knockoff design matrix.
We summarize performance with the same relative-power score as in Simulation~1 and 2, yielding $4 \times 3 = 12$ scores per method (four FDR levels and three drugs).
As summarized in the third column of Figure~\ref{fig:tpr_summary}, the response-guided knockoff filters are the strongest overall performer across the three PI drugs.
The per-drug discovery plots are shown in Figure~\ref{fig:hiv_power} (Appendix~\ref{appendix:power_plots}) and shows that \texttt{S(ME)}, \texttt{S(MVR)} and \texttt{S(equi)} performs well with \texttt{RG(signed, ME)} at higher FDR thresholds, but is outperformed by the response-guided knockoff filter variants at lower FDR thresholds.

\section{Discussion}
\label{section:discussion}

We have presented the response-guided knockoff filter, a selective inference framework that incorporates response information into knockoff construction while provably controlling the directional FDR.
Existing knockoff generators are agnostic to the response as this ensures that the knockoff statistics satisfy the necessary properties for FDR control~\citep{barber:controlling,candes:panning}.
In contrast, our response-guided knockoff filter leverages part of the information in the response to guide the knockoff construction. 
Because $\tilde{\mathbf y}$ itself is derived from $\mathbf y $ and the full response is later reused to compute the knockoff statistics, this procedure may superficially look like ``double dipping''. 
However, the crucial point is that Condition~\ref{ass:correct} and the corresponding knockoff construction neutralise this problem: it forces each original variable and its knockoff to have identical correlations with the guiding response and the intercept, so the response-guided construction does not give original features an unfair advantage in the inference step. 
Our empirical results demonstrate that the response-guided knockoff filter can lead to substantial power gains over existing knockoff generators.

The response-guided knockoff filter also offers some additional benefits.
First, the response-guided knockoff filter can be implemented when $p + 2 < n < 2p$, while all existing fixed-X knockoff generators require $n \geq 2p$ with heuristics implemented when $n < 2p$ that no longer guarantee finite-sample FDR control~\citep{barber:controlling}.
Our empirical results show that despite this fact, response-guided knockoffs are even more powerful in this setting.
Second, our framework controls other error rates such as the directional FDX and $k$-FWER.

As outlined in Section~\ref{sec:choose-s}, our implementation of the response-guided knockoff filter uses a two-stage procedure to construct the separation parameters $\mathbf{s}$. 
While this strategy performed well empirically, it is ultimately a heuristic. 
In particular, our decision to use elastic-net (rather than the lasso~\citep{tibshirani:regression} or other alternatives) when forming $\hat{\mathcal S}_0$ was motivated by the well-known behaviour of the lasso under strong collinearity: when predictors are highly correlated, the lasso often selects only one variable from a correlated group, potentially discarding other relevant features. 
By contrast, the elastic net tends to exhibit a grouping effect, encouraging correlated predictors to enter (or leave) the model together~\citep{zou:regularization}, ensuring $\hat{\mathcal{S}}_0$ does not accidentally exclude relevant features.
It is plausible that a more theoretically grounded strategy for constructing $\mathbf{s}$ could yield further power gains.

Our proof machinery in Appendix~\ref{appendix:proofs} heavily relies on the fact that the data is a linear model.
In contrast, the \textit{model-X} knockoff framework of \cite{candes:panning} allows for an arbitrary conditional distribution of $ y \mid  X$, while assuming that the distribution of $ X$ is known, or at least well approximated~\citep{barber2020robust,fan2025asymptotic}.
Hence, our response-guided knockoff filter cannot be easily adapted to the model-X assumptions while maintaining its finite-sample error control.
Therefore, an important direction for future work is to explore alternative ways for constructing response-guided knockoffs under the model-X assumptions.

Another avenue of research is exploring adaptive ways of selecting the size of the perturbation $\tilde{\boldsymbol{\varepsilon}}$ that is added to the response $\mathbf y$.
This is a crucial step, as the size of the perturbation determines how the information in $\mathbf y$ is divided between guiding the construction and the subsequent inference.
A smaller perturbation makes $\tilde{\mathbf y}$ a sharper guide and yields a more accurate screened set $\hat{\mathcal S}_0$ and allows us to optimize the separation parameters $s_j$ for $j \in \hat{\mathcal S}_0$, but because the inference is made invariant to $\tilde{\mathbf y}$ (Condition~\ref{ass:correct}), it attenuates the signal left for the statistics $\mathbf W$; a larger perturbation has the opposite effect.
In the response-guided knockoff filter, we proposed setting the variance of $\tilde{\boldsymbol{\varepsilon}}$ to be the sample variance of $\mathbf y$.
It is not clear that the general noise level of $s_{\mathbf y}^2$ is optimal, and it is possible that a more adaptive choice of the perturbation may lead to further power gains.

In a different line of work, \cite{emery:multiple2} and \cite{gimenez:improving} have proposed the idea of multiple simultaneous knockoffs which can increase power, particularly when the FDR threshold, $\alpha$, is low.
Unfortunately these approaches are limited to the case when $n \geq (k + 1)\cdot p$, where $k$ is the number of knockoffs constructed per feature.
Since the response-guided knockoff filter reduces the number of knockoffs constructed, it is possible to combine the two ideas to construct multiple response-guided knockoffs when $n$ is only moderately larger than $p$.
However, this modification is nontrivial as it is unclear how to define the signs, $\mathbf S$, when multiple knockoffs are constructed.
This modification is left for future work.

\section*{Code availability.} 
Code for reproducing the analyses is available at \url{https://github.com/freejstone/response_guided_knockoffs_code}.

\clearpage

\newpage
\bibliographystyle{plainnat}
\bibliography{refs}

\begin{thebibliography}{26}
\providecommand{\natexlab}[1]{#1}
\providecommand{\url}[1]{\texttt{#1}}
\expandafter\ifx\csname urlstyle\endcsname\relax
  \providecommand{\doi}[1]{doi: #1}\else
  \providecommand{\doi}{doi: \begingroup \urlstyle{rm}\Url}\fi

\bibitem[Barber and Cand{\`e}s(2015)]{barber:controlling}
R.~F. Barber and Emmanuel~J. Cand{\`e}s.
\newblock Controlling the false discovery rate via knockoffs.
\newblock \emph{The Annals of Statistics}, 43\penalty0 (5):\penalty0
  2055--2085, 2015.

\bibitem[Barber and Cand{\`e}s(2019)]{barber2019knockoff}
R.~F. Barber and Emmanuel~J. Cand{\`e}s.
\newblock {A knockoff filter for high-dimensional selective inference}.
\newblock \emph{The Annals of Statistics}, 47\penalty0 (5):\penalty0 2504 --
  2537, 2019.

\bibitem[Barber et~al.(2020)Barber, Cand{\`e}s, and Samworth]{barber2020robust}
Rina~Foygel Barber, Emmanuel~J Cand{\`e}s, and Richard~J Samworth.
\newblock Robust inference with knockoffs.
\newblock \emph{The Annals of Statistics}, 48\penalty0 (3):\penalty0
  1409--1431, 2020.

\bibitem[Benjamini and Hochberg(1995)]{benjamini:controlling}
Y.~Benjamini and Y.~Hochberg.
\newblock Controlling the false discovery rate: a practical and powerful
  approach to multiple testing.
\newblock \emph{Journal of the Royal Statistical Society Series B},
  57:\penalty0 289--300, 1995.

\bibitem[Cand{\`e}s et~al.(2018)Cand{\`e}s, Fan, Janson, and
  Lv]{candes:panning}
E.~J. Cand{\`e}s, Y.~Fan, L.~Janson, and J.~Lv.
\newblock Panning for gold: Model-{X} knockoffs for high-dimensional controlled
  variable selection.
\newblock \emph{Journal of the Royal Statistical Society Series B}, 80\penalty0
  (3):\penalty0 551--577, 2018.

\bibitem[Dai et~al.(2023{\natexlab{a}})Dai, Lin, Xing, and Liu]{dai2023scale}
Chenguang Dai, Buyu Lin, Xin Xing, and Jun~S Liu.
\newblock A scale-free approach for false discovery rate control in generalized
  linear models.
\newblock \emph{Journal of the American Statistical Association}, 118\penalty0
  (543):\penalty0 1551--1565, 2023{\natexlab{a}}.

\bibitem[Dai et~al.(2023{\natexlab{b}})Dai, Lin, Xing, and Liu]{dai:false2}
Chenguang Dai, Buyu Lin, Xin Xing, and Jun~S Liu.
\newblock False discovery rate control via data splitting.
\newblock \emph{Journal of the American Statistical Association}, 118\penalty0
  (544):\penalty0 2503--2520, 2023{\natexlab{b}}.

\bibitem[Emery et~al.(2019)Emery, Hasam, Noble, and Keich]{emery:multiple2}
K.~Emery, S.~Hasam, W.~S. Noble, and U.~Keich.
\newblock Multiple competition-based {FDR} control for peptide detection.
\newblock \emph{arXiv}, 2019.
\newblock arXiv:1907.01458v2.

\bibitem[Fan et~al.(2025)Fan, Gao, Lv, and Xu]{fan2025asymptotic}
Yingying Fan, Lan Gao, Jinchi Lv, and Xiaocong Xu.
\newblock Asymptotic fdr control with model-x knockoffs: Is moments matching
  sufficient?
\newblock \emph{arXiv preprint arXiv:2502.05969}, 2025.

\bibitem[Genovese and Wasserman(2004)]{genovese:stochastic}
Christopher Genovese and Larry Wasserman.
\newblock A stochastic process approach to false discovery control.
\newblock \emph{The Annals of Statistics}, 32\penalty0 (3):\penalty0
  1035--1061, 06 2004.

\bibitem[Genovese and Wasserman(2006)]{genovese:exceedance}
CR~Genovese and L~Wasserman.
\newblock Exceedance control of the false discovery proportion.
\newblock \emph{Journal of the American Statistical Association}, 101\penalty0
  (476):\penalty0 1408--1417, 2006.

\bibitem[Gimenez and Zou(2019)]{gimenez:improving}
Jaime~Roquero Gimenez and James Zou.
\newblock Improving the stability of the knockoff procedure: Multiple
  simultaneous knockoffs and entropy maximization.
\newblock In \emph{The 22nd International Conference on Artificial Intelligence
  and Statistics}, pages 2184--2192. PMLR, 2019.

\bibitem[Janson and Su(2016)]{janson:familywise}
Lucas Janson and Weijie Su.
\newblock Familywise error rate control via knockoffs.
\newblock \emph{Electronic Journal of Statistics}, 10\penalty0 (1):\penalty0
  960--975, 2016.

\bibitem[Korn et~al.(2004)Korn, Troendle, McShane, and
  Simon]{korn2004controlling}
Edward~L Korn, James~F Troendle, Lisa~M McShane, and Richard Simon.
\newblock Controlling the number of false discoveries: application to
  high-dimensional genomic data.
\newblock \emph{Journal of Statistical Planning and Inference}, 124\penalty0
  (2):\penalty0 379--398, 2004.

\bibitem[Lehmann and Romano(2005)]{lehmann:generalizations}
E.~L. Lehmann and Joseph~P. Romano.
\newblock Generalizations of the familywise error rate.
\newblock \emph{The Annals of Statistics}, 33\penalty0 (3):\penalty0
  1138--1154, 06 2005.

\bibitem[Leiner et~al.(2025)Leiner, Duan, Wasserman, and
  Ramdas]{leiner2025data}
James Leiner, Boyan Duan, Larry Wasserman, and Aaditya Ramdas.
\newblock Data fission: splitting a single data point.
\newblock \emph{Journal of the American Statistical Association}, 120\penalty0
  (549):\penalty0 135--146, 2025.

\bibitem[Li et~al.(2022)Li, Sesia, Romano, Candes, and
  Sabatti]{li2022searching}
Shuangning Li, Matteo Sesia, Yaniv Romano, Emmanuel Candes, and Chiara Sabatti.
\newblock Searching for robust associations with a multi-environment knockoff
  filter.
\newblock \emph{Biometrika}, 109\penalty0 (3):\penalty0 611--629, 2022.

\bibitem[Luo et~al.(2023)Luo, Ebadi, Emery, He, Noble, and
  Keich]{luo:competition}
Dong Luo, Arya Ebadi, Kristen Emery, Yilun He, William~Stafford Noble, and Uri
  Keich.
\newblock Competition-based control of the false discovery proportion.
\newblock \emph{Biometrics}, 79\penalty0 (4):\penalty0 3472--3484, 2023.

\bibitem[Rajchert and Keich(2023)]{rajchert:controlling}
Andrew Rajchert and Uri Keich.
\newblock Controlling the false discovery rate via competition: Is the +1
  needed?
\newblock \emph{Statistics \& Probability Letters}, 197:\penalty0 109819, 2023.
\newblock ISSN 0167-7152.
\newblock \doi{https://doi.org/10.1016/j.spl.2023.109819}.
\newblock URL
  \url{https://www.sciencedirect.com/science/article/pii/S0167715223000433}.

\bibitem[Ren and Barber(2024)]{ren:derandomised}
Zhimei Ren and Rina~Foygel Barber.
\newblock Derandomised knockoffs: leveraging e-values for false discovery rate
  control.
\newblock \emph{Journal of the Royal Statistical Society Series B}, 86\penalty0
  (1):\penalty0 122--154, 2024.

\bibitem[Rhee et~al.(2006)Rhee, Taylor, Wadhera, Ben-Hur, Brutlag, and
  Shafer]{rhee:genotypic}
Soo-Yon Rhee, Jonathan Taylor, Gauhar Wadhera, Asa Ben-Hur, Douglas~L Brutlag,
  and Robert~W Shafer.
\newblock Genotypic predictors of human immunodeficiency virus type 1 drug
  resistance.
\newblock \emph{Proceedings of the National Academy of Sciences}, 103\penalty0
  (46):\penalty0 17355--17360, 2006.

\bibitem[Spector and Janson(2022)]{spector2022powerful}
Asher Spector and Lucas Janson.
\newblock Powerful knockoffs via minimizing reconstructability.
\newblock \emph{The Annals of Statistics}, 50\penalty0 (1):\penalty0 252--276,
  2022.

\bibitem[Tibshirani(1996)]{tibshirani:regression}
R.~J. Tibshirani.
\newblock Regression shrinkage and selection via the lasso.
\newblock \emph{Journal of the Royal Statistical Society B}, 58\penalty0
  (1):\penalty0 267--288, 1996.

\bibitem[van~der Laan et~al.(2004)van~der Laan, Dudoit, and
  Pollard]{vanderLaan:augmentation}
Mark~J. van~der Laan, Sandrine Dudoit, and Katherine~S. Pollard.
\newblock Augmentation procedures for control of the generalized family-wise
  error rate and tail probabilities for the proportion of false positives.
\newblock \emph{Statistical Applications in Genetics and Molecular Biology},
  3\penalty0 (1), 2004.

\bibitem[Xing et~al.(2023)Xing, Zhao, and Liu]{xing2023controlling}
Xin Xing, Zhigen Zhao, and Jun~S Liu.
\newblock Controlling false discovery rate using gaussian mirrors.
\newblock \emph{Journal of the American Statistical Association}, 118\penalty0
  (541):\penalty0 222--241, 2023.

\bibitem[Zou and Hastie(2005)]{zou:regularization}
Hui Zou and Trevor Hastie.
\newblock Regularization and variable selection via the elastic net.
\newblock \emph{Journal of the Royal Statistical Society Series B}, 67\penalty0
  (2):\penalty0 301--320, 2005.

\end{thebibliography}

\clearpage

\appendix
\renewcommand{\thesection}{A}
\renewcommand{\thealgorithm}{A\arabic{algorithm}}
\renewcommand{\thetable}{A\arabic{table}}
\renewcommand{\thefigure}{A\arabic{figure}}
\renewcommand{\theLemma}{A\arabic{Lemma}}
\renewcommand{\theDefinition}{A\arabic{Definition}}
\renewcommand{\theRemark}{A\arabic{Remark}}
\setcounter{algorithm}{0}

\section{Algorithms}
\label{appendix:algorithms}

\begin{algorithm}[H]
	\caption{ {\bf  FDP-SD} \citep{luo:competition}}
	\label{alg:FDP-SD}
	\begin{algorithmic}[1]
		\Require  \begin{tabular}[t]{p{0.8\textwidth}}
		$\mathbf W \equiv (W_i)_{i = 1}^m$ the list of statistics; $\gamma \in (0, 1)$ the FDX threshold; $\alpha \in (0, 1)$ the FDP threshold
		\end{tabular}
		\Ensure A discovery list $\mathcal{\hat S}_{\text{FDP-SD}}$
			\State $i_0 \gets \max \{1, \lceil (\lceil \log_{2}(1/\gamma) \rceil - 1 )/\alpha \rceil \}$
            \State Let $\pi$ be a permutation such that $\lvert W_{\pi(1)} \rvert \geq \lvert W_{\pi(2)} \rvert \geq \dots \geq \lvert W_{\pi(m)} \rvert$
            \For{$i = i_0, \dots, m$}
            \State $D_i \gets \#\{j\le i : \sign(W_{\pi(j)}) = -1 \}$ 
            \State $\delta_i \gets \max \{ d \in \{0, 1, \dots, i \} : F_{B( \lfloor (i - d)\alpha + 1 + d, 1/2  \rfloor)}(d) \leq \gamma \}$ \Comment{$F_{B(n, p)}$ denotes the CDF of a Binomial$(n, p)$ RV}
            \EndFor
            \If{$D_{i_0} \leq \delta_{i_0}$}
                \State $\tau \gets \max \{i \in \{i_0, \dots, m \} : D_j \leq \delta_j, \forall j = i_0, \dots, i \}$
                \State \Return $\mathcal{\hat S}_{\text{FDP-SD}} \gets \{j \in [p]: \pi(j) \leq \tau, \sign(W_{\pi(j)}) = 1 \}$
            \Else
                \State \Return $\mathcal{\hat S}_{\text{FDP-SD}} \gets \emptyset$
            \EndIf
		\end{algorithmic}
\end{algorithm}

\newpage

\begin{algorithm}[H]
	\caption{ {\bf FDX-response-guided knockoff filter}}
	\label{alg:response-kf-fdx}
	\begin{algorithmic}[1]
		\Require  \begin{tabular}[t]{p{0.8\textwidth}}
		$\mathbf y \in \mathbb R^n$ the response vector; $\mathbf X \in \mathbb R^{n \times p}$ the design matrix with $n > p + 2$;  $d \in \{\pm 1\}$ the target sign of interest; $\gamma \in (0, 1)$ the FDX threshold; $\alpha \in (0, 1)$ the FDP threshold
		\end{tabular}
		\Ensure A discovery list $\mathcal{\hat S}_{\text{FDX-RGKF}}$
        \State $\tilde{\mathbf y} \gets \mathbf y + \tilde{\boldsymbol{\varepsilon}}$ where $\tilde{\boldsymbol{\varepsilon}} \mid \mathbf y \sim \mathcal N(\mathbf 0, s_y^2 \mathbf I_n)$ \Comment{$s_y^2$ is the sample variance of $\mathbf y$}
        \State $\mathbf s \gets s(\mathbf X, \tilde{\mathbf y}, d) \in [0,1]^p$ \Comment{$\mathbf s$ satisfies Condition~\ref{ass:correct}}
        \State $\tilde{\mathbf X} \gets g(\mathbf X, \tilde{\mathbf y}, \mathbf s)$ \Comment{$\tilde{\mathbf X}$ satisfies $\mathbf G^{(e)} = {\hat{\mathbf X}}^{(e)\top} \hat{\mathbf X}^{(e)}$}
        \State $\hat{\mathbf X} \gets [\mathbf X,  \tilde{\mathbf X}]$
        \State $\mathbf W \gets \mathbf W(\hat{\mathbf X}, \mathbf y)$ \Comment{$\mathbf W$ satisfies Conditions~\ref{ass:suff} and \ref{ass:sym}}
        \State $S_j \gets \sign\bigl((\mathbf X_{j}-\widetilde{\mathbf X}_{j})^{\!\top}\mathbf y\bigr) \cdot \sign\left(W_j \right)$ for $j = 1, \ldots, p$
        \State $\mathcal J_d \gets \{ j \in [p] : S_j = d  \}$
        \State \Return $\mathcal{\hat S}_{\text{FDX-RGKF}} \gets \text{FDP-SD}((W_j)_{j \in \mathcal J_d}, \gamma, \alpha)$ \Comment{Algorithm~\ref{alg:FDP-SD}}
		\end{algorithmic}
\end{algorithm}

\begin{algorithm}[H]
	\caption{ {\bf JS} \citep{janson:familywise}}
	\label{alg:JS}
	\begin{algorithmic}[1]
		\Require  \begin{tabular}[t]{p{0.8\textwidth}}
		$\mathbf W \equiv (W_i)_{i = 1}^m$ the list of statistics; $k \in \mathbb{N}_{>0}$ for $k$-FWER; $\alpha \in (0, 1)$ the $k$-FWER threshold\\
		\end{tabular}
		\Ensure A discovery list $\mathcal{\hat S}_{\text{JS}}$
            \State $v \gets \sup \left\{w \in \mathbb{N}_{>0} :  \sum_{i = k}^\infty2^{-i-w}\binom{i + w - 1}{i} \leq \alpha \right\}$
    		\State $\tau \gets \sup \{t > 0:  \lvert j \in [p] : W_j \leq -t \rvert = v \}$	
		\State \Return $\mathcal{\hat S}_{\text{JS}} \gets \{ j \in [p] : W_j \geq \tau \}$
		\end{algorithmic}
\end{algorithm}

\begin{algorithm}[H]
	\caption{ {\bf $k$-FWER-response-guided knockoff filter}}
	\label{alg:response-kf-kfwer}
	\begin{algorithmic}[1]
		\Require  \begin{tabular}[t]{p{0.8\textwidth}}
		$\mathbf y \in \mathbb R^n$ the response vector; $\mathbf X \in \mathbb R^{n \times p}$ the design matrix with $n > p + 2$;  $d \in \{\pm 1\}$ the target sign of interest; $k \in \mathbb{N}_{>0}$ for $k$-FWER; $\alpha \in (0, 1)$ the $k$-FWER threshold\\
		\end{tabular}
		\Ensure A discovery list $\mathcal{\hat S}_{\text{k-FWER-RGKF}}$
        \State $\tilde{\mathbf y} \gets \mathbf y + \tilde{\boldsymbol{\varepsilon}}$ where $\tilde{\boldsymbol{\varepsilon}} \mid \mathbf y \sim \mathcal N(\mathbf 0, s_y^2 \mathbf I_n)$ \Comment{$s_y^2$ is the sample variance of $\mathbf y$}
        \State $\mathbf s \gets s(\mathbf X, \tilde{\mathbf y}, d) \in [0,1]^p$ \Comment{$\mathbf s$ satisfies Condition~\ref{ass:correct}}
        \State $\tilde{\mathbf X} \gets g(\mathbf X, \tilde{\mathbf y}, \mathbf s)$ \Comment{$\tilde{\mathbf X}$ satisfies $\mathbf G^{(e)} = {\hat{\mathbf X}}^{(e)\top} \hat{\mathbf X}^{(e)}$}
        \State $\hat{\mathbf X} \gets [\mathbf X,  \tilde{\mathbf X}]$
        \State $\mathbf W \gets \mathbf W(\hat{\mathbf X}, \mathbf y)$ \Comment{$\mathbf W$ satisfies Conditions~\ref{ass:suff} and \ref{ass:sym}}
        \State $S_j \gets \sign\bigl((\mathbf X_{j}-\widetilde{\mathbf X}_{j})^{\!\top}\mathbf y\bigr) \cdot \sign\left(W_j \right)$ for $j = 1, \ldots, p$
        \State $\mathcal J_d \gets \{ j \in [p] : S_j = d  \}$
        \State \Return $\mathcal{\hat S}_{\text{k-FWER-RGKF}} \gets \text{JS}((W_j)_{j \in \mathcal J_d}, k, \alpha)$ \Comment{Algorithm~\ref{alg:JS}}
		\end{algorithmic}
\end{algorithm}

\clearpage
\appendix
\renewcommand{\thesection}{B}
\renewcommand{\thealgorithm}{B\arabic{algorithm}}
\renewcommand{\thetable}{B\arabic{table}}
\renewcommand{\thefigure}{B\arabic{figure}}
\renewcommand{\theLemma}{B\arabic{Lemma}}
\renewcommand{\theTheorem}{B\arabic{Theorem}}
\renewcommand{\theCondition}{B\arabic{Condition}}
\renewcommand{\theRemark}{B\arabic{Remark}}
\renewcommand{\theDefinition}{B\arabic{Definition}}
\setcounter{figure}{0}
\setcounter{Lemma}{0}
\setcounter{Remark}{0}
\setcounter{Definition}{0}
\setcounter{Theorem}{0}

\section{Proofs}
\label{appendix:proofs}

\subsection{Proof of Proposition~\ref{lem:signed-kf}}

\begin{proof}
    From Lemma 1 in the supplementary material of \cite{barber2019knockoff}, the $\sigma$-algebra $\mathcal{V}$ generated by $\{(\mathbf X + \tilde{\mathbf X})^{\!\top} \mathbf y, \lvert (\mathbf X - \tilde{\mathbf X})^{\!\top} \mathbf y \rvert \}$ contains $\mathbf S = \left(S_i\right)_{i = 1}^p$ and $\lvert \mathbf W \rvert$ where $S_j \equiv \sign\bigl((\mathbf X_{j}-\widetilde{\mathbf X}_{j})^{\!\top}\mathbf y\bigr) \cdot \sign\left (W_j \right)$.
    Therefore, conditioned on $\mathcal V$, the feature indices $\mathcal J \equiv \{j \in [p] : S_j= d \}$ are fixed.
    It also follows from the same lemma that conditioned on $\mathcal V$, the signs of $(W_j)_{j \in \mathcal J}$ are mutually independent with $\mathbb{P}(W_j < 0 \mid \mathcal V) \geq 1/2$ for $j \in \mathcal J \cap \mathcal{H}_0^d$.
    Therefore the assumptions of Corollary 1 in \cite{barber2019knockoff} and their subsequent proof of Theorem 4 show that we can apply SSS+ to control $FDR_{dir}$ defined in (\ref{eq:fdr_sign}) of the selected features in $\mathcal J$.
\end{proof}

\subsection{Key Lemmas for Theorem~\ref{theorem:fdr_control}}
Let $\mathbf y \sim \mathcal N (\mathbf X \boldsymbol\beta, \sigma^2 \mathbf I_n)$ with $\sigma^2>0$. Denote $\tilde{\mathbf y}$ and $\tilde{\mathbf X}$ as in the response-guided knockoff filter.
Conditional on $\tilde{\mathbf y}$, the separation parameters, the knockoff matrix, and its Gram matrix are fixed.
Let $\mathcal A \equiv \{j\in[p]:s_j>0\}$ be the set of active coordinates. This set is determined by $\tilde{\mathbf y}$. The following construction is made separately for every $j\in\mathcal A$. Write
\[
    \mathbf a_j \equiv \mathbf X_j-\tilde{\mathbf X}_j \neq \mathbf 0.
\]
Conditional on $\tilde{\mathbf y}$, use a fixed measurable completion rule to choose vectors $\{\mathbf b_\ell^{(j)}:\ell\in[n]\setminus\{j\}\}$ forming an orthonormal basis of $\mathbf a_j^\perp$, so that the following matrix is a measurable function of $\tilde{\mathbf y}$:
\begin{align}
    \mathbf A^{(j)}
    &\equiv
    \bigl[
        \mathbf b_1^{(j)},\ldots,\mathbf b_{j-1}^{(j)},
        \mathbf a_j,
        \mathbf b_{j+1}^{(j)},\ldots,\mathbf b_n^{(j)}
    \bigr]^\top \in \mathbb R^{n\times n},
    &\mathbf z^{(j)} &\equiv \mathbf A^{(j)}\mathbf y.
    \label{eq:coordinate_transform}
\end{align}
For every $j\in\mathcal A$, the matrix $\mathbf A^{(j)}$ is full rank, its $j$th row is $\mathbf a_j^\top$, and hence $z_j^{(j)}=(\mathbf X_j-\tilde{\mathbf X}_j)^\top\mathbf y$.
This implies
\begin{align}
    \mathbf a_j^\top(\mathbf X_r-\tilde{\mathbf X}_r)&=0 &&\text{for every }j\in\mathcal A\text{ and }r\neq j,\notag\\
    \mathbf a_j^\top(\mathbf X_r+\tilde{\mathbf X}_r)&=0 &&\text{for every }j\in\mathcal A\text{ and }r\in[p].
    \label{eq:coordinate_orthogonality}
\end{align}
For any vector $\mathbf x$, let $\mathbf x_{-\ell}$ denote $\mathbf x$ with its $\ell$th coordinate removed, and let $\mathbf x_{\ell\gets w}$ denote $\mathbf x$ with its $\ell$th coordinate replaced by $w$.
Consequently, for each $j\in\mathcal A$, all other feature--knockoff contrasts and all feature--knockoff sums lie in $\mathbf a_j^\perp$, and their inner products with $\mathbf y$ are functions of $\mathbf z_{-j}^{(j)}$.

\begin{Lemma}[\cite{barber2019knockoff}]
    \label{lemma:barber}
    For any $j\in\mathcal A$, let
    \[
        \mathcal H_j \equiv \sigma\!\left(\mathbf z_{-j}^{(j)},\lvert z_j^{(j)}\rvert,\tilde{\mathbf y}\right).
    \]
    Then $\lvert \mathbf W \rvert$ and $\mathbf S$ are $\mathcal H_j$-measurable.
\end{Lemma}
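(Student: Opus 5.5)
The plan is to show that $|\mathbf W|$ and $\mathbf S$ are functions of the inner products $\hat{\mathbf X}^{\!\top}\mathbf y$ together with quantities fixed by $\tilde{\mathbf y}$, and that these inner products are in turn determined by $\mathcal H_j$ up to a sign flip of $z_j^{(j)}$ which acts as a swap of $\mathbf X_j$ with $\tilde{\mathbf X}_j$.

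\emph{Step 1: what $\mathcal H_j$ determines.} Conditional on $\tilde{\mathbf y}$, the vector $\mathbf s$, the matrix $\tilde{\mathbf X}$, the Gram matrix $\hat{\mathbf X}^{\!\top}\hat{\mathbf X}$ and $\mathbf A^{(j)}$ are fixed. By Condition~\ref{ass:suff}, $\mathbf W$ depends on the data only through $\hat{\mathbf X}^{\!\top}\hat{\mathbf X}$ (which is $\tilde{\mathbf y}$-measurable) and $\hat{\mathbf X}^{\!\top}\mathbf y$. The latter is equivalent to the collection
\[
\bigl\{(\mathbf X_r+\tilde{\mathbf X}_r)^{\!\top}\mathbf y,\ (\mathbf X_r-\tilde{\mathbf X}_r)^{\!\top}\mathbf y : r\in[p]\bigr\}.
\]
By \eqref{eq:coordinate_orthogonality}, every sum $\mathbf X_r+\tilde{\mathbf X}_r$ and every contrast $\mathbf X_r-\tilde{\mathbf X}_r$ with $r\neq j$ lies in $\mathbf a_j^\perp$. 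This subspace is spanned by the fixed vectors $\mathbf b^{(j)}_\ell$, so the corresponding inner products with $\mathbf y$ are fixed linear functions of $\mathbf z^{(j)}_{-j}$. The one remaining coordinate is $(\mathbf X_j-\tilde{\mathbf X}_j)^{\!\top}\mathbf y = z^{(j)}_j$. Hence $\mathcal H_j$ determines $\hat{\mathbf X}^{\!\top}\mathbf y$ up to the sign of $z_j^{(j)}$.

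\emph{Step 2: the two candidates are related by a swap.} Write $\hat{\mathbf X}^{\!\top}\mathbf y$ for the observed inner products and $\hat{\mathbf X}^{\!\top}\mathbf y'$ for the alternative candidate with $z_j^{(j)}$ replaced by $-z_j^{(j)}$. Flipping the sign of $(\mathbf X_j-\tilde{\mathbf X}_j)^{\!\top}\mathbf y$ while keeping the sum $(\mathbf X_j+\tilde{\mathbf X}_j)^{\!\top}\mathbf y$ fixed interchanges $\mathbf X_j^{\!\top}\mathbf y$ and $\tilde{\mathbf X}_j^{\!\top}\mathbf y$. This is exactly the effect on the inner-product vector of the swap $\Pi$ exchanging columns $j$ and $j+p$.

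For the Gram matrix, \eqref{eq:corr_structure} and \eqref{eq:corr_away} make $\hat{\mathbf X}^{\!\top}\hat{\mathbf X}$ invariant under $\Pi$. Condition~\ref{ass:correct} adds no complication here, since it only fixes $\tilde{\mathbf X}$ given $\tilde{\mathbf y}$ and Condition~\ref{ass:suff} never sees $\tilde{\mathbf y}$ directly.

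So the candidate $\hat{\mathbf X}^{\!\top}\mathbf y'$ equals the inner products that the swapped design $\hat{\mathbf X}\circ\Pi$ would produce with the true $\mathbf y$, and its Gram matrix is unchanged. By Condition~\ref{ass:sym}, $\mathbf W$ evaluated on this candidate equals $\mathbf W$ with the sign of the $j$th coordinate flipped and all other coordinates unchanged.

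\emph{Step 3: conclude.} On the two candidates, $|W_r|$ agrees for every $r$, so $|\mathbf W|$ is $\mathcal H_j$-measurable. For $r\neq j$, both $\sign(W_r)$ and $\sign\bigl((\mathbf X_r-\tilde{\mathbf X}_r)^{\!\top}\mathbf y\bigr)$ are invariant under the flip, so $S_r$ is invariant. For $r=j$, both factors flip sign, so $S_j$ is unchanged. If $z^{(j)}_j=0$, both factors are degenerate and any convention for $\sign(0)$ is still a function of $|z_j^{(j)}|$.

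Therefore $\mathbf S$ and $|\mathbf W|$ are functions of the $\mathcal H_j$-measurable pair of candidates that is symmetric in the sign of $z^{(j)}_j$, and hence are $\mathcal H_j$-measurable. The main obstacle is stating Step 2 rigorously as a measurability claim: writing $\mathbf W$ as $F(\mathbf z^{(j)}_{-j}, z^{(j)}_j;\tilde{\mathbf y})$ with $F$ jointly measurable, and using the fact that a function even in $z_j^{(j)}$ is a measurable function of $|z_j^{(j)}|$.
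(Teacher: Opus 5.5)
Your proof is correct, but it takes a slightly different route from the paper.

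The paper never works with $\mathcal H_j$ directly. It introduces the coarser, $j$-independent $\sigma$-algebra $\mathcal V=\sigma\bigl(\tilde{\mathbf y},\lvert(\mathbf X-\tilde{\mathbf X})^{\top}\mathbf y\rvert,(\mathbf X+\tilde{\mathbf X})^{\top}\mathbf y\bigr)$. It then cites Lemma~1 of \cite{barber2019knockoff}, applied conditionally on $\tilde{\mathbf y}$ where the Gram matrix is fixed, to conclude that $\lvert\mathbf W\rvert$ and $\mathbf S$ are $\mathcal V$-measurable. The orthogonality relations \eqref{eq:coordinate_orthogonality} are used only to observe that $\mathcal V\subseteq\mathcal H_j$.

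You instead re-derive the relevant content of that cited lemma for the single coordinate $j$:
\begin{itemize}
\item the orthogonality relations show that $\mathcal H_j$ determines $\hat{\mathbf X}^{\top}\mathbf y$ up to the sign of $z_j^{(j)}$;
\item flipping that sign is realised by the swap $j\leftrightarrow j+p$ applied to the actual $\mathbf y$;
\item swap-invariance of the Gram matrix together with antisymmetry then make $\lvert\mathbf W\rvert$ and $\mathbf S$ even in $z^{(j)}_j$.
\end{itemize}

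The paper's route is shorter and gives a stronger statement, namely measurability with respect to the coarser $\mathcal V$, uniformly in $j$. However, it relies on an external result whose hypotheses must be taken to transfer to the conditional-on-$\tilde{\mathbf y}$ setting.

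Your route is self-contained and needs antisymmetry only for the single swap $j\leftrightarrow j+p$. It also makes explicit the measurability bookkeeping that the citation leaves implicit: writing $\mathbf W=F(\mathbf z^{(j)}_{-j},z^{(j)}_j;\tilde{\mathbf y})$, using evenness in the second argument, and requiring measurable dependence of $\tilde{\mathbf X}$ and $\mathbf A^{(j)}$ on $\tilde{\mathbf y}$.
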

\begin{proof}
    Let
    \[
        \mathcal V
        \equiv \sigma\!\left(
            \tilde{\mathbf y},
            \lvert(\mathbf X-\tilde{\mathbf X})^\top\mathbf y\rvert,
            (\mathbf X+\tilde{\mathbf X})^\top\mathbf y
        \right).
    \]
    Conditional on $\tilde{\mathbf y}$, the knockoff Gram matrix is fixed. Since the response-guided knockoff statistics satisfy Conditions~\ref{ass:suff} and~\ref{ass:sym}, Lemma~1 in \cite{barber2019knockoff} gives
    \begin{itemize}
        \item $\lvert \mathbf W \rvert$ is $\mathcal V$-measurable, and
        \item $\mathbf S$ is $\mathcal V$-measurable.
    \end{itemize}
    By \eqref{eq:coordinate_orthogonality}, $\mathcal V\subseteq\mathcal H_j$, which proves the result.
\end{proof}

The following conditional sign-odds identity is standard and likely appears elsewhere; we nevertheless include a proof for completeness.

\begin{Lemma}
\label{lemma:sign-odds}
Let $\mathbf Z\in\mathbb R^n$ be a random vector and fix $j\in[n]$.
Write
\[
\mathbf U:=\mathbf Z_{-j}\in\mathbb R^{n-1},\qquad R:=|Z_j|\in[0,\infty).
\]
For $\mathbf u\in\mathbb R^{n-1}$ and $r\in\mathbb R$, let $\mathbf v_j(\mathbf u,r)\in\mathbb R^n$ denote the vector obtained by inserting $r$ into the $j$th position of $\mathbf u$.
Assume that $\mathbf Z$ has a Borel-measurable density $f_{\mathbf Z}:\mathbb R^n\to[0,\infty)$. Let $\mathcal G\equiv\sigma(\mathbf U,R)$, and let $T\in\{-1,+1\}$ be $\mathcal G$-measurable. Then the following holds almost surely:
\begin{align}
\mathbb{P}\bigl(\sign(Z_j)=T\mid \mathcal G\bigr)
&=\frac{f_{\mathbf Z}\bigl(\mathbf v_j(\mathbf U,TR)\bigr)}
{f_{\mathbf Z}\bigl(\mathbf v_j(\mathbf U,TR)\bigr)+f_{\mathbf Z}\bigl(\mathbf v_j(\mathbf U,-TR)\bigr)},\label{eq:condplus}\\[0.5em]
\mathbb{P}\bigl(\sign(Z_j)=-T\mid \mathcal G\bigr)
&=\frac{f_{\mathbf Z}\bigl(\mathbf v_j(\mathbf U,-TR)\bigr)}
{f_{\mathbf Z}\bigl(\mathbf v_j(\mathbf U,TR)\bigr)+f_{\mathbf Z}\bigl(\mathbf v_j(\mathbf U,-TR)\bigr)}.\label{eq:condminus}
\end{align}
\end{Lemma}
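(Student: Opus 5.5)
The identity is a direct computation with the joint density of $(\mathbf U, Z_j)$. The plan is to express the conditional law of $\sign(Z_j)$ given $(\mathbf U,R)$ through the two preimages $\mathbf v_j(\mathbf U,R)$ and $\mathbf v_j(\mathbf U,-R)$ of a point $(\mathbf u,r)$ under the map $\mathbf z\mapsto(\mathbf z_{-j},|z_j|)$. Then we verify the defining property of conditional expectation on generators of $\mathcal G$.

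\textbf{Step 1: the case $T\equiv+1$.} Take a bounded Borel $h:\mathbb R^{n-1}\times[0,\infty)\to\mathbb R$. Splitting the integral over $z_j>0$ and $z_j<0$, and ignoring $z_j=0$, which has Lebesgue measure zero, gives
\[
\mathbb E\bigl[h(\mathbf U,R)\mathbf 1\{Z_j>0\}\bigr]=\int\!\!\int_0^\infty h(\mathbf u,r)\,f_{\mathbf Z}(\mathbf v_j(\mathbf u,r))\,dr\,d\mathbf u .
\]
Similarly, using the substitution $z_j=-r$, the joint density of $(\mathbf U,R)$ on $\mathbb R^{n-1}\times(0,\infty)$ is $g(\mathbf u,r)=f_{\mathbf Z}(\mathbf v_j(\mathbf u,r))+f_{\mathbf Z}(\mathbf v_j(\mathbf u,-r))$.

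Define $\phi(\mathbf u,r)=f_{\mathbf Z}(\mathbf v_j(\mathbf u,r))/g(\mathbf u,r)$ where $g>0$, and set $\phi=1/2$ where $g=0$. Then $\mathbb E[h(\mathbf U,R)\phi(\mathbf U,R)]$ equals the display above. The set $\{g=0\}$ has $(\mathbf U,R)$-probability zero, so it contributes nothing. Since $\phi$ is Borel, this identifies $\mathbb P(Z_j>0\mid\mathcal G)=\phi(\mathbf U,R)$ a.s. The same computation with $z_j<0$ gives the complementary formula.

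\textbf{Step 2: general $\mathcal G$-measurable $T$.} On $\{T=+1\}$, formula~\eqref{eq:condplus} is exactly Step~1. On $\{T=-1\}$, it is the $Z_j<0$ formula, since $\mathbf v_j(\mathbf U,TR)=\mathbf v_j(\mathbf U,-R)$ there. Because $\{T=\pm1\}\in\mathcal G$, we may use
\[
\mathbb P(\sign Z_j=T\mid\mathcal G)=\mathbf 1\{T=1\}\,\mathbb P(Z_j>0\mid\mathcal G)+\mathbf 1\{T=-1\}\,\mathbb P(Z_j<0\mid\mathcal G),
\]
which is the pull-out property. Substituting the two formulas from Step~1 yields~\eqref{eq:condplus}. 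Then~\eqref{eq:condminus} follows from~\eqref{eq:condminus}'s complement: the two conditional probabilities sum to one a.s., because $\mathbb P(Z_j=0)=0$.

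\textbf{Main obstacle.} The only delicate point is measure-theoretic bookkeeping: the null sets $\{Z_j=0\}$ and $\{g=0\}$, and the Borel measurability of $\phi$ and of the composition with $T$. I expect no real difficulty beyond stating these explicitly.
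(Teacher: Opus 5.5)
Your proof is correct and follows the paper's argument essentially step for step. Like the paper, you verify the defining property of conditional expectation for $T\equiv+1$ by splitting the $z_j$-integral into its positive and negative halves, and then pass to a general $\mathcal G$-measurable $T$ by partitioning over $\{T=\pm1\}$, which is exactly your pull-out step.

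The differences are cosmetic: you use bounded test functions $h$ in place of indicators $1_{\{(\mathbf U,R)\in B\}}$, and you set the value $1/2$ rather than $0$ on the null set $\{g=0\}$. There is also a self-referential typo at the end, where you mean that \eqref{eq:condminus} follows from \eqref{eq:condplus} by complementation (or by applying \eqref{eq:condplus} to $-T$).
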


\begin{proof}
It suffices first to prove the first identity for the event $\{T=+1\}$.
The case $\{T=-1\}$ follows by interchanging the two signs, and the result for a $\mathcal G$-measurable $T$ follows by partitioning over the events $\{T=+1\}$ and $\{T=-1\}$.
For $\mathbf u\in\mathbb R^{n-1}$ and $r\geq0$, define
\[
\psi(\mathbf u,r)
\equiv
\frac{f_{\mathbf Z}\bigl(\mathbf v_j(\mathbf u,r)\bigr)}
{f_{\mathbf Z}\bigl(\mathbf v_j(\mathbf u,r)\bigr)+f_{\mathbf Z}\bigl(\mathbf v_j(\mathbf u,-r)\bigr)}
\]
when the denominator is positive, and set $\psi(\mathbf u,r)=0$ otherwise. Since $f_{\mathbf Z}$ is Borel measurable, $\psi(\mathbf U,R)$ is $\mathcal G$-measurable. Fix an arbitrary Borel set $B\subseteq\mathbb R^{n-1}\times[0,\infty)$.
Then by definition of conditional expectation, what remains to show is that
\[ 
\mathbb E \left(  1_{\{Z_j = R\}} \cdot  1_{\{(\mathbf U,R) \in B\}} \right) = \mathbb E \left( \psi(\mathbf U,R) \cdot 1_{\{(\mathbf U,R) \in B\}} \right).
\]
The left-hand side equals
\begin{align*}
&\int_{\mathbb R^n} 1_{\{(\mathbf z_{-j},\lvert z_j\rvert)\in B\}} \cdot 1_{\{z_j=\lvert z_j\rvert\}} \cdot f_{\mathbf Z}(\mathbf z)\,d\mathbf z \\
&\qquad =\int_{\mathbb R^{n-1}}\int_0^\infty 1_{\{(\mathbf u,r)\in B\}} \cdot f_{\mathbf Z}\bigl(\mathbf v_j(\mathbf u,r)\bigr)\,dr\,d\mathbf u,
\end{align*}
where we have used the fact that $\{Z_j=R\}$ and $\{Z_j>0\}$ are equal a.s..
Similarly, the right-hand side equals
\begin{align*}
&\int_{\mathbb R^n}1_{\{(\mathbf z_{-j},\lvert z_j\rvert)\in B\}}\cdot\psi(\mathbf z_{-j},\lvert z_j\rvert)\cdot f_{\mathbf Z}(\mathbf z)\,d\mathbf z\\
&\quad=\int_{\mathbb R^{n-1}}\int_0^\infty 1_{\{(\mathbf u,r)\in B\}}\cdot\psi(\mathbf u,r)
\left[f_{\mathbf Z}\bigl(\mathbf v_j(\mathbf u,r)\bigr)+f_{\mathbf Z}\bigl(\mathbf v_j(\mathbf u,-r)\bigr)\right] \, dr\,d\mathbf u\\
&\quad=\int_{\mathbb R^{n-1}}\int_0^\infty 1_{\{(\mathbf u,r)\in B\}}\cdot f_{\mathbf Z}\bigl(\mathbf v_j(\mathbf u,r)\bigr)\,dr\,d\mathbf u,
\end{align*}
which is the same expression as the left-hand side above.
\end{proof}

\begin{Lemma}
    \label{lemma:rgkf-density-ratio}
    For any $j\in\mathcal A$, let $\mathbf z^{(j)}$ be defined by \eqref{eq:coordinate_transform} and set $r=S_j\lvert z_j^{(j)}\rvert$. Conditional on $\tilde{\mathbf y}$, let $f_j(\cdot\mid\tilde{\mathbf y})$ denote the density of $\mathbf z^{(j)}$. Then, almost surely on this event,
    \[
    \frac{ f_j\left( \mathbf z^{(j)}_{j \gets r} \mid \tilde{\mathbf y}\right)}
    {f_j\left( \mathbf z^{(j)}_{j \gets -r} \mid \tilde{\mathbf y}\right)} \leq 1
    \] 
    if and only if $\sign(\beta_j) \neq S_j$. Moreover,
    \begin{align}
        \label{eq:rgkf_cond_density_z}
        f_j(\mathbf c\mid\tilde{\mathbf y})
        \propto g^{(j)}(\mathbf c)\,
        f_{\tilde{\boldsymbol\varepsilon}\mid\mathbf y}
        \left(
            \tilde{\mathbf y}-(\mathbf A^{(j)})^{-1}\mathbf c
            \mathrel{\Big|}(\mathbf A^{(j)})^{-1}\mathbf c
        \right),
    \end{align}
    where $g^{(j)}$ is the $\mathcal N\!\left(\mathbf A^{(j)}\mathbf X\boldsymbol\beta,\sigma^2\mathbf A^{(j)}\mathbf A^{(j)\top}\right)$ density and the proportionality constant does not depend on $\mathbf c$.
\end{Lemma}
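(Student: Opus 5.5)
The plan is to write down the conditional law of $\mathbf y$ given $\tilde{\mathbf y}$ explicitly, transport it to the $\mathbf z^{(j)}$ coordinates, and then compare the density at $z_j^{(j)}=r$ and $z_j^{(j)}=-r$ by viewing the sign flip as a reflection of $\mathbf y$ across the hyperplane $\mathbf a_j^\perp$.

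\emph{Step 1: the density formula \eqref{eq:rgkf_cond_density_z}.} The joint density of $(\mathbf y,\tilde{\mathbf y})$ factors as $f_{\mathbf y}(\mathbf y)\, f_{\tilde{\boldsymbol\varepsilon}\mid\mathbf y}(\tilde{\mathbf y}-\mathbf y\mid\mathbf y)$, where $f_{\mathbf y}$ is the $\mathcal N(\mathbf X\boldsymbol\beta,\sigma^2\mathbf I_n)$ density. By Bayes, the conditional density of $\mathbf y$ given $\tilde{\mathbf y}$ is proportional to this product as a function of $\mathbf y$. Conditional on $\tilde{\mathbf y}$, the matrix $\mathbf A^{(j)}$ is fixed and invertible. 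The change of variables $\mathbf c=\mathbf A^{(j)}\mathbf y$ therefore has constant Jacobian $\lvert\det \mathbf A^{(j)}\rvert^{-1}$. Moreover $f_{\mathbf y}((\mathbf A^{(j)})^{-1}\mathbf c)$ is proportional to the density $g^{(j)}(\mathbf c)$ of $\mathcal N(\mathbf A^{(j)}\mathbf X\boldsymbol\beta,\sigma^2\mathbf A^{(j)}\mathbf A^{(j)\top})$. Together these give \eqref{eq:rgkf_cond_density_z}, with a constant that is free of $\mathbf c$.

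\emph{Step 2: the sign flip as a reflection.} The rows $\mathbf b^{(j)}_\ell$ are orthogonal to $\mathbf a_j$. Hence replacing $z_j^{(j)}$ by its negative, with all other coordinates of $\mathbf z^{(j)}$ held fixed, corresponds to replacing $\mathbf y$ by
\[
\mathbf R\mathbf y \equiv \mathbf y-2\frac{\mathbf a_j^\top\mathbf y}{\|\mathbf a_j\|^2}\,\mathbf a_j ,
\]
where $\mathbf R$ is the orthogonal reflection fixing $\mathbf a_j^\perp$. I then show that the perturbation factor takes the same value at $\mathbf y$ and at $\mathbf R\mathbf y$, using Condition~\ref{ass:correct}:
\begin{itemize}
\item From $\mathbf X^\top\mathbf 1=\tilde{\mathbf X}^\top\mathbf 1$ we get $\mathbf a_j\perp\mathbf 1$. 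So $\mathbf R$ fixes $\mathbf 1$ and preserves both $\mathbf 1^\top\mathbf y$ and $\|\mathbf y\|$, hence preserves the sample variance $s_y^2$. This shows the data-dependent variance, and the normalising constant $(2\pi s_y^2)^{-n/2}$, are unchanged.
\item From $\mathbf X^\top\tilde{\mathbf y}=\tilde{\mathbf X}^\top\tilde{\mathbf y}$ we get $\mathbf a_j\perp\tilde{\mathbf y}$, so $\mathbf R\tilde{\mathbf y}=\tilde{\mathbf y}$. Then $\|\tilde{\mathbf y}-\mathbf R\mathbf y\|=\|\mathbf R(\tilde{\mathbf y}-\mathbf y)\|=\|\tilde{\mathbf y}-\mathbf y\|$, so the Gaussian kernel is unchanged as well.
\end{itemize}
Consequently the density ratio in the statement reduces to the ratio $f_{\mathbf y}(\mathbf y_+)/f_{\mathbf y}(\mathbf R\mathbf y_+)$, where $\mathbf y_+$ denotes the preimage of $\mathbf z^{(j)}_{j\gets r}$.

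\emph{Step 3: the Gaussian ratio.} Write $\boldsymbol\mu=\mathbf X\boldsymbol\beta$. Since $\mathbf R$ is an orthogonal involution, $\|\mathbf R\mathbf y-\boldsymbol\mu\|=\|\mathbf y-\mathbf R\boldsymbol\mu\|$, which gives
\[
\|\mathbf R\mathbf y-\boldsymbol\mu\|^2-\|\mathbf y-\boldsymbol\mu\|^2=\frac{4(\mathbf a_j^\top\mathbf y)(\mathbf a_j^\top\boldsymbol\mu)}{\|\mathbf a_j\|^2}.
\]
By \eqref{eq:coordinate_orthogonality} we have $\mathbf a_j^\top\mathbf X_r=0$ for $r\neq j$. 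Also $\mathbf a_j^\top(\mathbf X_j+\tilde{\mathbf X}_j)=0$, which yields $\mathbf a_j^\top\mathbf X_j=\|\mathbf a_j\|^2/2=s_j>0$, and hence $\mathbf a_j^\top\boldsymbol\mu=\beta_j s_j$. At $\mathbf y_+$ we have $\mathbf a_j^\top\mathbf y_+=r$, so the density ratio equals $\exp(2r\beta_j/\sigma^2)$. Now $r=S_j\lvert z_j^{(j)}\rvert$ and $z^{(j)}_j\neq0$ almost surely. Therefore the ratio is $\le1$ iff $S_j\beta_j\le 0$, i.e.\ iff $\beta_j=0$ or $\sign(\beta_j)=-S_j$, which is exactly $\sign(\beta_j)\neq S_j$.

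The main obstacle is Step 2, specifically checking that the data-dependent perturbation density $f_{\tilde{\boldsymbol\varepsilon}\mid\mathbf y}$ is invariant under the reflection. This is the one place where both extra constraints of Condition~\ref{ass:correct} are indispensable. It also requires care that $\mathbf A^{(j)}$ depends on $\tilde{\mathbf y}$ only, so that the change of variables is legitimate conditionally.
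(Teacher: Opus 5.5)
Your proposal is correct and follows the paper's proof essentially step for step. The same change of variables gives \eqref{eq:rgkf_cond_density_z}, and the same Householder reflection with $\mathbf a_j\perp\tilde{\mathbf y}$ and $\mathbf a_j\perp\mathbf 1$ makes the perturbation factor invariant; the only cosmetic difference is that you compute the Gaussian factor in $\mathbf y$-space via the reflection, whereas the paper uses independence of $z_j^{(j)}$ and $\mathbf z_{-j}^{(j)}$ under $g^{(j)}$. One small arithmetic slip: since $\lVert\mathbf a_j\rVert^2=2s_j$, the ratio is $\exp(r\beta_j/\sigma^2)$ rather than $\exp(2r\beta_j/\sigma^2)$, which does not affect the sign conclusion.
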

\begin{proof}
    Formula~\eqref{eq:rgkf_cond_density_z} follows by applying the change of variables $\mathbf c=\mathbf A^{(j)}\mathbf y$ to the conditional density of $\mathbf y$ given $\tilde{\mathbf y}$. The Gaussian density $g^{(j)}$ includes the Jacobian $\lvert\det(\mathbf A^{(j)})^{-1}\rvert$.

    To compare the two sides of the ratio in the lemma, define
    \begin{align*}
        \mathbf c_+&\equiv\mathbf z^{(j)}_{j\gets r},
        &\mathbf c_-&\equiv\mathbf z^{(j)}_{j\gets-r},\\
        \mathbf x_+&\equiv(\mathbf A^{(j)})^{-1}\mathbf c_+,
        &\mathbf x_-&\equiv(\mathbf A^{(j)})^{-1}\mathbf c_-.
    \end{align*}
    Taking the ratio of \eqref{eq:rgkf_cond_density_z} at $\mathbf c_+$ and $\mathbf c_-$ gives
    \begin{align}
        \frac{f_j(\mathbf c_+\mid\tilde{\mathbf y})}
        {f_j(\mathbf c_-\mid\tilde{\mathbf y})}
        &={\frac{g^{(j)}(\mathbf c_+)}{g^{(j)}(\mathbf c_-)}}
        \cdot
        \frac{
            f_{\tilde{\boldsymbol\varepsilon}\mid\mathbf y}
            (\tilde{\mathbf y}-\mathbf x_+\mid\mathbf x_+)
        }{
            f_{\tilde{\boldsymbol\varepsilon}\mid\mathbf y}
            (\tilde{\mathbf y}-\mathbf x_-\mid\mathbf x_-)
        }.
        \label{eq:factor_density_ratio}
    \end{align}
    We evaluate the two factors separately.

    First consider the Gaussian factor. Under $g^{(j)}$, the covariance between the $j$th coordinate and any coordinate $\ell\neq j$ is
    \[
        \operatorname{Cov}_{g^{(j)}}(z_j^{(j)},z_\ell^{(j)})
        =\sigma^2\mathbf a_j^\top\mathbf b_\ell^{(j)}
        =0,
    \]
    because $\mathbf b_\ell^{(j)}\in\mathbf a_j^\perp$. Since $g^{(j)}$ is a multivariate Gaussian density, zero covariance implies that $z_j^{(j)}$ is independent of $\mathbf z_{-j}^{(j)}$. The vectors $\mathbf c_+$ and $\mathbf c_-$ agree outside their $j$th coordinate, so
    \[
        \frac{g^{(j)}(\mathbf c_+)}{g^{(j)}(\mathbf c_-)}
        =\frac{g_j^{(j)}(r)}{g_j^{(j)}(-r)},
    \]
    where $g_j^{(j)}$ is the marginal density of
    \[
        z_j^{(j)}\sim\mathcal N\!\left(
            \mathbf a_j^\top\mathbf X\boldsymbol\beta,
            \sigma^2\lVert\mathbf a_j\rVert^2
        \right)
    \]
    under $g^{(j)}$. Condition~\ref{ass:gram} gives $\mathbf a_j^\top\mathbf X\boldsymbol\beta=s_j\beta_j$. For a $\mathcal N(\theta,\tau^2)$ density $h$, we have $h(t)/h(-t)=\exp(2\theta t/\tau^2)$. Therefore,
    \begin{align}
        \frac{g^{(j)}(\mathbf c_+)}{g^{(j)}(\mathbf c_-)}
        =\exp\!\left(
            \frac{2s_j\beta_jr}{\sigma^2\lVert\mathbf a_j\rVert^2}
        \right).
        \label{eq:gaussian_density_ratio}
    \end{align}
    Since $s_j>0$ and $\lVert\mathbf a_j\rVert^2>0$, the Gaussian factor in \eqref{eq:gaussian_density_ratio} is at most one if and only if $\sign(\beta_j)\neq \sign(r) = S_j$.

    We next show that the perturbation-density factor in \eqref{eq:factor_density_ratio} equals one. Since
    $\tilde{\boldsymbol\varepsilon}\mid\mathbf y=\mathbf x\sim
    \mathcal N(\mathbf0,s_{\mathbf x}^2\mathbf I_n)$, its density at $\tilde{\mathbf y}-\mathbf x$ is
    \[
        f_{\tilde{\boldsymbol\varepsilon}\mid\mathbf y}
        (\tilde{\mathbf y}-\mathbf x\mid\mathbf x)
        =\left(2\pi s_{\mathbf x}^2\right)^{-n/2}
        \exp\!\left(
            -\frac{\lVert\tilde{\mathbf y}-\mathbf x\rVert^2}
            {2s_{\mathbf x}^2}
        \right).
    \]
    It therefore suffices to prove
    \begin{equation}
        \lVert\tilde{\mathbf y}-\mathbf x_-\rVert
        =\lVert\tilde{\mathbf y}-\mathbf x_+\rVert
        \quad\text{and}\quad
        s_{\mathbf x_-}^2=s_{\mathbf x_+}^2.
        \label{eq:reflection_invariants}
    \end{equation}

    Define the Householder reflection
    \[
        \mathbf H_j
        \equiv\mathbf I_n-2\frac{\mathbf a_j\mathbf a_j^\top}
        {\lVert\mathbf a_j\rVert^2}.
    \]
    For $\ell\neq j$, we have $\mathbf b_\ell^{(j)\top}\mathbf H_j=\mathbf b_\ell^{(j)\top}$, while $\mathbf a_j^\top\mathbf H_j=-\mathbf a_j^\top$. Therefore, for every $\ell\neq j$,
    \[
        (\mathbf A^{(j)}\mathbf H_j\mathbf x_+)_\ell
        =\mathbf b_\ell^{(j)\top}\mathbf x_+
        =(\mathbf c_+)_\ell
        =(\mathbf c_-)_\ell,
    \]
    whereas the $j$th coordinate satisfies
    \[
        (\mathbf A^{(j)}\mathbf H_j\mathbf x_+)_j
        =-\mathbf a_j^\top\mathbf x_+
        =-r
        =(\mathbf c_-)_j.
    \]
    Thus $\mathbf A^{(j)}\mathbf H_j\mathbf x_+=\mathbf c_-=\mathbf A^{(j)}\mathbf x_-$, and the invertibility of $\mathbf A^{(j)}$ gives $\mathbf H_j\mathbf x_+=\mathbf x_-$.

    Condition~\ref{ass:correct} gives $\mathbf a_j^\top\tilde{\mathbf y}=0$ and $\mathbf a_j^\top\mathbf1=0$, so
    \[
        \mathbf H_j\tilde{\mathbf y}=\tilde{\mathbf y}
        \qquad\text{and}\qquad
        \mathbf H_j\mathbf1=\mathbf1.
    \]
    Because $\mathbf H_j$ is orthogonal,
    \begin{align*}
        \lVert\tilde{\mathbf y}-\mathbf x_-\rVert
        &=\lVert\mathbf H_j\tilde{\mathbf y}-\mathbf H_j\mathbf x_+\rVert
        =\lVert\mathbf H_j(\tilde{\mathbf y}-\mathbf x_+)\rVert
        =\lVert\tilde{\mathbf y}-\mathbf x_+\rVert.
    \end{align*}
    Let $\bar x_\pm\equiv n^{-1}\mathbf1^\top\mathbf x_\pm$ denote the sample means of $\mathbf x_\pm$. Since $\mathbf x_-=\mathbf H_j\mathbf x_+$, the matrix $\mathbf H_j$ is symmetric, and $\mathbf H_j\mathbf1=\mathbf1$, we have
    \[
        \bar x_-
        =\frac1n\mathbf1^\top\mathbf H_j\mathbf x_+
        =\frac1n(\mathbf H_j\mathbf1)^\top\mathbf x_+
        =\bar x_+,
    \]
    and consequently
    \[
        \mathbf x_- -\bar x_-\mathbf1
        =\mathbf H_j\bigl(\mathbf x_+-\bar x_+\mathbf1\bigr).
    \]
    Since sample variance is a fixed multiple of the squared norm of the centered vector, orthogonality of $\mathbf H_j$ now gives $s_{\mathbf x_-}^2=s_{\mathbf x_+}^2$. This proves \eqref{eq:reflection_invariants} and therefore,
    \[
        \frac{
            f_{\tilde{\boldsymbol\varepsilon}\mid\mathbf y}
            (\tilde{\mathbf y}-\mathbf x_+\mid\mathbf x_+)
        }{
            f_{\tilde{\boldsymbol\varepsilon}\mid\mathbf y}
            (\tilde{\mathbf y}-\mathbf x_-\mid\mathbf x_-)
        }
        =1.
    \]

    Combining this identity with \eqref{eq:factor_density_ratio} and \eqref{eq:gaussian_density_ratio} proves the result.
\end{proof}

\subsection{Proof of Theorem~\ref{theorem:fdr_control}}
\label{appendix:proof_theorem_fdr_control}

The proof of Theorem~\ref{theorem:fdr_control} relies on Definition~\ref{def:cond_almost_independence}.

\begin{Definition}[Almost conditional independence (from \cite{li2022searching})]
    \label{def:cond_almost_independence}
    Let $\mathbf W$ be a random vector in $\mathbb R^p$.
    We say that $\{\sign(W_j) : j \in \mathcal H\}$ where $\mathcal H \subseteq [p]$ are almost conditionally independent if for every $j \in \mathcal H$,
    \[ \mathbb{P}(W_j > 0 \mid \lvert \mathbf W \rvert, \sign(\mathbf W_{-j})) \leq 1/2. \]
\end{Definition}

\begin{proof}[Proof of Theorem~\ref{theorem:fdr_control}]
    \label{proof:theorem_fdr_control}
    For a realised perturbed response $\tilde{\mathbf y}=\mathbf b$, the conditional density of $\mathbf y$ is
    \begin{align*}
        f_{\mathbf y|\tilde{\mathbf y}}(\mathbf a \mid \mathbf b) &= \frac{f_{\mathbf y, \tilde{\mathbf y}}(\mathbf a, \mathbf b)}{f_{\tilde{\mathbf y}}(\mathbf b)} = \frac{f_{\mathbf y, \tilde{\boldsymbol{\varepsilon}}}(\mathbf a, \mathbf b - \mathbf a)}{f_{\tilde{\mathbf y}}(\mathbf b)} = \frac{f_{\mathbf y}(\mathbf a)\cdot f_{\tilde{\boldsymbol{\varepsilon}}|\mathbf y}(\mathbf b - \mathbf a \mid \mathbf a)}{f_{\tilde{\mathbf y}}(\mathbf b)},
    \end{align*}
    where the second equality follows from the unit-Jacobian transformation $\tilde{\boldsymbol\varepsilon}=\tilde{\mathbf y}-\mathbf y$.

    Fix $j\in\mathcal A$ and abbreviate $\mathbf z^{(j)}$ as $\mathbf z$.
    Recall that $S_j \equiv \sign\bigl((\mathbf X_{j}-\widetilde{\mathbf X}_{j})^{\!\top}\mathbf y\bigr) \cdot \sign\left(W_j \right)$.
    Conditional on $\tilde{\mathbf y}$, the vector $\mathbf z$ has density $f_j(\cdot\mid\tilde{\mathbf y})$. Lemmas~\ref{lemma:barber} and~\ref{lemma:sign-odds}, applied with $T=S_j$, give
    \begin{align}\label{eq:ratio}
        \frac{\mathbb P\!\left(\sign(z_j)=S_j\mid\mathcal H_j\right)}
        {\mathbb P\!\left(\sign(z_j)=-S_j\mid\mathcal H_j\right)}
        =
        \frac{
            f_j\!\left(\mathbf z_{j\gets S_j\lvert z_j\rvert}\mid\tilde{\mathbf y}\right)
        }{
            f_j\!\left(\mathbf z_{j\gets-S_j\lvert z_j\rvert}\mid\tilde{\mathbf y}\right)
        }.
    \end{align}
    Lemma~\ref{lemma:rgkf-density-ratio} therefore yields
    \begin{align}
        \mathbb P\!\left(\sign(z_j)=S_j\mid\mathcal H_j\right)
        \leq\frac12,
    \end{align}
    nevernever $\sign(\beta_j)\neq S_j$.
    Let
    \[
        \mathcal G_j
        \equiv\sigma\!\left(
            \lvert\mathbf W\rvert,
            \sign(\mathbf W_{-j}),
            \tilde{\mathbf y},
            \mathbf S
        \right).
    \]
    Lemma~\ref{lemma:barber} and \eqref{eq:coordinate_orthogonality} show that the variables defining $\mathcal G_j$ are contained in $\mathcal H_j$. In particular, for an active coordinate $k\neq j$,
    \[
        \sign(W_k)
        =S_k\sign\!\left((\mathbf X_k-\tilde{\mathbf X}_k)^\top\mathbf y\right);
    \]
    prevouhile an inactive coordinate has $W_k=0$. Thus $\mathcal G_j\subseteq\mathcal H_j$.

    On the event $\{\sign(\beta_j)\neq S_j\}$, we have $W_j>0$ if and only if $\sign(z_j)=S_j$. 
    This event is determined by $\mathcal G_j$, so the tower property on this event gives
    \begin{align}
        \mathbb P(W_j>0\mid\mathcal G_j)
        &=\mathbb E\!\left[
            \mathbb P\!\left(\sign(z_j)=S_j\mid\mathcal H_j\right)
            \mathrel{\Big|}\mathcal G_j
        \right]
        \leq\frac12.
        \label{eq:almost_independent_full}
    \end{align}
    In other words, conditioned on $\mathbf{\tilde y}$ and $\mathbf S$, $\{\sign(W_i) : \sign(\beta_i) \neq S_i, S_i = d \} = \left( \sign(W_i) \right)_{i \in \mathcal H_0^d \cap \mathcal J_d}$ satisfy Definition~\ref{def:cond_almost_independence} and are therefore \textit{almost conditionally independent}, where $\mathcal J_d \equiv \{j \in [p]: S_j = d \}$.
    The remaining steps of the response-guided knockoff filter applies SSS+ to the statistics $\mathbf W_{\mathcal J_d}$.
    It follows from Theorem~1 of \cite{li2022searching} that applying SSS+ controls the FDR with respect to $\mathcal H_0^d$, conditioned on $\mathbf{\tilde y}$ and $\mathbf S$.
    However, by definition, FDR with respect to $\mathcal H_0^d$ is equivalent to FDR\textsubscript{dir} control and the result follows.
\end{proof}

\subsection{Proof of Theorem~\ref{theorem:fdx_fwer_control}}

\begin{proof}[Proof of Theorem~\ref{theorem:fdx_fwer_control}]
    We first show FDX control.
    The proof of Theorem~1 in \cite{luo:competition} can be modified under the assumption of almost conditional independence (Definition~\ref{def:cond_almost_independence}). Specifically
    \begin{itemize}
        \item the last equality of Equation (9) is replaced with $\leq$, since $N_j \overset{d}{\geq} Bin(j - A_j, 1/2)$, and
        \item On page 12, $X \overset{d}{\geq} Bin(m - A^t_m - A^d_m, 1/2)$.
    \end{itemize}
    The rest of the proof follows verbatim; hence FDP-SD controls the FDX at $(\alpha, \gamma)$ under almost conditional independence.
    The proof of Theorem~\ref{theorem:fdr_control} in the previous section applies verbatim with the modification of replacing SSS+ with FDP-SD.

    We next show $k$-FWER control.
    Similarly, the proof of Theorem 3.1 in \cite{janson:familywise} can also be modified under the assumption of almost conditional independence (Definition~\ref{def:cond_almost_independence}).
    Specifically in the result of Lemma 3.1 which Theorem 3.1 relies on, it is clear that $V$, under the almost conditional independence assumption, is stochastically dominated by $V$, under the conditional independence assumption in Lemma 2.1.
    The rest of the proof follows verbatim; hence the $k$-FWER procedure (denoted as JS) controls the $k$-FWER at level $\alpha$ under almost conditional independence.
    Again, the proof of Theorem~\ref{theorem:fdr_control} in the previous section applies verbatim with the modification of replacing SSS+ with JS.
\end{proof}

\subsection{Closed form solution for equi-correlated response-guided knockoffs}

The equi-correlated response-guided knockoffs are constructed by solving the following optimisation problem:
\begin{align}\label{d31}
    s^* = \max \left \{ s \geq 0 \mathrel{\Big|} \mathbf G(s) \succeq 0 \right \}, \quad \text{where}
\quad
\mathbf G(s)
\equiv
\begin{bmatrix}
\mathbf\Sigma^{(e)} & \mathbf\Sigma^{(e)}- s \mathbf I(\hat{\mathcal S}_0)\\
\mathbf\Sigma^{(e)}- s \mathbf I(\hat{\mathcal S}_0) & \mathbf\Sigma^{(e)}
\end{bmatrix},
\end{align}
where $\mathbf \Sigma^{(e)}$ is defined in Condition \ref{ass:correct} and $\mathbf I(\hat{\mathcal S}_0)$ denotes the diagonal matrix with ones on the diagonal entries corresponding to $\hat{\mathcal S}_0$ and zeros elsewhere. Denote $\mathbf A_{BC}$ as the submatrix of matrix $\mathbf A$ with rows indexed by $B$ and columns indexed by $C$.

\begin{Theorem}
    The closed-form solution to \eqref{d31} is given by
    \begin{align*}
        s^* = 2 \lambda_{min} \left(\mathbf\Sigma^{(e)}_{\hat{\mathcal S}_0 \hat{\mathcal S}_0} - \mathbf\Sigma^{(e)}_{\hat{\mathcal S}_0 \hat{\mathcal S}_0^c} \left( \mathbf\Sigma^{(e)}_{\hat{\mathcal S}_0^c\hat{\mathcal S}_0^c} \right)^{-1} \mathbf\Sigma^{(e)}_{\hat{\mathcal S}_0^c \hat{\mathcal S}_0} \right) \wedge 1,
    \end{align*}
     where $\hat{\mathcal S}_0^c \equiv [p + 2] \setminus \hat{\mathcal S}_0$ corresponds to the indices of features not selected in the initial screening step plus the intercept and the perturbed response variable.
\end{Theorem}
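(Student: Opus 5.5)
We will reduce positive semidefiniteness of $\mathbf G(s)$ to a condition on a single $|\hat{\mathcal S}_0|\times|\hat{\mathcal S}_0|$ matrix. Write $\mathbf D\equiv s\mathbf I(\hat{\mathcal S}_0)$. As noted after Condition~\ref{ass:gram}, the block matrix $\mathbf G(s)$ is positive semidefinite if and only if $2\mathbf\Sigma^{(e)}-\mathbf D\succeq 0$ and $\mathbf D\succeq 0$. To see this, conjugate by the orthogonal matrix $\frac{1}{\sqrt2}\begin{bmatrix}\mathbf I&\mathbf I\\ \mathbf I&-\mathbf I\end{bmatrix}$. This turns $\mathbf G(s)$ into $\operatorname{diag}\{2\mathbf\Sigma^{(e)}-\mathbf D,\ \mathbf D\}$. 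Since $s\ge 0$ gives $\mathbf D\succeq0$ automatically, the feasible set in \eqref{d31} is $\{s\ge0: 2\mathbf\Sigma^{(e)}-s\mathbf I(\hat{\mathcal S}_0)\succeq0\}$, intersected with $s\le1$ because separation parameters lie in $[0,1]$. This constraint produces the ``$\wedge 1$'' in the claimed formula.

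Next, order the indices so that $\hat{\mathcal S}_0$ comes first, and write $\mathbf\Sigma^{(e)}$ in blocks indexed by $\hat{\mathcal S}_0$ and $\hat{\mathcal S}_0^c$. Then
\[
2\mathbf\Sigma^{(e)}-s\mathbf I(\hat{\mathcal S}_0)=\begin{bmatrix}2\mathbf\Sigma^{(e)}_{\hat{\mathcal S}_0\hat{\mathcal S}_0}-s\mathbf I & 2\mathbf\Sigma^{(e)}_{\hat{\mathcal S}_0\hat{\mathcal S}_0^c}\\ 2\mathbf\Sigma^{(e)}_{\hat{\mathcal S}_0^c\hat{\mathcal S}_0} & 2\mathbf\Sigma^{(e)}_{\hat{\mathcal S}_0^c\hat{\mathcal S}_0^c}\end{bmatrix}.
\]
The paper assumes $\mathbf\Sigma^{(e)}\succ0$ in Section~\ref{sec:choose-s}, so the lower-right block $2\mathbf\Sigma^{(e)}_{\hat{\mathcal S}_0^c\hat{\mathcal S}_0^c}$ is positive definite. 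The Schur complement criterion then says the whole matrix is positive semidefinite if and only if
\[
2\mathbf\Sigma^{(e)}_{\hat{\mathcal S}_0\hat{\mathcal S}_0}-s\mathbf I-4\mathbf\Sigma^{(e)}_{\hat{\mathcal S}_0\hat{\mathcal S}_0^c}\bigl(2\mathbf\Sigma^{(e)}_{\hat{\mathcal S}_0^c\hat{\mathcal S}_0^c}\bigr)^{-1}\mathbf\Sigma^{(e)}_{\hat{\mathcal S}_0^c\hat{\mathcal S}_0}\succeq0 .
\]
Writing $\mathbf M$ for the Schur complement $\mathbf\Sigma^{(e)}_{\hat{\mathcal S}_0\hat{\mathcal S}_0}-\mathbf\Sigma^{(e)}_{\hat{\mathcal S}_0\hat{\mathcal S}_0^c}(\mathbf\Sigma^{(e)}_{\hat{\mathcal S}_0^c\hat{\mathcal S}_0^c})^{-1}\mathbf\Sigma^{(e)}_{\hat{\mathcal S}_0^c\hat{\mathcal S}_0}$, this condition is $2\mathbf M-s\mathbf I\succeq0$, i.e. $s\le 2\lambda_{\min}(\mathbf M)$.

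Combining the two steps, the feasible set is the interval $[0,\,2\lambda_{\min}(\mathbf M)\wedge1]$. This interval is nonempty because $\mathbf M\succ0$, being a Schur complement of a positive definite matrix. Its maximum gives the stated $s^*$.

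The only delicate point is the first step. One must justify the reduction $\mathbf G(s)\succeq0\iff 2\mathbf\Sigma^{(e)}-\mathbf D\succeq0$, together with $\mathbf D\succeq0$, when $\mathbf D$ has zeros on $\hat{\mathcal S}_0^c$ and on the two augmented coordinates. The orthogonal-conjugation argument handles this cleanly, since it imposes no requirement that $\mathbf D$ be invertible. The rest is the standard Schur complement characterisation, for which the invertibility of $\mathbf\Sigma^{(e)}_{\hat{\mathcal S}_0^c\hat{\mathcal S}_0^c}$ follows from the full-rank assumption on $\mathbf X^{(e)}$.
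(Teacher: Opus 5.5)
Your proof is correct and follows the same route as the paper. You reduce $\mathbf G(s)\succeq0$ to $2\mathbf\Sigma^{(e)}-s\mathbf I(\hat{\mathcal S}_0)\succeq0$, take the Schur complement over the $\hat{\mathcal S}_0^c$ block (invertible since $\mathbf\Sigma^{(e)}\succ0$), and read off $s\le2\lambda_{\min}(\mathbf M)$; your orthogonal-conjugation proof of the first equivalence and your remark that the ``$\wedge 1$'' comes from the constraint $s\le1$ are tightenings rather than a different approach (that constraint is absent from \eqref{d31} as written and only implicit in the paper's appeal to the ordinary equi-correlated closed form).
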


\begin{proof}
    As previously established \citep{barber:controlling,gimenez:improving}, the following equivalence follows from Schur complements
    \begin{align}
        \label{eq:schur}
        \mathbf G(s) \succeq 0 &\iff 2 \mathbf \Sigma^{(e)} - s \mathbf I(\hat{\mathcal S}_0) \succeq 0 .
    \end{align}
    Assume without loss of generality that $\hat{\mathcal S}_0$ corresponds to the first $\lvert \hat{\mathcal S}_0 \rvert$ features in $\mathbf X^{(e)}$. 
    Then we can write $\mathbf \Sigma^{(e)}$ in the following block matrix form
    \begin{align}\label{structure}
        \mathbf \Sigma^{(e)} = \begin{bmatrix}
        \mathbf \Sigma^{(e)}_{\hat{\mathcal S}_0 \hat{\mathcal S}_0} & \mathbf \Sigma^{(e)}_{\hat{\mathcal S}_0 \hat{\mathcal S}_0^c}\\
        \mathbf \Sigma^{(e)}_{\hat{\mathcal S}_0^c \hat{\mathcal S}_0} & \mathbf \Sigma^{(e)}_{\hat{\mathcal S}_0^c \hat{\mathcal S}_0^c}\\
        \end{bmatrix}.
    \end{align}
    Since $\mathbf \Sigma^{(e)}_{\hat{\mathcal S}_0^c \hat{\mathcal S}_0^c} \succ 0$ as a submatrix of $\mathbf \Sigma^{(e)} \succ 0$, by Schur complements, we have the following
    \begin{align}\label{feas}
       2\mathbf \Sigma^{(e)} - s \mathbf I(\hat{\mathcal S}_0) \succeq 0 &\iff \begin{bmatrix}
        2\mathbf \Sigma^{(e)}_{\hat{\mathcal S}_0 \hat{\mathcal S}_0} - s \mathbf I & 2 \mathbf \Sigma^{(e)}_{\hat{\mathcal S}_0 \hat{\mathcal S}_0^c}\\
        2 \mathbf \Sigma^{(e)}_{\hat{\mathcal S}_0^c \hat{\mathcal S}_0} & 2 \mathbf \Sigma^{(e)}_{\hat{\mathcal S}_0^c \hat{\mathcal S}_0^c}
        \end{bmatrix}
        \succeq 0 \nonumber\\
        &\iff 2\mathbf \Sigma^{(e)}_{\hat{\mathcal S}_0 \hat{\mathcal S}_0} - s\mathbf I - 2\mathbf \Sigma^{(e)}_{\hat{\mathcal S}_0 \hat{\mathcal S}_0^c} \left( \mathbf \Sigma^{(e)}_{\hat{\mathcal S}_0^c \hat{\mathcal S}_0^c}\right)^{-1} \mathbf \Sigma^{(e)}_{\hat{\mathcal S}_0^c \hat{\mathcal S}_0} \succeq 0 \nonumber\\
        &\iff 2\left(\mathbf \Sigma^{(e)}_{\hat{\mathcal S}_0 \hat{\mathcal S}_0} - \mathbf \Sigma^{(e)}_{\hat{\mathcal S}_0 \hat{\mathcal S}_0^c} \left( \mathbf \Sigma^{(e)}_{\hat{\mathcal S}_0^c \hat{\mathcal S}_0^c}\right)^{-1} \mathbf \Sigma^{(e)}_{\hat{\mathcal S}_0^c \hat{\mathcal S}_0} \right) - s\mathbf I \succeq 0,
    \end{align}
    where $\mathbf I \in \mathbb R^{\lvert \hat{\mathcal S}_0 \rvert \times \lvert \hat{\mathcal S}_0 \rvert}$ denotes the identity matrix.
    Hence, the optimisation problem can be rephrased as
     \begin{align}
            s^* = \max \left \{ s \geq 0 \mathrel{\Big|} 2\left(\mathbf \Sigma^{(e)}_{\hat{\mathcal S}_0 \hat{\mathcal S}_0} - \mathbf \Sigma^{(e)}_{\hat{\mathcal S}_0 \hat{\mathcal S}_0^c} \left( \mathbf \Sigma^{(e)}_{\hat{\mathcal S}_0^c \hat{\mathcal S}_0^c}\right)^{-1} \mathbf \Sigma^{(e)}_{\hat{\mathcal S}_0^c \hat{\mathcal S}_0} \right) - s\mathbf I \succeq 0 \right \}.
        \end{align}
    From \eqref{eq:schur}, the above optimisation problem takes the form of the ordinary equi optimisation problem (\ref{d31}), where we have $\mathbf \Sigma^{(e)}_{\hat{\mathcal S}_0 \hat{\mathcal S}_0} - \mathbf \Sigma^{(e)}_{\hat{\mathcal S}_0 \hat{\mathcal S}_0^c} \left( \mathbf \Sigma^{(e)}_{\hat{\mathcal S}_0^c \hat{\mathcal S}_0^c}\right)^{-1} \mathbf \Sigma^{(e)}_{\hat{\mathcal S}_0^c \hat{\mathcal S}_0}$ in replace of $\mathbf \Sigma^{(e)}$. The closed-form solution of the ordinary equi optimisation problem is $s^* = 2\lambda_{min} \left( \mathbf \Sigma^{(e)} \right) \wedge 1$ \citep{barber:controlling} and so replacing $ \mathbf \Sigma^{(e)} $ for $\mathbf \Sigma^{(e)}_{\hat{\mathcal S}_0 \hat{\mathcal S}_0} - \mathbf \Sigma^{(e)}_{\hat{\mathcal S}_0 \hat{\mathcal S}_0^c} \left( \mathbf \Sigma^{(e)}_{\hat{\mathcal S}_0^c \hat{\mathcal S}_0^c}\right)^{-1} \mathbf \Sigma^{(e)}_{\hat{\mathcal S}_0^c \hat{\mathcal S}_0}$, we have the following solution to \eqref{d31}
    \begin{align*}
        s^* = 2 \lambda_{min} \left(\mathbf\Sigma^{(e)}_{\hat{\mathcal S}_0 \hat{\mathcal S}_0} - \mathbf\Sigma^{(e)}_{\hat{\mathcal S}_0 \hat{\mathcal S}_0^c} \left( \mathbf\Sigma^{(e)}_{\hat{\mathcal S}_0^c\hat{\mathcal S}_0^c} \right)^{-1} \mathbf\Sigma^{(e)}_{\hat{\mathcal S}_0^c \hat{\mathcal S}_0} \right) \wedge 1.
    \end{align*}
\end{proof}

\clearpage
\appendix
\renewcommand{\thesection}{C}
\renewcommand{\thealgorithm}{C\arabic{algorithm}}
\renewcommand{\thetable}{C\arabic{table}}
\renewcommand{\thefigure}{C\arabic{figure}}
\renewcommand{\theLemma}{C\arabic{Lemma}}
\renewcommand{\theCondition}{C\arabic{Condition}}
\setcounter{figure}{0}
\setcounter{table}{0}

\section{Further analyses and method details}

\subsection{Data-splitting is impractical for response-guided knockoff filter}
\label{supp:barber}

\cite{barber2019knockoff} extend the fixed-X knockoff filter to the high-dimensional setting via a data-splitting strategy into two parts: (1) the \emph{screening} step which uses the first $n_0$ observations to select a subset of features $\mathcal S_0 \subseteq [p]$, and (2) the \emph{selection} step which applies the standard knockoff filter on the reduced design matrix using the $\lvert \mathcal S_0 \rvert$ features with the remaining $n_1 \equiv n - n_0$ observations.
The selection step does not account for the fact that the screening step may have thrown out relevant features, and therefore the resulting knockoff filter does not control the directional FDR in finite sample unless: (a) we condition on the event that all relevant features have been selected or (b) we model $\mathbf X$ as multivariate Gaussian.

Option (a) can be achieved with a trivial modification which ensures that all features at the selection step are considered.
The screening step is then used to inform the knockoff construction, similar to our response-guided knockoff filter.
For instance, $\mathcal S_0$ may be obtained via regularization (e.g. Lasso~\citep{tibshirani:regression}), or our approach in Section~\ref{sec:choose-s} which is then subsequently used to construct powerful knockoffs $\tilde{\mathbf X}_j$ for $j \in \mathcal S_0$, with trivial knockoffs $\tilde{\mathbf X}_j = \mathbf X_j$ for $j \in \mathcal S_0^c$.
This approach clearly incurs a greater cost: using $n_0$ observations to guide the knockoff construction means that the remaining observations need to be at least $n_1 \geq p + \lvert \mathcal S_0 \rvert$ so that the ambient space $\mathbb R^{n_1}$ is large enough to accommodate the $p$ original features and the $\lvert \mathcal S_0 \rvert$ nontrivial knockoffs.
Assuming $n$ is even, setting $n_0 \equiv n_1 \equiv n/2$, this approach implies that $n \geq 2 \cdot (p + \lvert \mathcal S_0 \rvert)$, which is far from ideal if the user wants to be confident that $\lvert \mathcal S_0 \rvert$ contains many relevant features and therefore $\lvert \mathcal S_0 \rvert$ may be large.
In contrast, the response-guided knockoff filter may be used even when $p + 2 < n < 2 p$, which is not only advantageous over this approach, but advantageous over the original knockoff filter of \cite{barber:controlling}.

\subsection{Simulation details}
\label{appendix:sim_details}

We follow the fixed-X simulation setup of \cite{spector2022powerful}.
In each replicate, a design matrix $\mathbf X_{\text{raw}} \in \mathbb{R}^{n \times p}$ is generated with rows drawn i.i.d.\ from $\mathcal{N}(\mathbf{0}, \boldsymbol\Sigma)$ and the response is generated as $\mathbf{y} = \mathbf X_{\text{raw}} \boldsymbol\beta + \boldsymbol\varepsilon$, $\boldsymbol\varepsilon \sim \mathcal{N}(\mathbf{0}, \mathbf{I}_n)$, using the unnormalised design matrix.
Each column of $\mathbf X_{\text{raw}}$ is then normalised to unit $\ell_2$ norm to obtain $\mathbf X$, which is used for the knockoff construction.
The $k$ nonzero entries of $\boldsymbol\beta$ are drawn independently from $\text{Unif}([\delta/2,\, \delta])$ and are assigned i.i.d.\ random signs with equal probability $1/2$.
Following \cite{spector2022powerful}, we set $\delta = 0.45$ for the two \texttt{AR1} types, $\delta = 0.15$ for the two \texttt{Equi} types, and $\delta = 0.25$ and $\delta = 0.5$ for the ER (Prec) and ER (Cov) types, respectively.
Except for \texttt{AR1 (Corr)} and \texttt{Block Equi}, the $k$ nonzero coefficients are placed uniformly at random among the $p$ features; the signal locations for these two exceptions are described below.
Each replicate is an independent draw of the data-generating model: the design $\mathbf X$, the coefficients $\boldsymbol\beta$, and the response $\mathbf y$ are sampled afresh, and for the types in which $\boldsymbol\Sigma$ is randomly generated, the covariance $\boldsymbol\Sigma$ is redrawn for each replicate as well. 
We now describe the six covariance structures $\boldsymbol\Sigma$ used in the simulations.

\begin{itemize}
    \item \textbf{\texttt{AR1}.} $\boldsymbol\Sigma$ is the covariance matrix of an AR(1) Gaussian process with heterogeneous successive correlations. Specifically, the correlation between adjacent features is $\Sigma_{j,j+1} = \rho_j$, where $\rho_1, \ldots, \rho_{p-1} \overset{\text{i.i.d.}}{\sim} \text{Beta}(3, 1)$, and the correlation between features $j$ and $k$ is $\Sigma_{jk} = \prod_{l=\min(j,k)+1}^{\max(j,k)} \rho_l$.

    \item \textbf{\texttt{AR1 (Corr)}.} The covariance matrix is generated identically to \texttt{AR1}, but the $k$ nonzero coefficients are placed in a single consecutive block, starting at a uniformly chosen position. This mimics settings such as genetic studies in which causal variants tend to cluster along the genome.

    \item \textbf{\texttt{Block Equi}.} $\boldsymbol\Sigma$ is block-equicorrelated: the $p$ features are partitioned into non-overlapping blocks of size 5, with within-block correlation $\rho = 0.5$ and zero between-block correlation.
    Formally, $\Sigma_{jk} = 0.5 \cdot \mathbf{1}\{\lfloor (j-1)/5 \rfloor = \lfloor (k-1)/5 \rfloor\}$ for $j \neq k$.
    The $k$ nonzero coefficients are block-clustered: $k/5$ blocks are selected uniformly without replacement, and all five features in each selected block are assigned nonzero coefficients.

    \item \textbf{\texttt{ER (Cov)}.} $\boldsymbol\Sigma$ is a sparse covariance matrix generated via an Erd\H{o}s--R\'{e}nyi procedure: each off-diagonal entry of a candidate matrix $\mathbf{V}$ is independently set to zero with probability $0.8$, and otherwise drawn from $\text{Unif}(0.1, 1)$ with a random $\pm 1$ sign. $\mathbf{V}$ is then symmetrised, shifted to be positive definite, and rescaled to a correlation matrix to obtain $\boldsymbol\Sigma$.

    \item \textbf{\texttt{ER (Prec)}.} The \emph{precision} (inverse covariance) matrix $\boldsymbol\Omega = \boldsymbol\Sigma^{-1}$ is sparse and generated by the same Erd\H{o}s--R\'{e}nyi procedure as \texttt{ER (Cov)}, and $\boldsymbol\Sigma$ is then recovered as $\boldsymbol\Omega^{-1}$.

    \item \textbf{\texttt{Equi}.} $\boldsymbol\Sigma$ is fully equicorrelated: $\Sigma_{jk} = \rho$ for all $j \neq k$, where $\rho = 0.5$. This is a special case of the block-equicorrelated design with a single block of size $p$.
\end{itemize}

\subsection{Naive approaches simulation details}
\label{appendix:naive_sim_details}

We describe the simulation used to produce Figure~\ref{fig:naive_approaches}.
Throughout, we set $n = 2000$, $p = 500$, $k = 150$ nonzero coefficients, target level $\alpha = 0.1$, target sign direction $d = +1$, and run $100$ independent replicates.

Two covariance structures are considered:
(i) the Spector--Janson \texttt{AR1} structure described in Section~\ref{appendix:sim_details};
(ii) an \texttt{Independent} design with $\boldsymbol\Sigma = \mathbf I_p$.
In both cases, each nonzero coefficient has magnitude drawn independently from $\text{Unif}(0.5, 1)$.
Unlike the main simulations, both the design matrix $\mathbf X$ and the coefficient vector $\boldsymbol\beta$ are drawn once and held fixed across all replicates; only the noise $\boldsymbol\varepsilon \sim \mathcal N(\mathbf 0, \mathbf I_n)$ varies.
As in Section~\ref{appendix:sim_details}, the response is generated from the unnormalised design matrix and each column is then normalised to unit $\ell_2$ norm for the knockoff construction.
Two sign configurations are considered for each covariance structure:
(i) \emph{Mostly positive}: $90\%$ positive, $10\%$ negative;
(ii) \emph{Mostly negative}: $90\%$ negative, $10\%$ positive.
This gives four scenarios in total.

For the unmodified knockoff filter, post-hoc sign screening, and pre-screening methods, equi-correlated fixed-$X$ knockoffs are constructed using the \texttt{knockoff} R package \citep{barber:controlling}.
Method (d), the signed-knockoff filter, and method (e), the response-guided knockoff filter, uses the same R implementation as described in the main text with the ME knockoff construction.

Figure~\ref{fig:naive_approaches} reports the empirical directional FDR of each naive strategy alongside the signed-knockoff filter (denoted \texttt{S(ME)}, Section~\ref{section:signed-knockoff-filter}) and the response-guided knockoff filter (denoted \texttt{RG(signed, ME)}, Section~\ref{section:response-guided_kf}).
All three naive strategies substantially exceed the nominal level $\alpha$ in at least one scenario, while \texttt{S(ME)} and \texttt{RG(signed, ME)} consistently control the directional FDR at or below the target level.

\begin{figure}[ht!]
    \centering
    \includegraphics[width=6in]{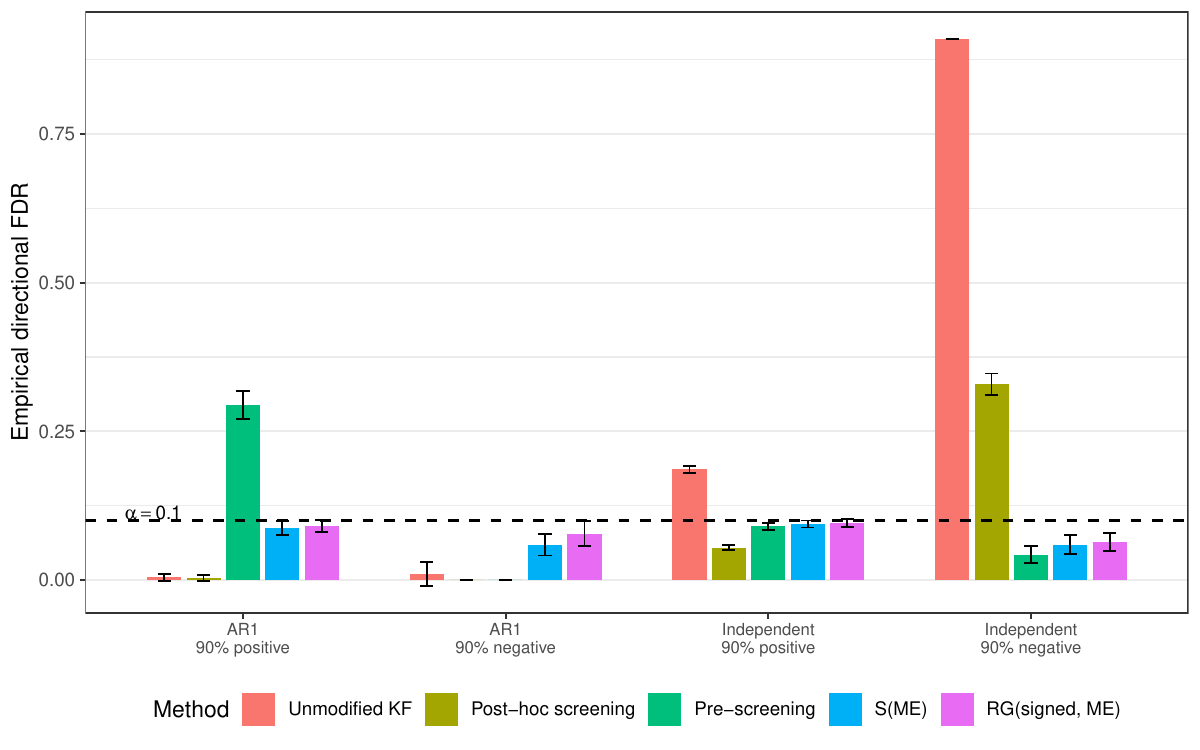}
    \caption{\textbf{Empirical directional FDR of three naive approaches compared with the signed-knockoff filter and response-guided knockoff filter.}
    Each group of bars corresponds to a different combination of covariance structure (AR1 or independent) and sign configuration ($n = 2000$, $p = 500$, $k = 150$, $\alpha = 0.1$); see Appendix~\ref{appendix:naive_sim_details} for details.
    The dashed line marks the nominal FDR threshold $\alpha$.
    All three naive strategies exceed $\alpha$ in at least one scenario, while the signed-knockoff filter (Section \ref{section:signed-knockoff-filter}) and the response-guided knockoff filter (Section \ref{section:response-guided_kf}) consistently control the directional FDR.}
    \label{fig:naive_approaches}
\end{figure}

\subsection{Per-experiment plots}
\label{appendix:power_plots}

Figures~\ref{fig:spector_simulation_power}, \ref{fig:spector_simulation_2_power}, and~\ref{fig:hiv_power} present the detailed power curves underlying the relative-power summary of Figure~\ref{fig:tpr_summary}, for Simulation~1, Simulation~2, and the HIV drug-resistance data, respectively.
In Figures~\ref{fig:spector_simulation_power} and \ref{fig:spector_simulation_2_power}, the power of each method is further averaged over the number of obseravtions $n$ and the number of nonzero coefficients $k$.

Figures~\ref{fig:spector_simulation_fdr} and~\ref{fig:spector_simulation_2_fdr} present the complementary empirical FDR plots for Simulations~1 and~2, respectively.
Again, in Figures~\ref{fig:spector_simulation_power} and \ref{fig:spector_simulation_2_power}, the empirical FDR of each method is averaged over the number of observations $n$ and the number of nonzero coefficients $k$.
In all cases the empirical FDR remains at or below the nominal level $\alpha$, certifying that the methods control the directional FDR as claimed.

\begin{figure}[h!]
    \centering
    \begin{tabular}{l}
    \includegraphics[width=6in]{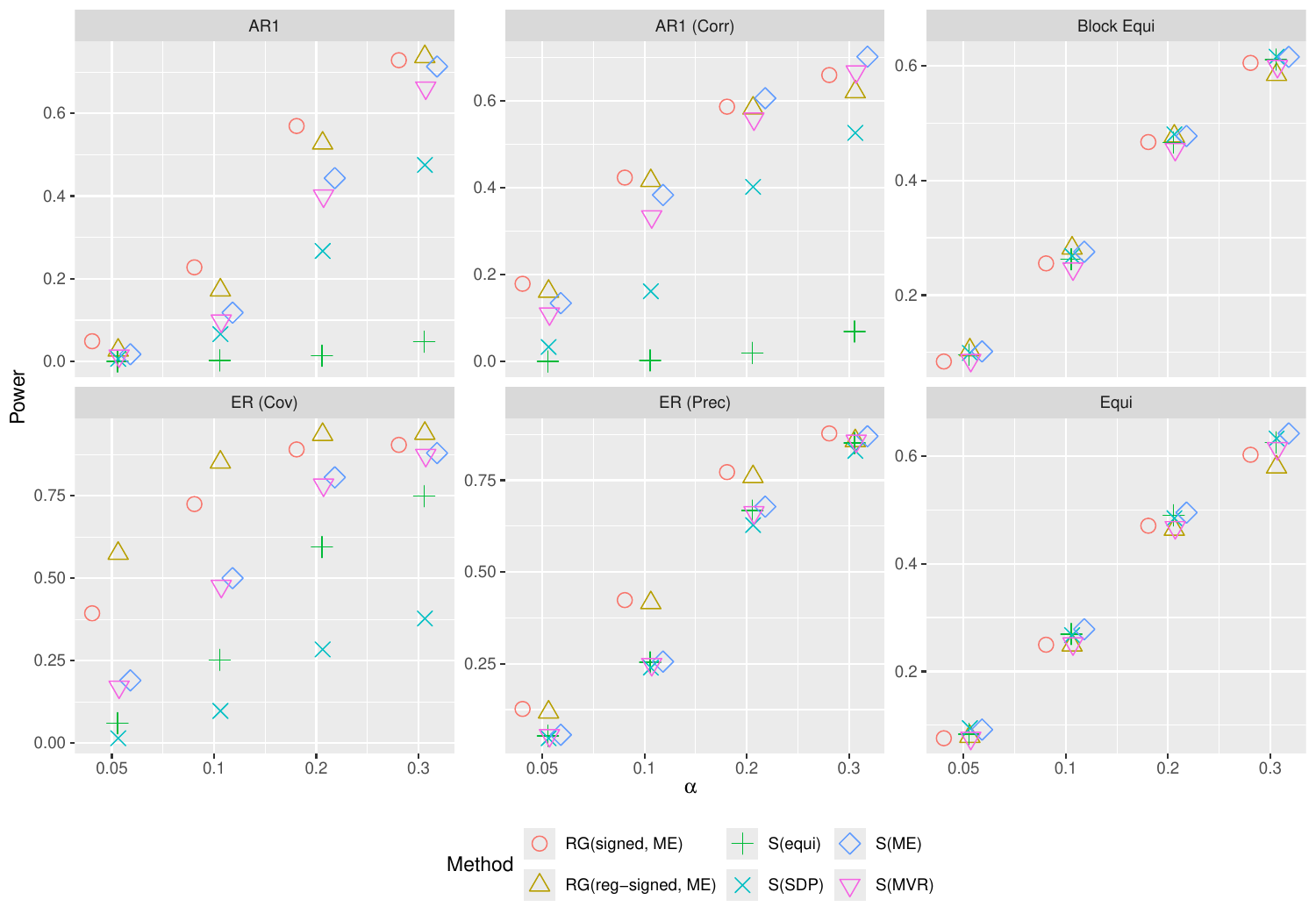}\\
    \end{tabular}
    \caption{\textbf{Comparing the power of response-guided knockoff and signed-knockoff filters on six types of linear models ($n\ge2p$).}
    Each panel corresponds to a different type of linear model as described in Appendix~\ref{appendix:sim_details}.
    The $x$-axis shows the target FDR level $\alpha$ in $\{0.05, 0.1, 0.2, 0.3\}$ and the $y$-axis shows the empirical power defined in (\ref{eq:empirical_power_fdr}), averaged over $n$ and $k$.}
    \label{fig:spector_simulation_power}
\end{figure}

\begin{figure}[h!]
    \centering
    \begin{tabular}{l}
    \includegraphics[width=6in]{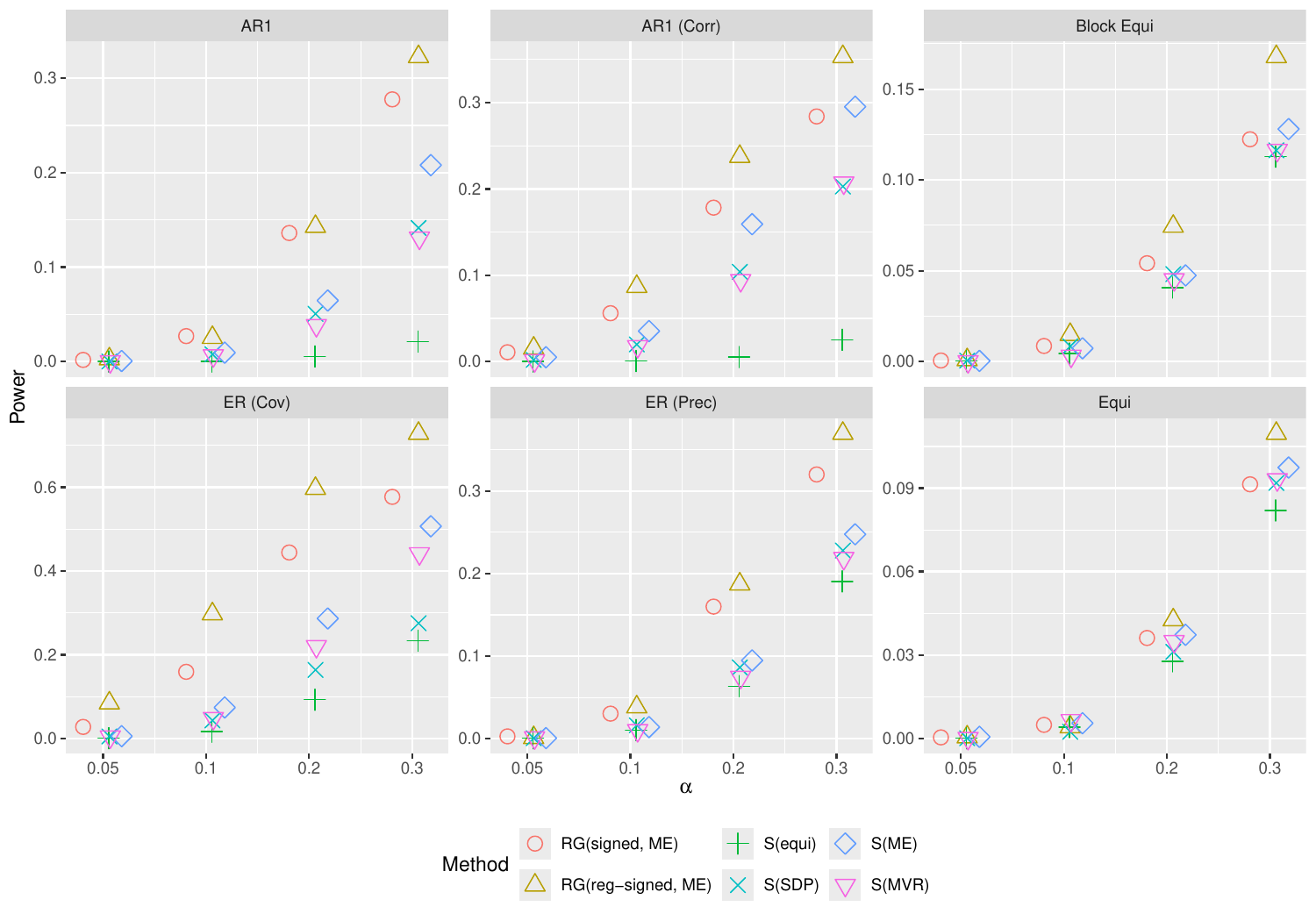}\\
    \end{tabular}
    \caption{\textbf{Comparing the power of response-guided knockoff and signed-knockoff filters on six types of linear models when $p + 2 < n < 2p$.}
    Each panel corresponds to a different type of linear model as described in Appendix~\ref{appendix:sim_details}.
    The $x$-axis shows the target FDR level $\alpha$ in $\{0.05, 0.1, 0.2, 0.3\}$ and the $y$-axis shows the empirical power defined in (\ref{eq:empirical_power_fdr}), averaged over $n$ and $k$.}
    \label{fig:spector_simulation_2_power}
\end{figure}

\begin{figure}[h!]
    \centering
    \begin{tabular}{l}
    \includegraphics[width=6in]{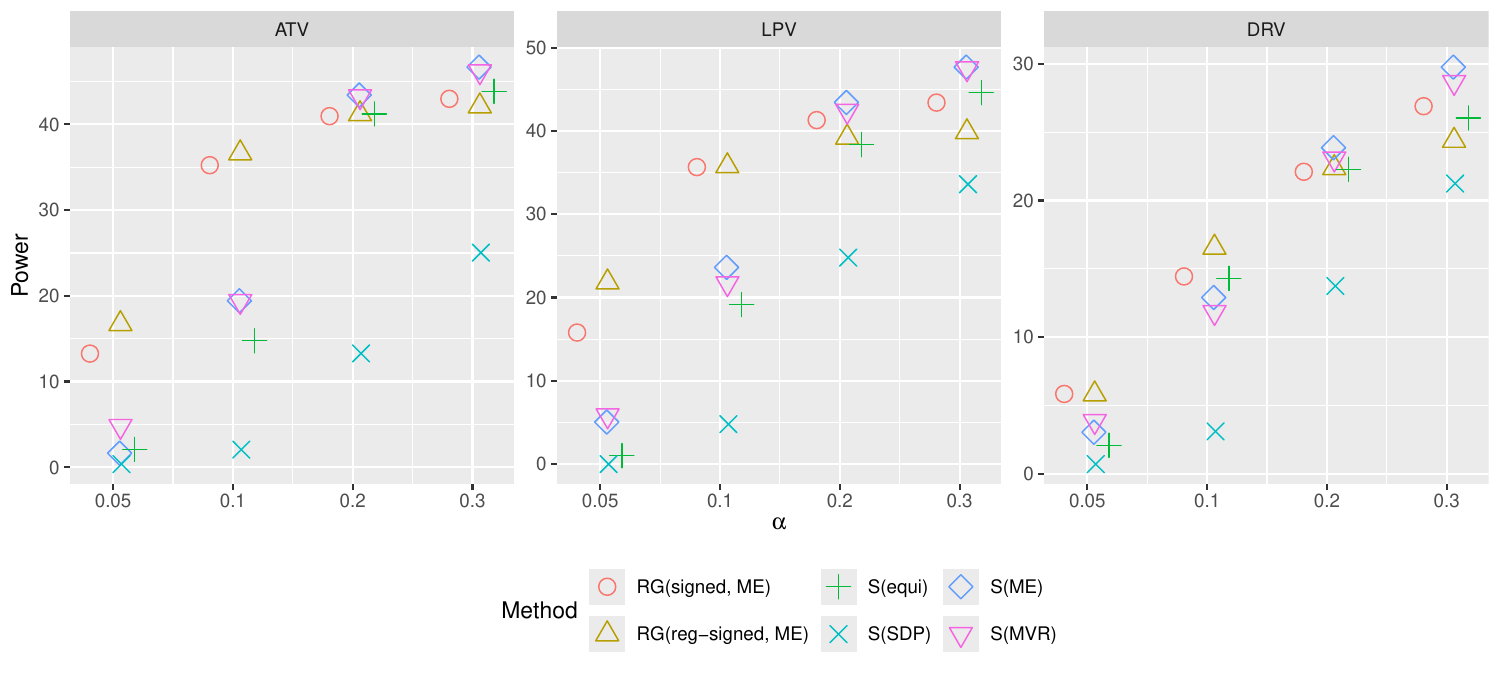}\\
    \end{tabular}
    \caption{\textbf{Comparing the power of response-guided knockoff and signed-knockoff filters on HIV drug resistance data.}
    Each panel corresponds to a different PI drug.
    The $x$-axis shows the target FDR level $\alpha$ in $\{0.05, 0.1, 0.2, 0.3\}$ and the $y$-axis shows the estimated number of true discoveries.}
    \label{fig:hiv_power}
\end{figure}

\begin{figure}[h!]
    \centering
    \begin{subfigure}{\textwidth}
        \centering
        \includegraphics[width=5.6in]{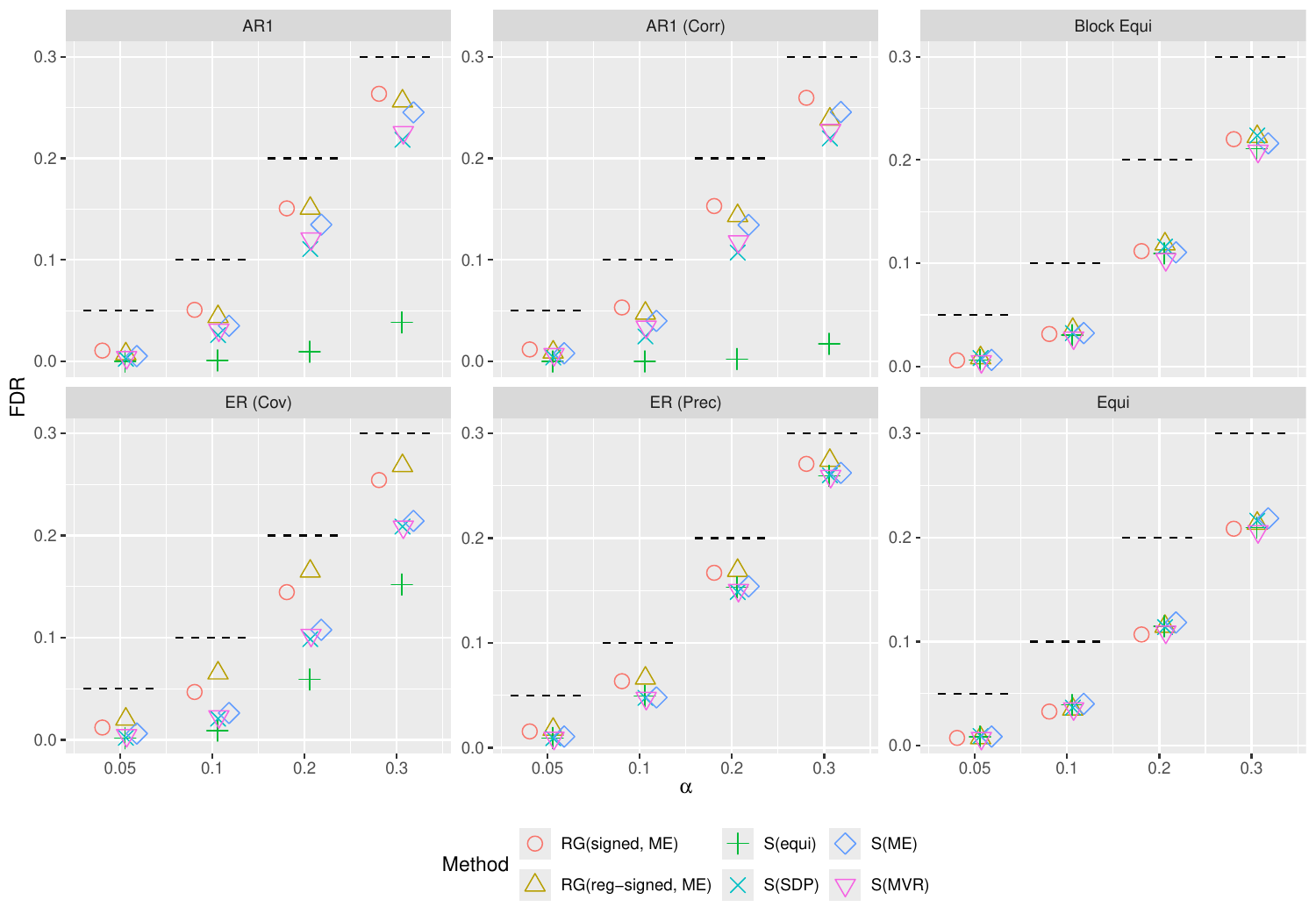}
        \caption{Simulation~1 ($n \geq 2p$).}
        \label{fig:spector_simulation_fdr}
    \end{subfigure}

    \vspace{1em}

    \begin{subfigure}{\textwidth}
        \centering
        \includegraphics[width=5.6in]{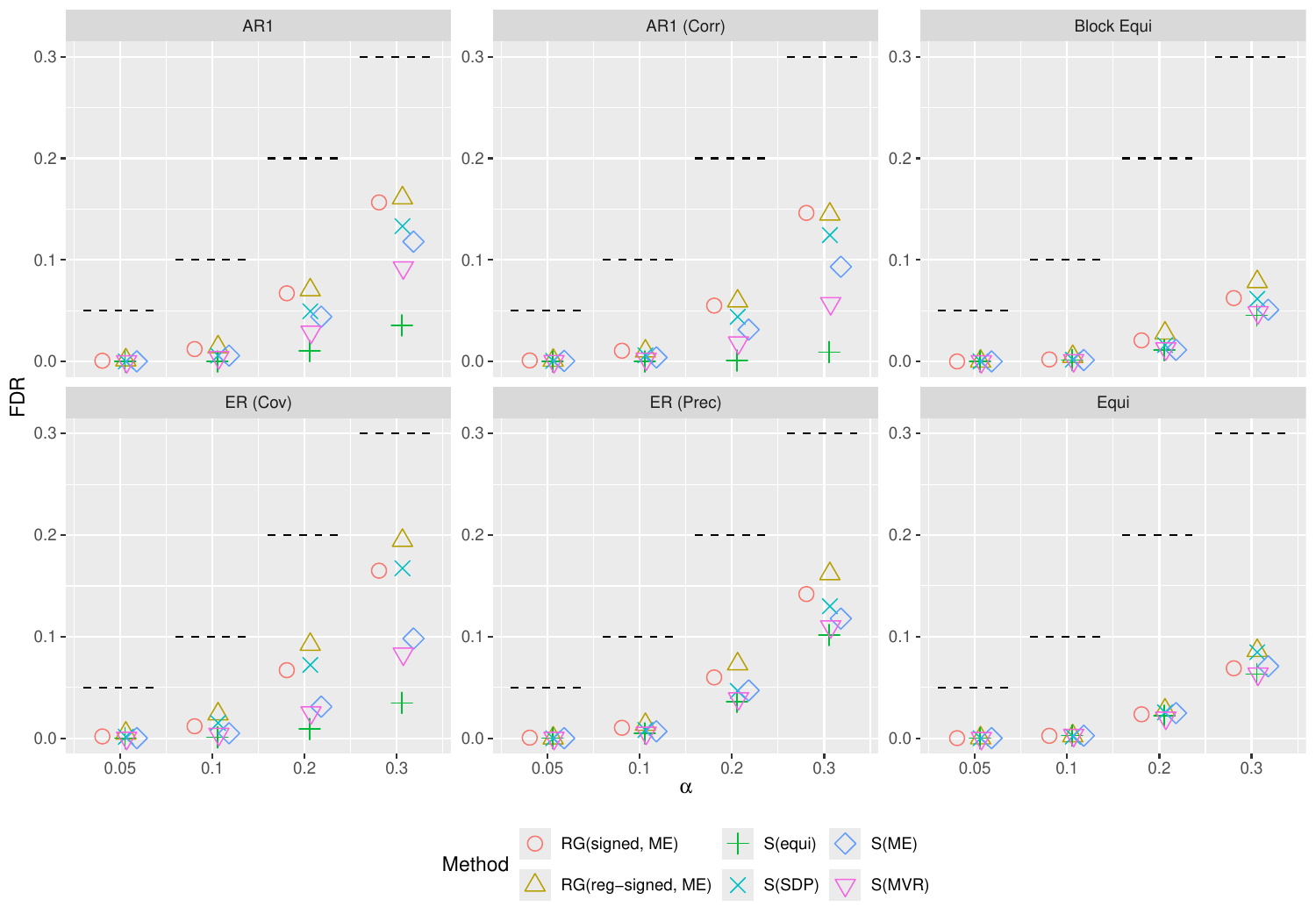}
        \caption{Simulation~2 ($p + 2 \leq n < 2p$).}
        \label{fig:spector_simulation_2_fdr}
    \end{subfigure}

    \caption{\textbf{Empirical directional FDR of response-guided knockoff and signed-knockoff filters on six types of linear models.}
    Each panel corresponds to a different type of linear model as described in Appendix~\ref{appendix:sim_details}.
    The $x$-axis shows the target FDR level $\alpha$ in $\{0.05, 0.1, 0.2, 0.3\}$ and the $y$-axis shows the empirical FDR defined in (\ref{eq:empirical_power_fdr}), averaged over $n$ and $k$.
    The dashed line marks the nominal level $\alpha$.}
    \label{fig:fdr_plots}
\end{figure}

\clearpage
\subsection{HIV drug resistance data analysis details}
\label{appendix:hiv_details}

\begin{table}[htbp]
    \centering
    \small
    \setlength{\tabcolsep}{4pt}
    \renewcommand{\arraystretch}{1.15}
    \begin{tabularx}{\linewidth}{>{\raggedright\arraybackslash}p{2.7cm} >{\raggedright\arraybackslash}X >{\raggedright\arraybackslash}X >{\raggedright\arraybackslash}X}
        \toprule
        Drug & ATV & LPV & DRV \\
        \midrule
        Ground truth &
        47A, 54A, 73A, 82A, 84A, 73C, 82C, 84C, 83D, 88D, 58E, 10F, 24F, 33F, 82F, 88G, 10I, 11I, 20I, 23I, 24I, 32I, 33I, 46I, 71I, 82I, 11L, 46L, 50L, 53L, 54L, 71L, 82L, 20M, 24M, 48M, 54M, 82M, 90M, 30N, 10R, 20R, 54S, 73S, 74S, 82S, 88S, 20T, 43T, 54T, 71T, 73T, 82T, 89T, 10V, 20V, 33V, 46V, 47V, 48V, 50V, 54V, 71V, 76V, 84V, 85V, 89V, 10Y, 53Y
        & 
        47A, 54A, 73A, 82A, 84A, 73C, 82C, 84C, 83D, 88D, 58E, 10F, 24F, 33F, 82F, 88G, 10I, 11I, 20I, 23I, 24I, 32I, 33I, 46I, 71I, 82I, 11L, 46L, 53L, 54L, 71L, 82L, 20M, 24M, 48M, 54M, 82M, 90M, 30N, 10R, 20R, 54S, 73S, 74S, 82S, 88S, 20T, 43T, 54T, 71T, 73T, 82T, 89T, 10V, 20V, 33V, 46V, 47V, 48V, 50V, 54V, 71V, 76V, 84V, 85V, 89V, 10Y, 53Y
        & 
        47A, 54A, 73A, 73C, 82C, 83D, 88D, 58E, 10F, 24F, 33F, 82F, 88G, 10I, 11I, 20I, 23I, 24I, 32I, 33I, 46I, 71I, 82I, 11L, 46L, 53L, 54L, 71L, 82L, 20M, 24M, 48M, 54M, 82M, 90M, 30N, 10R, 20R, 54S, 73S, 74S, 20T, 43T, 54T, 71T, 73T, 89T, 10V, 20V, 33V, 46V, 47V, 48V, 50V, 54V, 71V, 76V, 84V, 85V, 89V, 10Y, 53Y
        \\
        Susceptible &
        --- & 50L & 50L, 82A, 82T, 82S, 88S \\
        \bottomrule
    \end{tabularx}
    \caption{\textbf{List of annotated mutations for each PI drug.}}
    \label{table:hiv_mutations}
\end{table}

Drug resistance data was downloaded from \url{https://hivdb.stanford.edu/_wrapper/download/GenoPhenoDatasets/PI_DataSet.txt}.
The data was then pre-processed according to the tutorial at \url{https://web.stanford.edu/group/candes/knockoffs/software/knockoffs/tutorial-4-r.html} which exactly carries out the same pre-processing in \cite{barber:controlling}.
A design matrix and response vector was obtained for each of the three main protease inhibitors: atazanavir (ATV), lopinavir (LPV) and darunavir (DRV).
An approximate ground truth of mutations for each drug was obtained by first downloading the list of mutations from \url{https://hivdb.stanford.edu/dr-summary/comments/PI/}.
For each drug, we defined an approximate ground truth mutation to be any mutation in the list unless it specifically indicated that the mutation was associated with drug susceptibility instead of resistance.
This definition is the same as how \cite{ren:derandomised} define ``verified'' mutations, except that we excluded mutations associated with obvious drug susceptibility.
Table~\ref{table:hiv_mutations} summarizes our list of annotated mutations for each drug in the design matrix after pre-processing.
The number of samples $n$, number of mutations $p$ and the size of the ground truth set for each drug are summarized in Table~\ref{table:hiv_dimensions}.

\begin{table}[h]
    \centering
    \begin{tabular}{lccc}
        \hline
        Drug & ATV & LPV & DRV \\
        \hline
        $n$ (samples) & 1486 & 1788 & 993 \\
        $p$ (mutations) & 325 & 347 & 267 \\
        Ground truth size & 69 & 68 & 62 \\
        \hline
    \end{tabular}
    \caption{\textbf{Dimensions of the HIV drug resistance data for each PI drug.}}
    \label{table:hiv_dimensions}
\end{table}

\end{document}